\date{}
\DeclareMathOperator{\Tr}{Tr}
\DeclareMathOperator{\Ind}{Ind}
\def\qed{{\unskip\nobreak\hfil\penalty50
\hskip2em\hbox{}\nobreak\hfil$\square$
\parfillskip=0pt \finalhyphendemerits=0\par}\medskip}
\def\endproof{\null\hfill\qed\endtrivlist\noindent}
\def\tilde{\widetilde}
\def\Ad{{\hbox{\rm Ad}}}
\def\o{{\omega}}
\def\a{\alpha}
\def\b{\beta}
\def\e{\varepsilon}
\def\l{\lambda}
\def\r{\rho}
\def\th{\theta}
\def\Om{\Omega}
\newcommand{\ben}{\begin{equation}}
\newcommand{\een}{\end{equation}}
\def\gA{{\mathfrak A}}
\def\A{{\cal A}}
\def\F{{\cal F}}
\def\M{{\cal M}}
\def\N{{\cal N}}
\def\L{{\mathcal L}}
\def\H{{\cal H}}
\def\K{{\cal K}}
\def\S{{\cal S}}
\def\f{{\varphi}}
\def\s{{\sigma}}
\def\PSL{{{\rm PSL}(2,\mathbb R)}}
\def\S2{S^{1(2)}}
\def\fin{\f_{\rm in}}
\def\fout{\f_{\rm out}}
\newtheorem{theorem}{Theorem}[section]
\newtheorem{lemma}[theorem]{Lemma}
\newtheorem{corollary}[theorem]{Corollary}
\newtheorem{proposition}[theorem]{Proposition}
\theoremstyle{definition} 
\theoremstyle{remark} 
\def\PSL{PSU(1,1)}
\def\RR{{\mathbb R}}
\def\SL2{{{\rm SL}(2,\RR)}}
\def\PSL2{{{\rm PSL}(2,\RR)}}
\def\U1{{{\rm V}(1)}}
\def\SU2{{{\rm SV}(2)}}
\def\SU{{{\rm SU}}}
\def\A{{\mathcal A}}
\def\F{{\mathcal F}}
\def\H{{\mathcal H}}
\def\K{{\mathcal K}}
\def\M{{\mathcal M}}
\def\N{{\mathcal N}}
\title{\Huge{On Landauer's principle and  bound \\ for infinite systems
}}
\author{{\sc Roberto Longo\footnote{Supported by the ERC Advanced Grant 669240 QUEST ``Quantum Algebraic Structures and Models'', MIUR FARE R16X5RB55W  QUEST-NET, GNAMPA-INdAM and Alexander von Humboldt Foundation.}
}
\\
Dipartimento di Matematica,
Universit\`a di Roma Tor Vergata,\\
Via della Ricerca Scientifica, 1, I-00133 Roma, Italy\\
E-mail: {\tt longo@mat.uniroma2.it}
\\
}
\date{}
\begin{document} 

\maketitle

\begin{abstract}
Landauer's principle provides a link between Shannon's information entropy and Clausius' thermodynamical entropy. We set up here a
basic formula for the incremental free energy of a quantum channel, possibly relative to infinite systems, naturally arising by an Operator Algebraic point of view. By the Tomita-Takesaki modular theory, we can indeed describe a canonical evolution associated with a quantum channel state transfer. Such evolution is implemented both by a modular Hamiltonian and a physical Hamiltonian, the latter being determined by its functoriality properties. 
This allows us to make an intrinsic analysis, extending our QFT index formula, but without any a priori given dynamics;
the associated incremental free energy is related to the logarithm of the Jones index and is thus quantised. This leads to a general lower bound for the incremental free energy of an irreversible quantum channel which is half of the Landauer bound, and to further bounds corresponding to the discrete series of the Jones index. In the finite dimensional context, or in the case of DHR charges in QFT, where the dimension is a positive integer, our lower bound agrees with Landauer's bound.   

\end{abstract}

\newpage
\section{Introduction}
\label{sec:intro}
Quantum information is  an increasing lively subject. At the present time, most papers in the subject deal with finite quantum systems, i.e. multi-matrix algebras. This is justified from the quantum computation point of view inasmuch as only few qubits quantum computers have so far been built up. However, from the functional analytic and conceptual point of view, it is natural to look at infinite quantum systems too. On one hand big quantum systems, as quantum computers in perspective or quantum black holes, may be idealised as infinite systems, on the other hand operator algebras not of type $I$ reveal a rich structure that is not directly visible within the finite dimensional context and may thus both offer a new insight and provide effective tools for the analysis. In particular, looking at the subject from the Quantum Field Theory point of view is expected to lead to a new perspective within this framework. 

In this paper we shall see a dynamics that is naturally associated with a quantum channel. At the root of our investigation is our analysis 
in \cite{L97} dealing with the time evolution in black hole thermodynamics,  that also led us to a QFT index formula \cite{L01}.
The dynamics in this paper is intrinsic and originates from the modular structure associated with Connes' bimodules.  Now, this evolution is canonically implemented both by a modular Hamiltonian and by a physical Hamiltonian, the latter being characterised by its good functoriality properties with respect to the bimodule tensor categorical structure. We can compare these two Hamiltonians; in the simplest factor case, one is simply obtained by shifting the other. We then define an intrinsic incremental free energy associated with the quantum channel.  The incremental free energy turns out to be proportional to the logarithm of the Jones index; so, in particular, we obtain a lower bound which is related to the Landauer's bound, and agrees with that in the finite dimensional context. The structure associated with quantum systems with a non trivial classical part, namely non-factor von Neumann algebras, is more involved; we shall deal here with the case where the classical part is finite dimensional.

In Section \ref{Section2} we shall get results on Connes' bimodules and completely positive maps that play a prime role in our analysis and have their own interest.  

We now state some general facts that are at the basis of our investigation.

\subsection{Entropy}
We begin by recalling the basic forms of entropy that are related to our work. 

\subsubsection{Thermodinamical entropy}
The concept of entropy arose in thermodynamics and is due to Clausius. If only reversible processes take place in an isolated, homogeneous system, then the integral over a closed path of the form $\frac{dQ}{T}$ vanishes: $\oint \frac{dQ}{T} =0$. Thus $\frac{dQ}{T}$ is an exact differential form. As heat (energy) $Q$ and temperature $T$ are respectively an extensive and an intensive quantity, there must so exist an extensive state function $S$ such that $dS = \frac{dQ}{T}$. Clausius named  this new appearing quantity entropy from Greek ``transformation content''.

The second principle of thermodynamics asserts that, for an isolated system, the entropy never decreases
\[
dS \geq 0
\]
and indeed $dS = 0$ just for reversible transformations.

\subsubsection
{Information entropy} 
The theory of quantum information promotes classical information theory to the quantum world. 
One may say that a classical system is described by an abelian operator algebra $\M$, while a quantum system by a non commutative operator algebra $\M$ and one firstly aims to describe the information carried by a state of a system $\N$ when transfered to a state of a system $\M$ by a classical/quantum channel.

Let's consider finite systems first. In the classical case, the information of a state $\f$ on a system $\mathbb C^n$ ($n$-point space) with probability distribution $\{p_1,\dots p_n\}$ is measured by the Shannon entropy
\begin{equation}\label{ShE}
S(\f) = - k\sum_i p_i\log p_i \  ,
\end{equation}
where $k$ is a proportionality constant that for the moment may be put equal to 1. The point is that the probability of independent events is multiplicative, while information is additive, thus the information is to be proportional to the logarithm of the probability. So \eqref{ShE} measure the average information carried by the state $\f$. 

In the quantum case $\M$ is a matrix algebra, $\f$ is a normalised, positive linear functional on $\M$ with density matrix $\r$, and the information entropy is given by von Neumann entropy
\begin{equation}\label{vne}
S(\f) = -\Tr(\r\log \r) .
\end{equation}
If the finite system is to encode both classical and quantum information, $\M$ is an arbitrary finite dimensional $C^*$-algebra. 

A {\it quantum channel} between the two finite systems $\N$ and $\M$ is a completely positive map $\a: \N\to \M$ (possibly trace preserving, unital). 
Already within this relatively simple, finite dimensional setting one can see  
remarkable quantum structures and conceptual aspects.

Now, in the classical case, the passage from finite to infinite systems does not provide a new conceptual insight. On the other hand, we will see in this paper  new aspects that  do emerge in this passage from finite to infinite systems in the quantum case.

An infinite quantum system, possibly with a classical part too, will be described by a von Neumann algebra $\M$; the von Neumann entropy of a normal state $\f$ on $\M$ makes no sense in this case, unless $\M$ is of type $I$; however Araki's relative entropy between two faithful normal states $\f$ and $\psi$ on $\M$ is defined in general by
\[
S(\f | \psi) \equiv -(\eta ,  \log\Delta_{\xi , \eta}\,\eta) \ ,
\]
where $\xi,\eta$ are the vector representatives of $\f ,\psi$ in the natural cone $L^2_+(\M)$ and $\Delta_{\xi , \eta}$ is the relative modular operator associated with $\xi, \eta$, see \cite{BR,OP}. Relative entropy is one of the key concepts used in the following.

In this paper a quantum channel will be a normal, unital, completely positive $\a$ map between von Neumann algebras with finite dimensional centers, where  $\a$ has finite index and $\f$ and $\psi$ input and output states with respect to $\a$. The index will be the Jones index of a normal bimodule canonically associated with $\a$. Modular theory and subfactor theory are indeed at the root of our analysis, as we shall later explain.  

\subsubsection
{Statistical mechanics entropy}
According to Boltzmann, for an isolated system in thermal equilibrium, the entropy is associated with the logarithmic counting of the number of all possible states of the system, namely $S = k\log W $, where $W$ is the number of all possible microstates compatible with the given macrostate of the system; here $k$ is the Boltzmann constant. 

We can view $S$ as a measure of our lack of knowledge about our system, thus getting a link between thermodynamical entropy and information entropy. 

Since the equilibrium distribution at inverse temperature $\b = \frac{1}{kT}$ is given by the Gibbs distribution $e^{-\b H}/Z(\b)$ with $H$ the energy and $Z$ the normalising partition function, we see that, say in the quantum case, $S$ is given by formula \eqref{vne}  with density matrix $\r =e^{-\b H}/Z(\b)$.

\subsubsection
{Black hole entropy}
In black hole thermodynamics, the laws of thermodynamics are promoted to a quantum black hole framework. In this context, entropy 
has been subject to several different interpretations, both as statistical mechanics entropy and as information entropy. Fundamental aspects here are provided by the Hawking thermal radiation, and by Bekenstein area law giving entropy a geometrical interpretation as proportional to the area of the black hole event horizon. We refer to the literature for more on this subject, e.g. \cite{W}.

\subsection{Maxwell's demon. Landauer's bound} 
Towards the end of the 19th century, Maxwell suggested a thought experiment to show
how the Second Law of Thermodynamics might hypothetically be violated, the so called
Maxwell's demon experiment, that we recall. 

A gas is in equilibrium in a box and a wall is put to divide the box in two halves $A$ and $B$. 
A little being, the demon, controls a tiny door on the wall between $A$ and $B$. 
As an individual gas molecule approaches the door in $A$ or $B$, the demon quickly opens and shuts the door in order to allow only the faster molecules to pass from $A$ to $B$, and only the slower molecules to pass from $B$ to $A$. The average molecule speed thus increases in $B$, and decreases in $A$. Since faster molecules give rise to higher temperature, as a result the temperature in $B$ becomes higher than in $A$, so the entropy of the system decreases, thus violating the Second Law of Thermodynamics. Indeed a thermodynamical engine could extract work from this temperature difference.

Maxwell demon experiment, and its subsequent more refined versions, have long been a matter of debate in the Physics community, see e.g. \cite{LeRe}. An important contribution came by Szilard with a further idealisation where the gas has only one molecule.  
Szilard pointed out that the act of Maxwell's demon to measuring molecular speed would require an expenditure of energy. So one must consider the entropy of the total system including the demon. The expenditure of energy by the demon would produce an increase of the entropy of the demon, which would be larger than the decrease of the entropy of the gas.

Rolf Landauer \cite{La} realised however that some measuring processes need not increase thermodynamic entropy as long as they were thermodynamically reversible. Landauer argued  that {\it information is physical} and this principle was central to solving the paradox of Maxwell's demon. Bennett \cite{B} noted that the demon has to memorise the information he acquires about the gas molecules. He argued that after a full cycle of information the demon's memory has to be reset to its initial state to allow for a new iteration. According to Landauer's principle, the erasure process will always dissipate more entropy than the demon annihilated during one cycle, in full agreement with the second law of thermodynamics. Taking into account Shannon's information entropy, a generalised second law of thermodynamics holds true by considering the total system  and the total (thermodynamical $+$ information) entropy.

{\it Landauer's principle} states that ``any logically irreversible manipulation of information, such as the erasure of a bit or the merging of two computation paths, must be accompanied by a corresponding entropy increase in non-information bearing degrees of freedom of the information processing apparatus or its environment'' \cite{B}.

Another way of phrasing this principle is that if an observer loses information about a physical system, the observer loses the ability to extract work from that system.

Landauer's principle sets a lower bound of energy consumption of computation or logical operation, also known as {\it Landauer's bound}. 
For an environment at temperature $T$, energy $E = ST$ must be emitted into that environment if the amount of added entropy is $S$. For a computational operation in which 1 bit of logical information is lost, the amount of entropy generated is at least $k \log 2$ and so the energy that must eventually be emitted to the environment is 
\[
E \geq kT \log 2\ ,
\]
with $k$ the Boltzmann constant. If no information is erased, a thermodynamically reversible logical operation is theoretically possible with no release of energy $E$, therefore the above bound concerns irreversible transformations. 

Recently, physical experiments have tested Landauer's principle and confirmed its predictions, see \cite{BAP}.

\subsection{Underlying mathematical and physical context}
We now recall a few facts that play a particular role in our paper.

\subsubsection{Modular theory, the intrinsic dynamics}
As is well known, a {\it von Neumann algebra} $\M$ generalises at the same time the notion of (multi-)matrix algebra and the one of measure space: 
if $\M$ is finite dimensional then $\M$ is direct sum matrix algebras, if $\M$ is abelian then $\M$ can be identified with $L^\infty(X,\mu)$ for some measure space $(X,\mu)$, so $\M$ is sometimes called a ``noncommutative measure space". 

Let $\f$ be a faithful normal state of $\M$ (noncommutative integral). The Tomita-Takesaki modular theory provides a canonical, {\it intrinsic dynamics} associated with $\f$: a one-parameter group of automorphism  $\s^\f$  of $\M$,
\[
t\in\mathbb R \longmapsto \s^\f_t \in {\rm Aut}(\M) \ ,
\]
called the {\it modular group} of $\f$. Among its remarkable properties, we mention here the following:

$\bullet$ $\s^\f$ is a purely noncommutative object: $\s^\f$ acts identically if and only if $\f$ is tracial, therefore $\s^\f$ is not visible within the classical abelian case. 

$\bullet$ $\s^\f$ does not depend on $\f$ up to inner automorphisms by Connes' Radon-Nikodym theorem \cite{C73}; in particular, $\s^\f$ can be an outer action  only in the infinite dimensional case, indeed in the type $III$ case.

$\bullet$ $\s^\f$ is characterised by the KMS thermal equilibrium condition at inverse temperature $\b = -1$ with respect to the state 
$\f$. The KMS condition appears in Quantum Statistical Mechanics, see Section \ref{QRSM}; thus modular theory is directly connected with Physics.

By the last point, if we have a physical evolution satisfying the KMS condition at inverse temperature $\b >0$ w.r.t. a state $\f$, then we may identify this evolution with the rescaled modular group $t\mapsto \s^\f_{-t/\b}$, yet the modular evolution exists independently of any underlying physical setting. 
\subsubsection{Jones' index}\label{JI}
Let $\N\subset\M$ be an inclusion of factors. The {\it Jones index} $[\M:\N]$ measures the relative size of $\N$ in $\M$. It was originally defined for factors with a tracial state \cite{Jo}, then extended to arbitrary inclusions of factors with a normal conditional expectation $\e:\M\to\N$ by Kosaki \cite{K86}, and then in \cite{L89}. 
One of the main properties of the index is that its values are quantised:
\begin{equation}\label{iv}
[\M:\N] = 4\cos^2\frac{\pi}{k}\ , \ k = 3, 4, \dots\quad {\rm if} \ [\M:\N]<4 \   ,
\end{equation}
by Jones' theorem \cite{Jo}. 

In general, the well behaved index to be considered is the one with respect to the minimal expectation, the minimum over all possible indices. 
The multiplicativity of the minimal index is shown in \cite{KL,L92} and refs therein. 

Jones' index appears in many contexts in Mathematics and in Physics. In \cite{L89} we showed the {\it index-statistics relation}
\[
\text{DHR dimension} = \sqrt{\text{Jones index}}\ ,
\]
 between the Doplicher-Haag-Roberts statistical dimension of a superselection sector in Quantum Field Theory and the index of a localised endomorphism that represents the sector. Furthermore, our formula  in QFT \cite{L97} provides an interpretation of the logarithm of the index from the entropy viewpoint, see also Sections \ref{SectKW} and \ref{FC} here. 
 
In this paper, we are going to consider possibly infinite physical systems with a finite classical part, so
von Neumann algebras $\N, \M$ with finite dimensional centers. If $\N\subset\M$, the minimal expectation and the minimal index can then be defined, and the latter is a scalar subject to the above restriction \eqref{iv} if the centers of $\N$ and $\M$ intersect trivially, see \cite{Hav,FI} and refs therein. 

In a forthcoming paper \cite{GL}, we shall see how to define a non-scalar dimension
that is multiplicative in this general framework. This fits into the definition of the physical Hamiltonian we shall give below. Here, in order to focus more on the quantum information side, we do not dwell on non-scalar dimension issues; yet this paper is self-contained.   
\subsubsection{Quantum relativistic statistical mechanics}\label{QRSM}
Let $\M$ be a finite purely quantum system, namely $\M$ is the algebra on $n\times n$ complex matrices. The time evolution $\tau$ is a one-parameter automorphism group of $\M$ implemented by a one-parameter group $U$ of unitaries in $\M$, namely  $\tau_t(X) \equiv U(t)XU(-t)$, $X\in\M$. We have
$U(t) = e^{itH}$ where $H\in \M$ is a positive selfadjoint operator, the Hamiltonian. 

As is well known, a state $\f$ of $\M$ in thermal equilibrium at inverse temperature $\b >0$ is characterised by the Gibbs condition, namely
\[
\f(X) = \Tr(\r X)\ ,
\]
where $\r$ is the density matrix $\rho = e^{-\beta H}/\Tr(e^{-\b H})$.

At infinite volume, $\M$ becomes an infinite dimensional operator algebra, in general no trace $\Tr$ exists any longer on $\M$ and the evolution $\tau$ is not inner. A state $\f$ in thermal equilibrium at inverse temperature $\b >0$ is now characterised by the {\it KMS condition}: 
\smallskip

\noindent
For every $X,Y\in \M$, there is a function $F_{XY}\in A(S_\b)$ such that 
\begin{equation}\label{KMS}
\begin{aligned}
\mkern-328mu (a)\quad & F_{XY}(t)= \omega\big(X\tau_t(Y)\big)\  , \\
\mkern-328mu (b)\quad & F_{XY}(t+i\b)=\omega\big(\tau_t(Y)X\big)\ ,
\end{aligned}
\end{equation}
where $A(S_\b)$ is the algebra of functions analytic in the strip
$S_\b=\{0<\Im z <\b\}$, bounded and continuous on the closure $\bar
S_\b$.  
\smallskip

\noindent
KMS states generalise Gibbs states in the context of infinite systems, see  \cite{H,T,BR}.

As said, if $\M$ is a von Neumann algebra and $\f$ a faithful normal state of $\M$, then $\tau$ can be identified with the shifted modular group $\s^\f$ of $\f$. For an evolution on a $C^*$-algebra $\gA$ with KMS state $\f$, one gets into the von Neumann algebra framework  by considering the weak closure $\M$ of $\gA$ in the GNS representation associated with $\f$, the KMS condition holds then on $\M$ too.

In quantum relativistic statistical mechanics, locality and relativistic invariance have both to hold. In essence, as far as we are concerned here, quantum relativistic statistical mechanics is the study of KMS states in Quantum Field Theory.

\subsubsection
{The analog of the Kac-Wakimoto formula and a QFT index theorem}\label{SectKW}
A local conformal net $\A$ on $S^1$ is the operator algebraic framework to study chiral Conformal Quantum Field Theory, see \cite{KL04}. If $L_0$ is the conformal Hamiltonian (generator of the rotation one parameter unitary group) in the vacuum representation, and $L_\r$ is the conformal Hamiltonian in any representation $\r$ of $\A$, one expects the following formula to hold:
\begin{equation}\label{KW}
\lim_{t\to 0^+} \frac{\Tr(e^{-tL_\r})}{\Tr(e^{-tL_0})} = d(\r)\ ,
\end{equation}
here $d(\r)$ is the dimension of $\r$ (see Section \ref{JI} and \cite{LR}). 

Although formula \eqref{KW} has been checked for most models, see \cite{X, KL05}, it stands up as an important unproven conjecture. 

In \cite{L97}, we have however proven an analog of formula \eqref{KW} that holds true in full generality, where the rotation flow is replaced by a geometric KMS flow 
in QFT, for instance the dilation flow in CFT or the boost flows in general Quantum Field Theory \cite{BW,HL}.  To stay in a specific context, let $H_0$ and $H_\r$ be the Hamiltonian for a uniformly accelerated observer in the Minkowski spacetime with acceleration $a>0$, equivalently $H_0$ and $H_\r$ are the generators the geodesic flow evolution in the Rindler spacetime, respectively in the vacuum representation and in the representation $\r$; then
\begin{equation}\label{AKW}
(\Om, e^{-t H_\r} \Om)\big\vert_{t= \b} = d(\r)\ ,
\end{equation}
with $\Om$ the vacuum vector and $\b = \frac{2\pi}{a}$ the inverse Hawking-Unruh temperature, see also Section \ref{FC}.

The proof of formula \eqref{AKW} is based on a tensor categorical analysis and has been subsequently extended as a ground for a QFT index theorem \cite{L01}. 

In this paper, in particular, we generalise formula \eqref{AKW} without any reference to a given KMS physical flow, we indeed rely on an intrinsic modular evolution so to set up a universal formula. Of course, this formula gives back \eqref{AKW} and other formulas when applied to specific physical settings. Because of its generality, it can now be applied also to the Quantum Information setting, the context we are mainly going to consider in the following. 

\section{Bimodules and completely positive maps}
\label{Section2}

In order to simplify our exposition, we assume all von Neumann algebras in this paper to have separable preduals, namely to be representable on a separable Hilbert space. And all Hilbert spaces will be separable.

\subsection{Connes' bimodules and the intrinsic dynamics}
\label{CB}
We start by recalling the basic facts on normal bimodules over von Neumann algebras (see \cite{C, Sau83, Po87, L90, Fa})
and develop further material. 

Given von Neumann algebras $\N$ and $\M$, by an $\N - \M$ {\it bimodule} $\H$ we mean a Hilbert space $\H$ equipped with a normal left action of $\N$ and a normal right action of $\M$. Namely we have a normal representation $\ell = \ell_\H$ of $\N$ on $\H$ and a normal anti-representation $r = r_\H$  of $\M$ on $\H$ (thus $r(m_1 m_2) = r(m_2)r(m_1)$\,\!) such that 
$\ell(\N)$ and $r(\M)$ mutually commute. The natural notation
\[
n\xi m \equiv \ell(n)r(m)\xi\, , \quad n\in\N, \ m\in\M,\ \xi\in\H \, ,
\]
will be reserved for the identity bimodule below. 

Let $\M^o$ be the von Neumann algebra opposite to $\M$, with $m\mapsto m^o$ the natural anti-isomorphism of $\M$ with $\M^o$. 
Since an anti-representation $r$ of $\M$ corresponds to a representation $r^o$ of $\M^o$, i.e. $r^o(m^o) = r(m)$, an $\N - \M$ {\it bimodule} $\H$ corresponds to a binormal representation $\pi$ of $\N\odot\M^o$, the algebraic tensor product of $\N$ and $\M^o$, namely $\pi$ is a representation of 
$\N\odot\M^o$ on $\H$ whose restriction both on $\N\odot 1$ and $1\odot \M^o$ is normal, indeed 
$\pi(n\odot m^0) = \ell(n) r(m)$. The representation $\pi$ extends and corresponds uniquely to a representation of the maximal tensor product $C^*$-algebran $\N\bigotimes_{\rm max} \M^o$, the completion of $\N\odot\M^o$ with respect to the maximal $C^*$-norm. 

So there are natural notions of {\it direct sum} of bimodules and {\it intertwiner} between bimodules: they are the ones that appear when we view an $\N -\M$ bimodule as a representation of $\N\bigotimes_{\rm max} \M^o$, indeed bimodules form a $C^*$-category.

We shall denote by $L^2(\M)$ the {\it identity} $\M -\M$ bimodule, which is unique (up to unitary equivalence). If $\M$ acts on a Hilbert space $\H$ with cyclic and separating vector $\xi$, then $\H = L^2(\M)$ with actions $\ell(m)\eta = m\eta$, $r(m)\eta = Jm^* J\eta$, $m\in\M$, $\eta\in\H$, where $J$ is the modular conjugation of $\M$ associated with $\xi$, indeed with any cyclic (thus separating) vector in the natural positive cone $L^2(\M)_+$ given by $\xi$. 

The {\it conjugate} $\bar\H$ of the $\N -\M$ bimodule $\H$ is the $\M -\N$ bimodule over the conjugate Hilbert space $\bar\H$ with actions 
\begin{equation}\label{conj}
\ell_{\bar\H}(m) \bar\xi = \overline{r_\H(m^*)\xi} \ , \quad r_{\bar\H}(n)\bar\xi = \overline{\ell_\H(n^*)\xi}\ .
\end{equation}
If $\K$ is an $\N -\M$ bimodule and $T:\H\to \K$ an intertwiner, the conjugate interwiner $\bar T : \bar\H \to \bar\K$ is defined by
\[
\bar T \bar\xi = \overline{T\xi}, \quad \xi\in\H \ .
\]
The map $T\to \bar T$ is anti-linear, multiplicative and commutes with the $^*$-operation.  (Actually, $\bar T$ is defined for any linear operator $\H\to\K$).

Moreover, there exists an {\it (internal) tensor product} of bimodules and of intertwiners (\cite{C, Sau83}): if $\H$ is a $\M_1 -\M_2$ bimodule and 
$\K$ is a $\M_2 -\M_3$ bimodule, then $\H\otimes\K$ is a $\M_1 -\M_3$ bimodule. We shall give the definition of the tensor product, up to unitary equivalence, later below. 

Note that we have also the operation of {\it external tensor product}. Let $\H_k$ be an $\N_k - \M_k$ bimodule, $k =1,2$. The external tensor product $\H_1 \bigotimes \H_2$ is the obvious $\N_1\bigotimes\N_2 - \M_1\bigotimes\M_2$ bimodule on the Hilbert space $\H_1\bigotimes\H_2$. 
We shall use the symbol $\otimes$ for the internal tensor product, and the larger $\bigotimes$ for the usual external tensor product. 

Let now $\N$ and $\M$ be von Neumann algebras with finite dimensional centers and $\H$ a  $\N - \M$ bimodule $\H$. We shall say that $\H$ has finite index if the inclusion $\ell(\N)\subset r(\M)'$ has finite index (\cite{K86,Hav, FI}). In this case, if the centers are finite dimensional, the {\it index} of $\H$ is defined by
\[
\Ind(\H) \equiv [r(\M)' : \ell(\N)] = [\ell(\N)' : r(\M)]\ ,
\]
here the square brackets denote the {\it minimal index} of an inclusion of von Neumann algebras with finite dimensional centers (see \cite{L90,FI} and references therein), we refer to \cite{GL} for more on the index structure we need here.

We shall say that $\H$ is {\it connected} if 
\[
\ell(\N)\cap r(\M) = \ell\big(Z(\N)\big)\cap r\big(Z(\M)\big) = \mathbb C \ .
\]
Here $Z(\cdot)$ denotes the center. If $\H$ has finite index, then $\H$
is the direct sum of finitely many connected bimodules. For all results in this paper, we may deal with connected bimodules only, the general case being an immediate consequence by considering the direct sum along the atoms of  $\ell\big(Z(\N)\big)\cap r\big(Z(\M)\big) $. 

Note that $\Ind(\H)$ is an element of $\ell(\N)\cap r(\M)$ and $\Ind(\H)\geq 1$, in particular the index is a scalar if $\H$ is connected.

Let $\Ind(\H) < \infty$ and $\H$ be connected as above.
The scalar {\it  dimension} of $\H$ is the square root of the index
\ben\label{id}
d_\H \equiv \sqrt{\Ind(\H)} \ .
\een
If $\H$ is not connected, we take the minimal projections $e_k$ of $\ell(\N)\cap r(\M)$ and set $d_\H \equiv \sum_k d_{(e_k\H)}\, e_k$. The equality \eqref{id} still holds. 

Given faithful, normal, positive linear functional $\f,\psi$ on $\N$ and $\M$ (usually states), we define the {\it modular operator} $\Delta_\H(\f |\psi)$  of $\H$ with respect to $\f,\psi$ as
\begin{equation}\label{modHam} 
\Delta_\H(\f |\psi) \equiv d(\f\cdot\ell^{-1}\cdot\e)\big/ d(\psi\cdot r^{-1}) \ ,
\end{equation}
where the right hand side is Connes' spatial derivative \cite{C80} for the pair $r(\M)' , r(\M)$ w.r.t. the states $\f\cdot\ell^{-1}\cdot\e$ and $\psi\cdot r^{-1}$
and $\e: r(\M)' \to \ell(\N)$ is the minimal conditional expectation, see also \cite{St,Y}.

In \eqref{modHam} the meaning of $\ell^{-1}$ and $r^{-1}$ is clear if
$\ell$ and $r$ are injective maps, which is automatic if $\N$ and $\M$ are factors. 
In general, let $p$ the central support of $\ell$ (i.e. $1 - p$ is the largest projection in $Z(\N)$ in the kernel of $\ell$) and similarly $q$ the central support of $r$. Then $\ell_p \equiv\ell|_{\N_p}$ and $r_q \equiv r|_{\M_q}$ are injective maps and we put $\ell^{-1} \equiv \ell^{-1}_p$, $r^{-1} \equiv r^{-1}_q$ (thus $\f\cdot \ell^{-1} \big(\ell(n)\big) = \f(np)$ and $\psi\cdot r^{-1} \big(r(m)\big) = \psi(mq)$\,). 

The following key lemma follows by combining the properties of the spatial derivative \cite{C80} and Takesaki's theorem on conditional expectations \cite{T}. 
\begin{lemma}\label{lemcov} 
For every $t\in\mathbb R$, we have
\begin{equation}\label{modH}
\Delta^{it}_\H(\f|\psi)\ell(n)\Delta^{-it}_\H(\f|\psi) = \ell\big(\s^\f_t(n)\big)\ , \quad 
\Delta^{it}_\H(\f|\psi)r(m)\Delta^{-it}_\H(\f|\psi) = r\big(\s^\psi_{t}(m)\big)\ ,
\end{equation}
$n\in\N$, $m\in\M$. 
\end{lemma}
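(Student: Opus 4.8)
The plan is to read off both identities from two standard tools: Connes' implementation of the modular groups by the spatial derivative \cite{C80}, and Takesaki's theorem \cite{T} on the modular group of a weight that factors through a conditional expectation. Write $\Delta\equiv\Delta_\H(\f|\psi)=d\mu/d\nu'$, where $\mu\equiv\f\cdot\ell^{-1}\cdot\e$ is a weight on $M\equiv r(\M)'$ and $\nu'\equiv\psi\cdot r^{-1}$ is a weight on $M'=r(\M)$. The spatial-derivative theorem then supplies, for all $t\in\RR$,
\[
\Delta^{it}x\,\Delta^{-it}=\s^\mu_t(x)\ (x\in r(\M)'),\qquad \Delta^{it}x'\,\Delta^{-it}=\s^{\nu'}_{-t}(x')\ (x'\in r(\M)),
\]
the reversed sign $-t$ in the second relation being the usual behaviour of $d\mu/d\nu'$ on the commutant of $M$.

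For the first identity in \eqref{modH} I would start from $\ell(n)\in\ell(\N)\subset r(\M)'=M$, so the first relation gives $\Delta^{it}\ell(n)\Delta^{-it}=\s^\mu_t(\ell(n))$. Since $\e\colon r(\M)'\to\ell(\N)$ is the minimal normal conditional expectation and $\mu=(\f\cdot\ell^{-1})\circ\e$ is by construction $\e$-invariant with $\mu|_{\ell(\N)}=\f\cdot\ell^{-1}$, Takesaki's theorem yields $\s^\mu_t(\ell(\N))=\ell(\N)$ and $\s^\mu_t|_{\ell(\N)}=\s^{\f\cdot\ell^{-1}}_t$. Finally $\ell$ is a normal isomorphism onto $\ell(\N)$, so modular groups transport covariantly, $\s^{\f\cdot\ell^{-1}}_t(\ell(n))=\ell(\s^\f_t(n))$, and chaining the three equalities gives $\Delta^{it}\ell(n)\Delta^{-it}=\ell(\s^\f_t(n))$.

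The second identity is where I expect the only genuine subtlety, a bookkeeping of signs. Here $r(m)\in r(\M)=M'$, so the second spatial-derivative relation gives $\Delta^{it}r(m)\Delta^{-it}=\s^{\nu'}_{-t}(r(m))$, with the apparently wrong sign $-t$. The point is that $r$ is an \emph{anti}-representation: via $r(m)\leftrightarrow m^o$ the algebra $r(\M)$ is identified with $\M^o$ and $\nu'$ with the opposite state $\psi^o$, and passage to the opposite algebra reverses the modular flow, so $\s^{\nu'}_s\circ r=r\circ\s^\psi_{-s}$ for all $s$. Taking $s=-t$ converts $\s^{\nu'}_{-t}(r(m))$ into $r(\s^\psi_{t}(m))$; the two sign reversals cancel and the second identity follows. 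As a consistency check, in the identity bimodule $L^2(\M)$ with $\f=\psi$ one has $\Delta=\Delta_\f$ and $r(m)=Jm^*J$, and a direct computation using $J\Delta^{it}=\Delta^{it}J$ indeed gives $\Delta^{it}r(m)\Delta^{-it}=r(\s^\f_t(m))$.

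The main obstacle is exactly this cancellation, together with the non-factor bookkeeping carried by the central supports $p,q$ and the minimal expectation $\e$. I would dispatch the latter by recalling that a modular group decomposes along any central projection: since $p\in Z(\N)$ and $q\in Z(\M)$, the corners $\N p$ and $\M q$ are invariant and carry the restricted flows, which legitimises the transport identities $\s^{\f\cdot\ell^{-1}}_t\circ\ell=\ell\circ\s^\f_t$ and $\s^{\psi\cdot r^{-1}}_s\circ r=r\circ\s^\psi_{-s}$ used above; the minimal expectation $\e$ enters only through $\mu$, where its sole role is to make Takesaki's theorem applicable.
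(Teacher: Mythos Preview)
Your argument is correct and follows essentially the same route as the paper: use the spatial derivative to implement $\s^{\f\cdot\ell^{-1}\cdot\e}_t$ on $r(\M)'$ and $\s^{\psi\cdot r^{-1}}_{-t}$ on $r(\M)$, invoke Takesaki's theorem to restrict the first to $\s^{\f\cdot\ell^{-1}}_t$ on $\ell(\N)$, and then transport through $\ell$ and the anti-isomorphism $r$ (which accounts for the sign cancellation you spell out). The paper's proof is the same three-line argument, just stated more tersely and without your consistency check or the explicit remarks on central supports.
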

\begin{proof}
$\big(d(\f\cdot\ell^{-1}\cdot\e)\big/ d(\psi\cdot r^{-1})\big)^{it}$ implements $\s_t^{\f\cdot\ell^{-1}\cdot\e}$ on $r(\M)'$ and 
$\s_{-t}^{\psi\cdot r^{-1}}$ on $r(\M)$. 
By Takesaki's theorem $\s_t^{\f\cdot\ell^{-1}\cdot\e}$ restricts to $\s_t^{\f\cdot\ell^{-1}}$ on $\ell(\N)$. Since 
$\s_t^{\f\cdot\ell^{-1}}\! \big(\ell(n)\big) = \ell\big(\s_t^\f(n)\big)$ and 
$\s_{-t}^{\psi\cdot r^{-1}}\! \big(r(m)\big) = r\big(\s_{t}^\psi(m)\big)$
we have the equalities in \eqref{modH}.
\end{proof}
Note also that
\ben\label{frc}
\Delta^{it}_\H(\f|\psi)X\Delta^{-it}_\H(\f|\psi) = X\ ,\quad X\in \big(\ell(\N)\vee r(\M)\big)' \ ,
\een
because the minimal conditional expectation is tracial on the relative commutant. 

Since
\[
\Delta_\H(\l\f |\psi) = \l\Delta_\H(\f |\psi),\quad \Delta_\H(\f |\l\psi) = \l^{-1}\Delta_\H(\f |\psi), \quad \l >0,
\]
we may normalise $\f$ and $\psi$ and study the case where they are states, although non normalised functionals may appear when considering reduced bimodules as before Lemma \ref{lemcov}. 

The definition of $\Delta_\H(\f|\psi)$ is not symmetric. If we define the right modular operator $\Delta'_\H(\f|\psi) = d(\f\cdot\ell^{-1})\big/ d(\psi\cdot r^{-1}\cdot\e)$, with $\e: \ell(\N)' \to r(\M)$ the minimal conditional expectation,
we have by Kosaki's formula \cite{K86} that
\begin{equation}\label{Ko}
\Delta'_\H(\f|\psi) = \Ind(\H)\, \Delta_\H(\f|\psi) \ .
\end{equation}
We call $\log \Delta_\H(\f |\psi)$ the {\it modular Hamiltonian} of $\H$ with respect to the states $\f$ and $\psi$. 
\smallskip

Let $\N,\M$ be von Neumann algebras and $\H$ an $\N-\M$ bimodule. Let $\F$ be a type $I_\infty$ factor, namely $\F$ is isomorphic to $B(\K)$ with $\K$ a separable, infinite-dimensional Hilbert space, and consider the identity $\F-\F$ bimodule $L^2(\F)$ ($B(\K)$ acting on Hilbert-Schmidt operators by left and right multiplication).  Then the external tensor product $\H\bigotimes L^2(\F)$ is naturally an $\N\bigotimes\F -\M\bigotimes\F$ bimodule, the ``ampliation'' of $\H$. 
As $\N\bigotimes\F$, $\M\bigotimes\F$ are properly infinite. By considering the ampliation,  most proofs can be easily reduced to the case of bimodules over properly infinite von Neumann algebras.

We begin to analyse the functoriality properties of the modular Hamiltonian in the factorial case. 

The internal tensor product $\H\otimes\K$ of bimodules $\H$ and $\K$ was defined in \cite{C, Sau83}, we shall explain it in the next section.  Notice that the modular unitary $\Delta^{it}_\H(\f_1|\f_2)$, $t\in\mathbb R$,  is an intertwiner $\H\to \H_t$ (with $\H_t \equiv {}_{\s^{\f_1}_t}\H_{\s^{\f_2}_t}$) and similarly for
$\Delta^{it}_{\K}(\f_2|\f_3)$ so, in particular, the notions of intertwiner tensor product and conjugation are defined for them.

Let's say that an $\N-\M$ bimodule $\H$ is {\it factorial} if both $\ell_\H(\N)$ and $r_\H(\M)$ are factors.
\begin{theorem}\label{th1}
Let $\M_k$ be von Neumann algebras and $\f_k$ faithful normal positive linear functionals of $\M_k$, $k = 1,2,3$. With $\H,\H'$ finite index, factorial $\M_1 - \M_2$ bimodules and $\K$ a  finite index, factorial $\M_2 - \M_3$ bimodule, we have:
\begin{itemize}
\item[$(a)$]  $\Delta^{it}_\H(\f_1|\f_2) \otimes \Delta^{it}_{\K}(\f_2|\f_3) = \Delta_{\H\otimes\K}^{it}(\f_1|\f_3)$;
\item[$(b)$]  $\Delta^{it}_{\bar\H}(\f_2|\f_1) = \Ind(\H)^{-it}\, \overline{\Delta^{it}_{\H}(\f_1|\f_2)}$;
\end{itemize}
If $T:\H\to\H'$ is a bimodule intertwiner, then
\begin{itemize}
\item[$(c)$] $T\Delta^{it}_\H(\f_1|\f_2) = (d_{\H'}/d_\H)^{it} \Delta^{it}_{\H'}(\f_1|\f_2)T$;  
\end{itemize}
Here, in $(a)$, the tensor product is taken with respect to the state $\f_2$.
\end{theorem}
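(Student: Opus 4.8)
The plan is to derive all three identities from the covariance property of Lemma~\ref{lemcov} together with the positivity and essential uniqueness of Connes' spatial derivative, the qualitative content being automatic from \eqref{modH} and the real work being the identification of the scalar prefactors. As a common preliminary I would pass, via the ampliation $\H\bigotimes L^2(\F)$ discussed above, to properly infinite algebras, and decompose along the atoms of the relevant intersections of centers so as to reduce to connected (and, where useful, irreducible) bimodules; for the dimension- and index-dependent constants, however, I would keep track of how each side scales under direct sums rather than discard multiplicities.

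For part $(a)$ I would first realize the internal tensor product $\H\otimes\K$ concretely with respect to $\f_2$ through the bounded-vector (Connes fusion) construction, so that the left $\M_1$-action comes from $\H$ and the right $\M_3$-action from $\K$. Since the right $\M_2$-twist on $\H_t={}_{\s^{\f_1}_t}\H_{\s^{\f_2}_t}$ is $\s^{\f_2}_t$ and matches the left $\M_2$-twist on $\K_t$, the intertwiner tensor product $U_t:=\Delta^{it}_\H(\f_1|\f_2)\otimes\Delta^{it}_\K(\f_2|\f_3)$ is legitimate, and functoriality of $\otimes$ with the group law of each modular group shows $t\mapsto U_t$ is a strongly continuous one-parameter unitary group, hence $U_t=A^{it}$ for a positive self-adjoint $A$. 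By Lemma~\ref{lemcov} applied to each factor, $A^{it}$ implements $\s^{\f_1}_t$ on $\ell(\M_1)$ and $\s^{\f_3}_t$ on $r(\M_3)$, exactly as $\Delta^{it}_{\H\otimes\K}(\f_1|\f_3)$ does, so their ratio is a one-parameter family in the finite-dimensional algebra $\End(\H\otimes\K)=\big(\ell(\M_1)\vee r(\M_3)\big)'$ of self-intertwiners. To eliminate this residual family and fix $A=\Delta_{\H\otimes\K}(\f_1|\f_3)$ on the nose, I would invoke the chain rule for spatial derivatives through the intermediate algebra $r_\H(\M_2)=\ell_\K(\M_2)$, together with the multiplicativity of the minimal conditional expectation and of the index under composition of inclusions.

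For part $(b)$ I would write out $\Delta_{\bar\H}(\f_2|\f_1)$ directly from the definition \eqref{modHam} applied to the conjugate bimodule \eqref{conj}, whose left and right actions are the conjugates of the right and left actions of $\H$. Because $T\mapsto\bar T$ is antilinear, conjugation turns $\Delta^{it}_\H(\f_1|\f_2)$ into the corresponding object on $\bar\H$ while interchanging the roles of $\f_1$ and $\f_2$, which accounts for $\overline{\Delta^{it}_\H(\f_1|\f_2)}$. The one genuine subtlety is that conjugation also interchanges the two inclusions $\ell(\M_1)\subset r(\M_2)'$ and $r(\M_2)\subset\ell(\M_1)'$, so the minimal expectation defining $\Delta_{\bar\H}$ now sits on the side that defines the \emph{right} modular operator $\Delta'_\H$ of $\H$; by Kosaki's relation \eqref{Ko}, $\Delta'_\H=\Ind(\H)\,\Delta_\H$, and carrying this factor through the (antilinear) $it$-power produces precisely the prefactor $\Ind(\H)^{-it}$.

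For part $(c)$, given $T:\H\to\H'$ I would set $T_t:=\Delta^{-it}_{\H'}(\f_1|\f_2)\,T\,\Delta^{it}_\H(\f_1|\f_2)$ and check, using \eqref{modH} on both $\H$ and $\H'$, that the twists $\s^{\f_1}_t$ and $\s^{\f_2}_t$ cancel, so $T_t\in\Hom(\H,\H')$ for every $t$; as $t\mapsto T_t$ is continuous and satisfies a multiplicative cocycle law in the finite-dimensional morphism space, in the irreducible case $\Hom(\H,\H')=\mathbb C\,T$ forces $T_t=\lambda^{it}T$ with $\lambda>0$. I expect the main obstacle throughout to be exactly the identification of these constants: the covariance statements are immediate from Lemma~\ref{lemcov}, but pinning down that there is no extra scalar in $(a)$, the factor $\Ind(\H)^{-it}$ in $(b)$, and the value $\lambda=d_{\H'}/d_\H$ in $(c)$ requires the careful quantitative bookkeeping afforded by the spatial-derivative chain rule, Kosaki's formula \eqref{Ko}, and the normalization of the minimal conditional expectation on $\End(\H)$ and $\End(\H')$, whose trace is governed precisely by the dimensions $d_\H,d_{\H'}$.
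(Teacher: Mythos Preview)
Your outline is sound, and for $(b)$ it coincides with the paper's argument almost verbatim: conjugation swaps the roles of the two inclusions, and Kosaki's formula \eqref{Ko} supplies the factor $\Ind(\H)^{-it}$.

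For $(a)$ and $(c)$ the paper takes a more concrete route than yours. For $(a)$ the paper first realizes $\H={}_\r L^2(\M_2)$ and $\K=L^2_\th(\M_2)$ so that everything lives on $L^2(\M_2)$ and the intertwiner tensor product becomes the DHR product of Proposition~\ref{intert}; the heart of the matter is then a preliminary lemma establishing $\Delta^{it}(\f|\o)\otimes\Delta^{it}(\o|\psi)=\Delta^{it}(\f|\psi)$ for the \emph{identity} bimodule $L^2(\M_2)$. Its proof uses a conjugation-symmetry trick ($\overline{\Delta^{it}_\o}=\Delta^{it}_\o$ forces the residual central cocycle to satisfy $z(t)=z(-t)$, hence $z=1$) in the case $\f=\psi=\o$, and then the Connes cocycle $(D\f:D\o)_t$ to reach the general case; Kosaki's formula and the multiplicativity of the minimal expectation then lift this to $\H\otimes\K$. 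Your ``chain rule for spatial derivatives through the intermediate algebra'' is exactly this lemma in content, so the strategies agree; what the paper adds is a self-contained proof of that chain rule rather than an invocation. For $(c)$ the paper does not reduce to irreducibles but instead uses the polar decomposition $T=vh$: the unitary part carries one minimal expectation to the other (so no scalar appears), the positive part $h\in\ell(\M_1)'\cap r(\M_2)'$ commutes with $\Delta^{it}_\H$ by \eqref{frc}, and the remaining case of a projection $e$ is the restriction formula $\Delta_\H|_{e\H}=(d_{e\H}/d_\H)\,\Delta_{e\H}$, read off directly from \cite[Prop.~4.3]{K86}. Your cocycle argument $T_t=\lambda^{it}T$ in the irreducible case would also succeed, but note that reassembling the general $T$ from irreducibles already presupposes knowing how $\Delta^{it}_\H$ behaves on direct summands, i.e.\ the projection case; so organizationally you would need to establish that restriction formula first, which is precisely where the paper starts.
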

\medskip

\noindent
The proof of Theorem \ref{th1} is postponed to the Appendix \ref{appendix}, see however the comment after Proposition \ref{intert} below. 
\medskip

\noindent
It follows from $(c)$ of Theorem \ref{th1} that if $\H$ is factorial and direct sum of finitely many $\M_1 - \M_2$ bimodules $\H_k$, i.e. $\H = \bigoplus_k \H_k$, then
\[
\log \Delta_{\H}(\f_1|\f_2) + \log d_\H = \bigoplus_k \big(\log \Delta_{\H_k}(\f_1|\f_2) + \log d_{\H_k}\big) \ .
\]
By Theorem \ref{th1}, in the factorial case it is natural to consider the unitary one parameter group $d_\H^{it}\Delta^{it}_{\H}(\f_1|\f_2)$ in order to have good functoriality properties. 

Theorem \ref{th1} does not directly extends  to the non-factorial case. For a finite-index inclusion of von Neumann algebras with finite dimensional centers, there exists an associated, unique minimal conditional expectation; however the composition of minimal expectations may fail to be minimal. In other words, the multiplicativity of the minimal index does not always hold in the non-factor case.

If $\H$ is a finite-index $\N-\M$-bimodule $\H$ with finite dimensional centers $Z(\N)$ and $Z(\M)$, then $\H$ is direct sum of factorial $\N-\M$ bimodules, namely $\H = \bigoplus_{i,j} \H_{ij}$, with $\H_{ij} = \ell_\H(p_i)r_\H(q_j)\H$ the reduced bimodules by the atoms $p_i$ and $q_j$ of $Z(\N)$ and $Z(\M)$ respectively. We refer to the {\it canonical central decomposition} of $\H$ for such a decomposition.

As explained in \cite{GL}, for such an $\N-\M$ bimodule $\H$,
we have to consider the {\it matrix dimension} $D_\H \equiv \{d_{ij}\}$ given by the dimensions 
$d_{ij} \equiv d_{\H_{ij}}$ of the factorial summands $\H_{ij}$. 
If $\H$ is connected, the  scalar dimension is then given by 
\[
d_\H = ||D_\H ||
\]
(Hilbert space linear operator norm of the matrix $D_\H$). 
$D_\H$ is multiplicative,  $d_\H$ is only sub-multiplicative.  

We consider here the strictly positive linear map $D_\H : \H\to\H$ associated with $D_\H$ (still denoted by $D_\H$) given by $D_\H |_{\H_{ij}} = d_{ij}$, so
$D_\H = \sum_{i,j}d_{ij}\ell_\H(p_i)r_\H(q_j)$ belongs to $\ell(\N)'\cap r(\M)'$. 

Now, since the matrix dimension is multiplicative,  
$(a)$ in Theorem \ref{th1} should be generalised in the non-factorial case to
\begin{equation}\label{funct}
\Delta^{it}_\H(\f_1|\f_2)D_\H^{it} \otimes \Delta^{it}_{\K}(\f_2|\f_3)D_\K ^{it} =  \Delta_{\H\otimes \K}^{it}(\f_1|\f_3)D_{\H\otimes \K}^{it}\ ;
\end{equation}  
notice here that $D_\H^{it}$ is an intertwiner $\H\to\H$ and  $\Delta^{it}_\H(\f_1|\f_2)$ is an intertwiner $\H\to\H_t$, so $\Delta^{it}_\H(\f_1|\f_2)D_\H^{it}$ is an intertwiner $\H\to\H_t$, and similarly $\Delta^{it}_{\K}(\f_2|\f_3)D_\K ^{it}$ is an intertwiner $\K\to\K_t$, therefore the internal tensor product in \eqref{funct} is defined. 

In view of Theorem \ref{th1} and the above comments, we thus define the {\it physical Hamiltonian} $K_\H(\f|\psi)$ of $\H$ with respect to the normal faithful states $\f$ and $\psi$ by shifting the modular Hamiltonian in \eqref{modHam}:
\[
K_\H(\f|\psi) \equiv \log\Delta_\H(\f|\psi) + \log D_\H \ .
\]
The name `physical Hamiltonian' is justified by the following Theorem \ref{th2}. 
Indeed, both the modular Hamiltonian and the physical Hamiltonian generate one parameter unitary groups that implement the modular flow and naturally behave under the relative tensor product (the extensive property of energy), yet only the physical Hamiltonian  naturally restricts to sub-bimodules and is symmetric under conjugation (charge conjugation symmetry is a general property in Quantum Physics, see \cite{GL92}). Of course, in the QFT context, our physical Hamiltonian here coincides with the KMS physical Hamiltonian there \cite{L97}. We shall discuss the characterisation of the physical Hamiltonian after Theorem \ref{th2}. 

Alternatively, the definition of the physical Hamiltonian can be given as follows. 
If $\H$ is factorial, we put
\[
K_\H(\f|\psi) = \log\Delta_\H(\f|\psi) + \log d_\H \ .
\]
If $\H$ is an $\N-\M$ bimodule with finite dimensional centers $Z(\N)$ and $Z(\M)$, we consider the canonical central decomposition $\H = \bigoplus_{i,j} \H_{ij}$ as above and put
\[
K_\H(\f|\psi) = \bigoplus_{i,j} K_{\H_{ij}}(\f |\psi) \ ,
\]
with 
$
K_{\H_{ij}}(\f |\psi) = \log\Delta_{\H_{ij}}(\f|\psi)  + d_{\H_{ij}}.
$

The one parameter unitary group 
\[
U_t^\H(\f|\psi) \equiv \Delta_\H^{it}(\f |\psi) D_\H^{it}
\] 
generated by $K_\H(\f|\psi)$
naturally transforms under the tensor categorical operations, in particular the equation
$U_t^{\bar\H}(\psi|\f) = \overline{U_{t}^\H(\f|\psi)}$ fixes the shifting of the modular Hamiltonian if $\H$ is irreducible.  
We now state the main properties of $U_\H$ in the following theorem. 
\begin{theorem}\label{th2}
Let $\M_k$ be von Neumann algebras with finite-dimensional centers and $\f_k$ faithful normal states of $\M_k$, $k = 1,2,3$. With $\H,\H'$ finite index $\M_1 - \M_2$ bimodules and $\K$ a finite index $\M_2 - \M_3$ bimodule, $U^\H(\f_1 |\f_2)$ is a  one parameter unitary group on $\H$ that
depends naturally on $\H,\f_1,\f_2$. 

Namely $U^\H(\f_1 |\f_2)$ {\rm implements the modular dynamics}:
\ben\label{modU}
\begin{aligned}
U_t^\H(\f_1|\f_2)\ell_\H(m_1)U_{-t}^\H(\f_1|\f_2) &= \ell_\H\big(\s^{\f_1}_t(m_1)\big), \quad m_1\in\M_1\, , \\
U_t^\H(\f_1|\f_2)r_\H(m_2)U_{-t}^\H(\f_1 |\f_2) &= r_\H\big(\s^{\f_2}_{t}(m_2)\big), \quad m_2\in\M_2\, ,
\end{aligned}
\een
and the following hold (with $\H\otimes\K \equiv \H\otimes_{\f_2}\K$):
\begin{itemize}
\item[$(a)$:]   $U_t^{\H\otimes\K}(\f_1|\f_3) =  U_t^\H(\f_1|\f_2) \otimes U^\K_t(\f_2|\f_3)$\ \ {\rm (additivity of the energy)};
\item[$(b)$:]  $U_t^{\bar\H}(\f_2|\f_1) =  \overline{U_{t}^{\H}(\f_1|\f_2)}$\ \ {\rm (conjugation symmetry)};
\end{itemize}
If $T:\H\to\H'$ is a bimodule intertwiner
\begin{itemize}
\item[$(c)$:] $TU_t^\H(\f_1|\f_2) = U_t^{\H'}(\f_1|\f_2)T$ \ \ {\rm (functoriality)}.  
\end{itemize}
\end{theorem}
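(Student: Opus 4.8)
\emph{Plan.} Write $U_t^\H(\f_1|\f_2)=\Delta_\H^{it}(\f_1|\f_2)D_\H^{it}$ and first dispose of the two unnumbered assertions. For the group law, $D_\H\in\ell_\H(\M_1)'\cap r_\H(\M_2)'=(\ell_\H(\M_1)\vee r_\H(\M_2))'$, so \eqref{frc} shows $\Delta_\H^{it}(\f_1|\f_2)$ commutes with $D_\H^{it}$; hence $U_t^\H U_s^\H=U_{t+s}^\H$, strong continuity being clear. For the implementation \eqref{modU}, $D_\H^{it}$ commutes with every $\ell_\H(m_1)$ and $r_\H(m_2)$ (again because $D_\H\in\ell_\H(\M_1)'\cap r_\H(\M_2)'$), so conjugation by $U_t^\H$ equals conjugation by $\Delta_\H^{it}(\f_1|\f_2)$ on the two actions, and Lemma \ref{lemcov} produces exactly $\s^{\f_1}$ and $\s^{\f_2}$. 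The remaining properties are categorical, and for them I would pass to the canonical central decomposition $\H=\bigoplus_{ij}\H_{ij}$; since $\Delta_\H^{it}$ commutes with the central projections $\ell_\H(p_i)$ and $r_\H(q_j)$ (the modular flow fixing the centers of $\M_1$ and $\M_2$) and the minimal expectation is compatible with reduction by central projections, one gets $\Delta_\H^{it}=\bigoplus_{ij}\Delta_{\H_{ij}}^{it}$ and $D_\H=\bigoplus_{ij}d_{\H_{ij}}$, so that $U_t^\H=\bigoplus_{ij}U_t^{\H_{ij}}$ with each $\H_{ij}$ factorial.

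The factorial case of $(a)$--$(c)$ follows from Theorem \ref{th1} once one checks that the shift by the scalar $d_\H^{it}$ cancels the anomalies. For $(c)$, Theorem \ref{th1}$(c)$ gives $T\Delta_\H^{it}=(d_{\H'}/d_\H)^{it}\Delta_{\H'}^{it}T$, and inserting the shifts $d_\H^{it}$ and $d_{\H'}^{it}$ turns $(d_{\H'}/d_\H)^{it}$ into the identity, yielding $TU_t^\H=U_t^{\H'}T$. For $(b)$, Theorem \ref{th1}$(b)$ with $\Ind(\H)=d_\H^2$, $d_{\bar\H}=d_\H$ and the anti-linear identity $\overline{d_\H^{it}}=d_\H^{-it}$ gives $U_t^{\bar\H}=\overline{U_t^\H}$. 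For $(a)$ among factorial bimodules, Theorem \ref{th1}$(a)$ and $d_{\H\otimes\K}=d_\H d_\K$ give $U_t^{\H\otimes\K}=U_t^\H\otimes U_t^\K$. The general cases of $(b)$ and $(c)$ are then bookkeeping: an intertwiner $T:\H\to\H'$ commutes with $\ell_\H(p_i)$ and $r_\H(q_j)$, hence splits as $T=\bigoplus_{ij}T_{ij}$ with $T_{ij}:\H_{ij}\to\H'_{ij}$, and summing gives $(c)$; the central decomposition of $\bar\H$ is $\bigoplus_{ij}\overline{\H_{ij}}$, and summing gives $(b)$.

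The main obstacle is the general case of $(a)$, since the internal tensor product does not decompose the central decomposition into factorial summands of equal dimension. Because the $\M_2$-balancing kills mismatched central atoms, $\H_{ij}\otimes_{\f_2}\K_{j'k}=0$ for $j\neq j'$, so the $(i,k)$ factorial component is $(\H\otimes\K)_{ik}=\bigoplus_j\H_{ij}\otimes\K_{jk}$, a direct sum of factorial $\M_1^{(i)}-\M_3^{(k)}$ bimodules whose dimensions $d_{\H_{ij}}d_{\K_{jk}}$ generally differ from the single dimension $d_{(\H\otimes\K)_{ik}}$ of the whole component. Applying Theorem \ref{th1}$(c)$ to the isometric inclusion $v_j:\H_{ij}\otimes\K_{jk}\hookrightarrow(\H\otimes\K)_{ik}$ shows that on the $j$-th summand $\Delta_{(\H\otimes\K)_{ik}}^{it}$ acts as $(d_{\H_{ij}}d_{\K_{jk}}/d_{(\H\otimes\K)_{ik}})^{it}\,\Delta_{\H_{ij}\otimes\K_{jk}}^{it}$; multiplying by the shift $d_{(\H\otimes\K)_{ik}}^{it}$ cancels that dimension exactly and leaves $U_t^{(\H\otimes\K)_{ik}}$ equal, summand by summand, to $U_t^{\H_{ij}\otimes\K_{jk}}=U_t^{\H_{ij}}\otimes U_t^{\K_{jk}}$, so summing over $i,j,k$ proves $(a)$. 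The point is that the anomaly arising from the non-minimality of the restricted expectation is absorbed precisely by the scalar dimension shift---which is why the physical, not the modular, Hamiltonian is functorial---and the result is consistent with the matrix-dimension multiplicativity $D_{\H\otimes\K}=D_\H D_\K$ of \cite{GL}, around which the argument could alternatively be organised. One should also check that each factorial summand carries the reduced states, whose modular automorphisms agree with the global $\s^{\f_k}$ because the modular flow fixes the center pointwise.
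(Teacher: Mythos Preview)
Your proof is correct and follows essentially the same approach as the paper: reduce to the factorial case via the canonical central decomposition $\H=\bigoplus_{ij}\H_{ij}$, then invoke Theorem~\ref{th1}. Your treatment of $(a)$ is in fact more explicit than the paper's---where the paper writes $\bigoplus_{i,j,k}U^{\H_{ik}\otimes\K_{kj}}=\bigoplus_{i,j}U^{(\H\otimes\K)_{ij}}$ in one stroke (tacitly using the direct-sum additivity recorded after Theorem~\ref{th1}), you spell out the dimension cancellation via Theorem~\ref{th1}$(c)$ applied to the inclusions $v_j$. For $(c)$ you take a slightly more direct route, decomposing an arbitrary intertwiner $T$ along the central projections into factorial pieces $T_{ij}$, whereas the paper first reduces (via polar decomposition, as in the proof of Theorem~\ref{th1}$(c)$) to the direct-sum additivity \eqref{addsum} and then proves that by iterating factorial and central direct sums; both arguments land on the same factorial input.
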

\noindent
Notice that, in particular, property $(c)$ above gives the additivity of the physical unitary evolution under bimodule direct sum:
\ben\label{addsum}
U^{\H\oplus\H'}(\f_1|\f_2) = U^\H(\f_1|\f_2) \oplus U^{\H'} (\f_1|\f_2)\ . 
\een 
\begin{proof}
The first statement \eqref{modU} follows directly from \eqref{modH}.

$(a)$: Let $\H =\bigoplus_{i,k}\H_{ik}$ and $\K =\bigoplus_{k,j}\K_{kj}$ be canonical central decompositions. Then the relative tensor product w.r.t. $\f_2$ is decomposed as
\[
\H\otimes\K = \bigoplus_{i,j,k}\H_{ik}\otimes\K_{kj}
\]
and we apply $(a)$ of Theorem \ref{th1}:
\begin{multline*}
U^\H(\f_1|\f_2) \otimes U^\K(\f_2|\f_3) = \Big(\bigoplus_{i,k}U^{\H_{ik}}(\f_1|\f_2)\Big) \otimes \Big(\bigoplus_{k,j}U^{\K_{kj}}(\f_2|\f_3) \Big)\\
= \bigoplus_{i,j,k}U^{\H_{ik}}(\f_1|\f_2)\otimes U^{\K_{kj}}(\f_2|\f_3)
= \bigoplus_{i,j,k}U^{\H_{ik}\otimes \K_{kj}}(\f_1|\f_3)\\
= \bigoplus_{i,j}U^{(\H\otimes \K)_{ij}}(\f_1|\f_3)
= U^{\H\otimes\K}(\f_1|\f_3) \ .
\end{multline*}

$(b)$: This is satisfied in the factorial case because of $(b)$ in Theorem \ref{th1}, hence in more generality
for bimodules over von Neumann algebras with finite dimensional centers because the conjugation operation commutes with the operation of taking direct sums.

$(c)$: As in the proof of $(c)$ of Theorem \ref{th1}, it is enough to check the additivity w.r.t. direct sum \eqref{addsum}. 
This holds true if $\H\oplus\H'$ is factorial by Theorem \ref{th1}, and also holds true if the direct sum is relative to a central projection of $\M_1$ or of $\M_2$ by definition. Hence $(c)$ always holds because every direct sum is obtained by iteration of these two cases: let $\H = \oplus_{i,j}\H_{ij}$ and $\H' = \oplus_{i,j}\H'_{ij}$ be canonical central decompositions; 
thus $\H\oplus\H' = \oplus_{i,j}(\H_{ij}\oplus\H'_{ij}) = \oplus_{i,j}(\H\oplus\H')_{ij}$ is the canonical central decomposition of $\H\oplus\H'$. Then
\begin{multline*}
U^\H \oplus U^{\H'} = U^{\oplus_{ij}\H_{ij}}\oplus U^{\oplus_{i,j}\H'_{ij}}  = \bigoplus_{i,j}U^{\H_{ij}}\oplus\bigoplus_{i,j} U^{\H'_{ij}}
 = \bigoplus_{i,j}(U^{\H_{ij}}\oplus U^{\H'_{ij}}) 
 \\
 = \bigoplus_{i,j}U^{\H_{ij}\oplus \H'_{ij}} 
  = \bigoplus_{i,j}U^{({\H\oplus \H}')_{ij}} 
 = U^{\H\oplus\H'} \ .
\end{multline*}
\end{proof}
Of course, eq. \eqref{modU} is equivalent to the requirement that Ad$U^{\H}(\f|\psi)$ satisfies the KMS condition on $\ell_\H(\M_1)$ and on $r_\H(\M_2)$. 

We shall further examine the modular and the physical Hamiltonians in Section \ref{QC}. In a first instance, the reader might prefer to use only the scalar dimension and deal here just with the factorial case.

It remains to discuss the uniqueness of $U^\H$. Suppose $V^\H(\f_1|\f_2)$ is another one parameter unitary group on $\H$ that depends naturally on $\H, \f_1 , \f_2$ as in Theorem \ref{th2}. By property $(c)$, we have $V_t^\H(\f_1|\f_2) = U_t^\H(\f_1|\f_2)z_\H(t)$ with $z_\H\equiv z_\H(\f_1|\f_2)$ a one parameter unitary group on $\H$, and by property \eqref{modU} $z_\H(t)\in\ell_\H(\M_1)'\cap r_\H(\M_2)'$, indeed $z_\H(t)$ belongs to the center of $\ell_\H(\M_1)'\cap r_\H(\M_2)'$ by property $(c)$. 

Clearly, 
\ben\label{z}
z_{\H\otimes\K} = z_\H\otimes z_\K, \quad z_{\H\oplus\H'} = z_\H\oplus z_{\H'}\quad {\rm and} \quad
z_{\bar\H} = \overline{z_\H} \ .
\een 
In particular, if $\H$ is irreducible, then $z(t)\in\mathbb C$, $|z(t)| = 1$. 
In general, since $\H$ has finite index, $\ell_\H(\M_1)'\cap r_\H(\M_2)'$ is finite dimensional, so $\H$ is direct sum of irreducible bimodules, hence $z_\H$ is determined by its value on irreducible bimodules. The generator $\mu_\H$ of $z_\H$, i.e. $z_\H(t) = e^{it\mu_\H}$, is called the {\it chemical potential} of $V_\H$. Obviously, $\mu_{\bar\H} = -\mu_\H$, with the transpose
identification of the center of
$\ell_\H(\M_1)'\cap r_\H(\M_2)'$ with the center of $\ell_{\bar\H}(\M_2)'\cap r_{\bar\H}(\M_1)'$, $T\leftrightarrow \bar T^*$.  
We rephrase these comments in the following proposition. 
\begin{proposition}\label{unique}
In Theorem \ref{th2}, $U_\H$ is characterised by the properties \eqref{modU}, $(a)$, $(b)$, $(c)$ and the symmetry:
\[
\overline{K_\H - \log \Delta_{\H}} = K_{\bar\H} - \log \Delta_{\bar \H} \ .
\]
\end{proposition}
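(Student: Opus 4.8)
The plan is to follow the discussion preceding the statement, turning the heuristic remarks into verifications. Suppose $V^\H(\f_1|\f_2)$ is a second assignment of one parameter unitary groups satisfying \eqref{modU} and $(a)$--$(c)$ of Theorem \ref{th2}, together with the stated symmetry. First I would form the comparison cocycle $z_\H(t) \equiv U_{-t}^\H(\f_1|\f_2)\,V_t^\H(\f_1|\f_2)$. Because $U^\H$ and $V^\H$ both implement the same modular flow \eqref{modU} on $\ell_\H(\M_1)$ and $r_\H(\M_2)$, the operator $z_\H(t)$ commutes with $\ell_\H(\M_1)$ and with $r_\H(\M_2)$, so $z_\H(t)\in \ell_\H(\M_1)'\cap r_\H(\M_2)'$. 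Applying property $(c)$ to both $U$ and $V$ gives $T\,z_\H(t) = z_{\H'}(t)\,T$ for every intertwiner $T\colon\H\to\H'$; specialising to self-intertwiners $T\in\ell_\H(\M_1)'\cap r_\H(\M_2)'$ shows that $z_\H(t)$ lies in the \emph{center} of this relative commutant. Since $\H$ has finite index, that center is finite dimensional, and using \eqref{frc} (which says $\Delta_\H^{it}D_\H^{it}$ acts trivially on $\ell_\H(\M_1)'\cap r_\H(\M_2)'$) one checks the cocycle identity $z_\H(s+t)=z_\H(s)z_\H(t)$; hence $z_\H(t)=e^{it\mu_\H}$ for a selfadjoint central element $\mu_\H$, the chemical potential of $V$, and $V_t^\H = U_t^\H e^{it\mu_\H}$.

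Next I would record the compatibility relations \eqref{z}. Property $(a)$ for $U$ and $V$ and the factorisation $z_{\H\otimes\K}(t)=U_{-t}^{\H\otimes\K}V_t^{\H\otimes\K}$ give $z_{\H\otimes\K}=z_\H\otimes z_\K$; the additivity \eqref{addsum} gives $z_{\H\oplus\H'}=z_\H\oplus z_{\H'}$; and property $(b)$ together with multiplicativity of the conjugation of intertwiners gives $z_{\bar\H}=\overline{z_\H}$. Reading the last relation at the level of generators, and recalling that conjugation is anti-linear, yields $\mu_{\bar\H}=-\overline{\mu_\H}$, i.e. $\mu_{\bar\H}=-\mu_\H$ under the transpose identification $T\leftrightarrow\bar T^{*}$ of the two centers.

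It remains to feed in the symmetry. For $U$ itself one has $K_\H(\f_1|\f_2)-\log\Delta_\H(\f_1|\f_2)=\log D_\H$, and I would check that $\overline{D_\H}=D_{\bar\H}$ — the conjugation symmetry of the matrix dimension, which reduces to the identities $\overline{\ell_\H(p_i)}=r_{\bar\H}(p_i)$ and $\overline{r_\H(q_j)}=\ell_{\bar\H}(q_j)$ coming from \eqref{conj} together with the transposition $d_{ji}$ of the dimension matrix — so that, since conjugation commutes with the functional calculus, $\overline{\log D_\H}=\log D_{\bar\H}$ and $U$ satisfies the symmetry. For the general natural $V$ the generator is $K_\H^{V}=\log\Delta_\H+\log D_\H+\mu_\H$, whence $K_\H^{V}-\log\Delta_\H=\log D_\H+\mu_\H$; imposing the symmetry on $V$ and subtracting the identity just established for $U$ leaves $\overline{\mu_\H}=\mu_{\bar\H}$. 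Comparing this with $\mu_{\bar\H}=-\overline{\mu_\H}$ from the previous paragraph forces $\overline{\mu_\H}=0$, hence $\mu_\H=0$; therefore $z_\H(t)\equiv 1$ and $V^\H=U^\H$, which is the claimed uniqueness.

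The individual steps are short, so the bulk of the work is organisational, and I expect the main obstacle to be twofold: verifying that $z_\H$ is genuinely a one parameter group (which hinges on \eqref{frc}, i.e. that the modular-dimension unitaries $\Delta_\H^{it}D_\H^{it}$ act trivially on the relative commutant), and nailing down the conjugation symmetry $\overline{D_\H}=D_{\bar\H}$ of the matrix dimension along with the careful bookkeeping of the transpose identification $T\leftrightarrow\bar T^{*}$ of centers. Conceptually, however, the entire argument is the single observation that naturality forces the chemical potential to be \emph{anti}-symmetric under conjugation, while the extra symmetry forces it \emph{symmetric}, so it must vanish.
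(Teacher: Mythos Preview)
Your proposal is correct and follows essentially the same approach as the paper. The paper gives no separate proof of the proposition; it simply says ``We rephrase these comments in the following proposition'', referring to the discussion immediately preceding it, and your argument is precisely a fleshed-out version of those comments --- forming $z_\H(t)=U_{-t}^\H V_t^\H$, placing it in the center of the relative commutant via \eqref{modU} and $(c)$, recording the relations \eqref{z} and the anti-symmetry $\mu_{\bar\H}=-\mu_\H$, and then using the extra symmetry together with $\overline{D_\H}=D_{\bar\H}$ to force $\mu_\H=0$. If anything you supply more detail than the paper, which does not spell out the final implication from the symmetry hypothesis to the vanishing of the chemical potential.
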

\noindent
Possibly, a non trivial chemical potential compatible with the tensor categorical structure of all finite index bimodules may not exist. It however represents an important quantity within specific tensor categories, see \cite{L01}.

\subsection{Completely positive maps and dilations}
Let $\N$ and $\M$ be von Neumann algebras and $\a : \N \to \M$ a completely positive, normal, unital map. Thus $\a$ is a linear map from $\N$ to $\M$ which is normal (equivalently, continuous in the ultra-weak topology), preserving the unity, such that the natural map $\a\bigotimes{\rm id}_k : \N\bigotimes {\rm Mat}_k(\mathbb C) \to \M\bigotimes {\rm Mat}_k(\mathbb C)$ is positive for every $k\in\mathbb N$. It follows that $||\a||= 1$ and 
$
\a(n^* n) \geq \a(n)^*\a(n)$, $ n\in\N$.
\begin{lemma}\label{pp}
Let $\a : \N\to \M $ be a completely positive, normal, unital map as above and $\F$ a von Neumann algebra. 
The map $\a\bigotimes {\rm id} : \N\odot\F \to \M\odot\F$ extends to a positive, normal map $ \N\bigotimes\F \to \M\bigotimes\F$. 
\end{lemma}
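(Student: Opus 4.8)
The plan is to realise the extension of $\a\odot{\rm id}$ as a compression of an honest normal $*$-representation, using Stinespring's dilation theorem; positivity and normality then come for free, and the only real work is to check that the range lands in $\M\bigotimes\F$.

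First I would fix faithful normal representations, letting $\M$ act on a Hilbert space $\H$ and $\F$ on a Hilbert space $\L$, so that $\M\bigotimes\F\subset B(\H\otimes\L)$. Regarding $\a$ as a normal, unital, completely positive map $\N\to B(\H)$, I would invoke the normal version of Stinespring's dilation theorem: there are a Hilbert space $\K$, a normal representation $\pi:\N\to B(\K)$, and an isometry $V:\H\to\K$ (an isometry since $\a(1)=V^*\pi(1)V=V^*V=1$) with $\a(n)=V^*\pi(n)V$ for all $n\in\N$. The tensor product $\pi\bigotimes{\rm id}_\F$ is then a normal representation of $\N\bigotimes\F$ on $\K\otimes\L$ (tensoring a normal homomorphism with the identity of $\F$ preserves normality), and $W\equiv V\otimes 1_\L$ is an isometry of $\H\otimes\L$ into $\K\otimes\L$.

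I would then define
\[
\b(X)\equiv W^*(\pi\otimes{\rm id}_\F)(X)\,W\ , \qquad X\in\N\bigotimes\F\ .
\]
On elementary tensors $\b(n\otimes f)=V^*\pi(n)V\otimes f=\a(n)\otimes f$, so $\b$ agrees with $\a\odot{\rm id}$ on $\N\odot\F$. Being the compression of a $*$-representation by a fixed operator, $\b$ is completely positive, hence positive; and being the composition of the normal map $\pi\otimes{\rm id}_\F$ with the ultraweakly continuous map $Y\mapsto W^*YW$, it is normal.

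The one step requiring care is that $\b$ takes values in $\M\bigotimes\F$ and not merely in $B(\H\otimes\L)$. For this I would use that $\b(n\otimes f)=\a(n)\otimes f\in\M\bigotimes\F$, that $\N\odot\F$ is ultraweakly dense in $\N\bigotimes\F$, and that $\b$ is ultraweakly continuous; since $\M\bigotimes\F$ is ultraweakly closed, $\b(\N\bigotimes\F)\subseteq\M\bigotimes\F$ follows, giving the desired positive normal extension $\a\bigotimes{\rm id}:\N\bigotimes\F\to\M\bigotimes\F$. The main obstacle is bookkeeping rather than conceptual: one must make sure the Stinespring dilation can be chosen normal and that $\pi\otimes{\rm id}_\F$ genuinely extends to a normal representation of the von Neumann tensor product, both of which are standard facts about normal completely positive maps and about spatial tensor products of normal representations.
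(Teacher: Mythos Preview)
Your proof is correct, and it takes a genuinely different route from the paper's.

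The paper first reduces to the case $\F=B(\K)$, then argues positivity and normality separately: for normality, it checks directly that the transpose of $\a\bigotimes{\rm id}$ carries elementary predual functionals $\f\bigotimes\psi$ to normal functionals $\f\cdot\a\bigotimes\psi$, hence maps a total set into the predual; once normality is in hand, positivity is obtained by approximating $\F$ from within by $\bigcup_n e_n\F e_n$ with $e_n$ finite-rank projections, where complete positivity of $\a$ gives positivity of $\a\bigotimes{\rm id}$ on each $\N\bigotimes e_n\F e_n\cong\N\bigotimes{\rm Mat}_{k_n}(\mathbb C)$.

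Your approach instead packages everything into a single structural step: realise $\a$ via a normal Stinespring dilation and compress $\pi\otimes{\rm id}_\F$. This is cleaner and in fact yields \emph{complete} positivity of the extension, not merely positivity, at the cost of importing two black boxes (the normal Stinespring theorem and normality of the spatial tensor product of normal representations). The paper's argument is more hands-on and self-contained, needing only the definition of complete positivity and an elementary predual computation; yours trades that explicitness for a shorter, more conceptual proof.
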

\begin{proof}
Since $\F$ is contained in $B(\K)$ with $\K$ the underlying Hilbert space of $\F$, we may assume that $\F = B(\K)$. We need to show that $\a\bigotimes {\rm id}$ is positive and normal. 
With $e_n \in\F$ a sequence of finite rank projections increasing to 1, clearly $\a\bigotimes {\rm id}$ is positive on
$\cup_n \big(\N\bigotimes e_n\F e_n\big)$ because $\a$ is completely positive. So, since $\cup_n e_n\F e_n$ is ultra-weakly dense in $\F$, it suffices to show that $\a\bigotimes {\rm id}$ is normal. Now, if $\f$ and $\psi$ belongs to the predual of $\N$ and $\F$, the linear functional $\f\bigotimes\psi \cdot \a\bigotimes {\rm id} = \f\cdot\a\bigotimes\psi$ belongs to the predual of $\M\bigotimes\F$. Thus the transpose of $\a\bigotimes {\rm id}$ (as linear bounded operator) maps a total set of normal linear functionals into normal linear functionals, so it is normal. 
\end{proof}
Let $\M$ be a von Neumann algebra and $\f$ a normal faithful state of $\M$. Consider the identity  $\M - \M$ bimodule  $L^2(\M)$ and $\xi_\f\in L^2(\M)_+$ the vector representative of $\f$.
 The bilinear form on $\M$ associated with $\f$ is 
\[
\langle m_1 , m_2\rangle_\f \equiv (\xi_\f, m_1 \xi_\f m_2) = (\xi_\f , m_1 Jm^*_2 J\xi_\f )\ ,
\]
with $J$ the modular conjugation in $L^2(\M)$. 
\begin{proposition}\label{cpbi}
Let $\a:\N\to\M$ be a normal, unital, completely positive map as above and $\f$ a faithful normal state of $\M$.
There exists an $\N - \M$ bimodule $\H_\a$ and a unit vector $\xi\in \H_\a$ such that
\begin{equation}\label{sp}
(\xi, \ell_{\H_\a}(n)r_{\H_\a}(m)\xi) = \langle \a(n), m\rangle_\f \ ,
\end{equation}
and $\xi$ is cyclic ${\H_\a}$ (i.e. for $\ell_{\H_\a}(\N)\vee r_{\H_\a}(\M)$).

Let $\K$ be an $\N-\M$ bimodule with cyclic vector $\eta\in\K$. If
\[
\langle \a(n), m\rangle_\f = (\eta, \ell_\K(n) r_\K(m)\eta)\ ,
\]
there exists an $\N -\M$ bimodule unitary equivalence $U:\K\to\H_\a$ such that $U\eta =\xi$. 
\end{proposition}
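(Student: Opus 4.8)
The plan is to carry out a Stinespring/GNS type construction directly on the algebraic tensor product $\N\odot\M$, mimicking the inner product that the desired relations force. If $\H_\a$ is to be an $\N-\M$ bimodule with cyclic vector $\xi$ satisfying \eqref{sp}, then, since $\ell_{\H_\a}$ is a representation, $r_{\H_\a}$ a $*$-anti-representation, and the two commute, the inner products among the spanning vectors $\ell_{\H_\a}(n)r_{\H_\a}(m)\xi$ are completely determined:
\[
\big(\ell_{\H_\a}(n_1)r_{\H_\a}(m_1)\xi,\ \ell_{\H_\a}(n_2)r_{\H_\a}(m_2)\xi\big) = \big(\xi,\ \ell_{\H_\a}(n_1^*n_2)r_{\H_\a}(m_2 m_1^*)\xi\big) = \langle\a(n_1^*n_2), m_2 m_1^*\rangle_\f .
\]
So I would define on $\N\odot\M$ the sesquilinear form $\langle n_1\otimes m_1, n_2\otimes m_2\rangle \equiv \langle\a(n_1^*n_2), m_2 m_1^*\rangle_\f$ and build $\H_\a$ as its separated completion.

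The first key step is positivity of this form, and this is exactly where complete positivity of $\a$ enters. Writing $\langle\a(n_1^*n_2),m_2m_1^*\rangle_\f = (\xi_\f,\a(n_1^*n_2)J(m_2m_1^*)^*J\xi_\f)$ on $L^2(\M)$ and setting $\zeta_k \equiv Jm_k^*J\,\xi_\f$ (the right action of $m_k$ applied to $\xi_\f$), a finite vector $\sum_k n_k\otimes m_k$ has squared length $\sum_{k,j}(\zeta_k,\a(n_k^*n_j)\zeta_j)$, because the right action on $L^2(\M)$ commutes with left multiplication by $\a(\,\cdot\,)$. Complete positivity of $\a$ is precisely the statement that the matrix $[\a(n_k^*n_j)]$ is positive in $\M\bigotimes{\rm Mat}(\mathbb C)$, so this sum is nonnegative and the form is positive semidefinite. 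Passing to the quotient by the null space and completing yields $\H_\a$, with $\xi$ the class of $1\otimes1$.

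Next I would set $\ell_{\H_\a}(n_0)(n\otimes m)=(n_0 n)\otimes m$ and $r_{\H_\a}(m_0)(n\otimes m)=n\otimes(m m_0)$; these manifestly commute and make $\ell$ a representation and $r$ a $*$-anti-representation. The point I expect to cost the most care is boundedness (hence well-definedness on the quotient and extendability to $\H_\a$). For the left action this is the standard Kadison--Schwarz estimate, using the operator inequality $[\a(n_k^*n_0^*n_0n_j)]\le\|n_0\|^2[\a(n_k^*n_j)]$, again from complete positivity. For the right action the clean argument is that applying $r_{\H_\a}(m_0)$ replaces the vector $(\zeta_1,\dots,\zeta_n)$ by $R(\zeta_1,\dots,\zeta_n)$, where $R$ is the diagonal right multiplication by $m_0$ on $L^2(\M)^{\oplus n}$; since $R$ commutes with the positive matrix $A=[\a(n_k^*n_j)]$ and $R^*R\le\|m_0\|^2$, one gets $R^*AR=AR^*R\le\|m_0\|^2A$, whence $\|r_{\H_\a}(m_0)v\|\le\|m_0\|\,\|v\|$. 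Normality of both actions then follows routinely, since their matrix coefficients are, up to the normal state and the adjoint operation, of the form $n_0\mapsto\f\cdot\a(\,\cdot\,n_0\,\cdot\,)$ and $m_0\mapsto(\xi_\f,\,\cdot\,\xi_\f\,\cdot\,)$, both ultraweakly continuous because $\a$ and $\f$ are normal.

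Existence then follows at once: $\ell_{\H_\a}(n)r_{\H_\a}(m)\xi$ is the class of $n\otimes m$, so $\xi$ is cyclic and $(\xi,\ell_{\H_\a}(n)r_{\H_\a}(m)\xi)=\langle1\otimes1,n\otimes m\rangle=\langle\a(n),m\rangle_\f$, which is \eqref{sp}. For uniqueness, given $(\K,\eta)$ with $\langle\a(n),m\rangle_\f=(\eta,\ell_\K(n)r_\K(m)\eta)$, the computation in the first display shows that the inner products among the vectors $\ell_\K(n)r_\K(m)\eta$ coincide with those among $\ell_{\H_\a}(n)r_{\H_\a}(m)\xi$. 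Hence $U:\ell_\K(n)r_\K(m)\eta\mapsto\ell_{\H_\a}(n)r_{\H_\a}(m)\xi$ is a well-defined isometry on a dense domain, surjective by cyclicity of $\xi$, intertwining the two bimodule structures and sending $\eta$ to $\xi$; it extends to the desired unitary equivalence $U:\K\to\H_\a$.
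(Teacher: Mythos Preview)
Your proof is correct and follows the same GNS-type strategy as the paper: build a positive functional/form on $\N\odot\M^o$ (equivalently $\N\odot\M$), take the GNS Hilbert space, and read off the bimodule actions and cyclic vector; uniqueness is then GNS uniqueness.

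The only notable difference is how positivity is established. The paper observes that the linear functional $\tilde\f(n\otimes m^o)=\langle\a(n),m\rangle_\f$ factors as $\f_{\xi_\f}\cdot\pi\cdot(\a\bigotimes{\rm id})$, where $\pi$ is the representation of $\M\odot\M^o$ on $L^2(\M)$ giving the identity bimodule, and then invokes the preceding lemma (that $\a\bigotimes{\rm id}$ extends to a positive normal map $\N\bigotimes\F\to\M\bigotimes\F$) to conclude positivity and normality in one stroke. You instead verify positivity by hand via the matrix inequality $[\a(n_k^*n_j)]\geq 0$ and then separately check boundedness and normality of $\ell$ and $r$. Your route is more self-contained and makes the role of complete positivity explicit at each step; the paper's route is shorter and exploits the bimodule/representation dictionary it has already set up. Either way the content is the same GNS construction.
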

\begin{proof}
The map
\[
n, m^o \in \N \times \M^o \mapsto  \langle \a(n), m \rangle_\f 
\]
is bilinear, so it gives a linear map $\tilde\f$ on $\N\odot\M^o$
\begin{equation}\label{tilde}
\tilde\f(n {\textstyle\bigotimes} m^o) = \langle \a(n), m \rangle_\f  \ .
\end{equation}
Now
\[
\tilde\f = \f_\xi\cdot\pi\cdot \a{\textstyle\bigotimes} {\rm id}
\]
on $\N\odot\M$, where $\pi$ is the representation of $\M\odot\M^o$ giving the identity $\M-\M$ bimodule $L^2(\M)$ and $\f_\xi = (\xi,  \cdot\ \xi)$ with $\xi$ the vector representative of $\f$ in $L^2_+(\M)$; thus $\tilde\f$ is positive being the composition of positive maps by Lemma \ref{pp}. 

Let $\H_\a$ be the $\N-\M$ bimodule given by the GNS representation of $\tilde\f$. Clearly eq. \eqref{sp} holds with $\xi$ the GNS vector of $\tilde\f$.

The uniqueness of $\H_\a$ follows by the uniqueness of the GNS representation. 
\end{proof}
Let $\N$, $\M_1$ and $\M_2$ be von Neumann algebras, $\H$ an $\M_1-\M_2$ bimodule and $\r:\N\to\M_1$  a normal homomorphism. 
The $\N-\M_2$ bimodule ${}_\r \H$ is defined as follows: ${}_\r \H = \H$ as a Hilbert spaces, the right actions $r_{\!_\r \H}$ and $r_\H$
of $\M_2$ on $\H$  are the same, while the left action $\ell_{_\r \H}$ of $\N$ on $\H$ is the twist by $\r$ of the left action $\ell_\H$ of $\M_1$, namely $\ell_\r \equiv \ell_{_\r \H} \cdot \r$.

Similarly, if $\theta : \N\to \M_2$ is a normal homomorphism, we can define the $\M_1 -\N$ bimodule $\H_\theta $ by twisting the right $\M$ action on $\H$ by $\theta$, and also ${}_{\r}\H_\theta$ is the $\N-\N$ bimodule where
both left and the right actions on $\H$ are twisted.

An important special case is given by twisting the identity bimodule $L^2(\M)$ of a von Neumann algebra $\M$ by a homomorphis $\r:\N\to\M$. For the
$\N-\M$ bimodule ${}_\r L^2(\M)$, the left action $\ell_\r$ of $\N$ and the right action $r_\r$ of $\M$  on $L^2(\M)$  are given by
\[
\ell_\r(n)r_\r(m) \eta  = \r(n)\eta m \ , \quad \eta\in L^2(\M),\ n\in \N,\ m\in \M \ ,
\]
where the right hand side carries the standard actions of $\M$ on on $L^2(\M)$. 

It can be easily seen that ${}_{\r_1} L^2(\M)$ is unitarily equivalent to $_{\r_2}L^2(\M)$ iff there exists a unitary $u\in\M$ such that $\r_2(n) = u\r_1(n) u^*$. 

\begin{corollary}\label{l2r}
Let $\N$, $\M$ be properly infinite von Neumann algebras, $\H$ an $\N-\M$ bimodule.
Then $\H$ is unitarily equivalent to $_{\r}L^2(\M)$ for some homomorphism $\r:\N\to\M$.

Similarly, $\H$ is unitarily equivalent to $ L_\theta^2(\N)$ for some homomorphism $\th :\M\to \N$.
\end{corollary}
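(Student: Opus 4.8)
The plan is to prove only the first assertion and then obtain the second by conjugation: the conjugate bimodule $\bar\H$ is an $\M-\N$ bimodule, and the construction $\H\mapsto {}_\rho L^2(\M)$ behaves symmetrically under passing to conjugates, so applying the first statement to $\bar\H$ yields the second with some $\theta:\M\to\N$. Throughout I would first forget the left action and regard $\H$ merely as a normal right $\M$-module, i.e. $r_\H$ as a normal anti-representation of $\M$ on the separable Hilbert space $\H$. The whole point is to show that this right module is the standard one.

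Concretely, I would aim to produce a unitary $W:\H\to L^2(\M)$ with $W r_\H(m) W^* = J m^* J$ for all $m\in\M$, where $J$ is the modular conjugation of the standard form (so that $Wr_\H(\cdot)W^*$ becomes the standard right action). First I would record the two proper-infiniteness inputs. On one side, $r_\H(\M)'\supseteq \ell_\H(\N)$, and since $\N$ is properly infinite so is its normal image $\ell_\H(\N)$; because a von Neumann algebra containing a properly infinite subalgebra with the same unit is itself properly infinite, $r_\H(\M)'$ is properly infinite. On the other side $\M$, hence $r_\H(\M)$, is properly infinite. (Here I note that $r_\H$ must be faithful for the conclusion to hold, since the right action on ${}_\rho L^2(\M)$ is the standard, faithful one; if it is not, one first replaces $\M$ by the central summand supporting $r_\H$.) With both $r_\H(\M)$ and its commutant properly infinite and the ambient space separable, I would invoke the spatial uniqueness of the standard representation of a properly infinite von Neumann algebra to build $W$: proper infiniteness lets one absorb infinite ampliations, $L^2(\M)\otimes \ell^2\cong L^2(\M)$ as right $\M$-modules, and a comparison-of-projections argument using the isometries witnessing proper infiniteness matches multiplicities and identifies $\H$ with $L^2(\M)$ as right $\M$-modules.

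Granting $W$, the left action transports for free. For each $n\in\N$ the operator $W\ell_\H(n)W^*$ commutes with $W r_\H(\M)W^* = J\M J$, hence lies in $(J\M J)' = \lambda(\M)$, the standard left action of $\M$ on $L^2(\M)$. Writing $W\ell_\H(n)W^* = \lambda(\rho(n))$ defines $\rho:\N\to\M$; since $\ell_\H$ is a normal unital $*$-representation and $\lambda$ is a normal $*$-isomorphism of $\M$ onto $(J\M J)'$, the map $\rho$ is a normal unital homomorphism. By construction $W$ intertwines $r_\H$ with the standard right action $r_\rho$ and $\ell_\H$ with $n\mapsto \lambda(\rho(n))=\ell_\rho(n)$, which are exactly the left and right actions of ${}_\rho L^2(\M)$. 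Thus $W:\H\to {}_\rho L^2(\M)$ is a bimodule unitary equivalence.

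The main obstacle is squarely the middle step: the spatial identification of the right $\M$-module $\H$ with $L^2(\M)$. This is precisely where both hypotheses are used in an essential way — proper infiniteness of $\M$ makes $r_\H(\M)$ absorb its own infinite amplification, while proper infiniteness of $\N$ forces the commutant $r_\H(\M)'$ to be large enough (properly infinite) that no finite relative multiplicities can obstruct the equivalence. I would set this up via Dixmier's comparison theory together with the explicit halving isometries provided by proper infiniteness, reducing the general case to matching the two representations after an $\ell^2$-ampliation that both sides absorb.
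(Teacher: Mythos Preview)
Your proposal is correct and follows essentially the same route as the paper: observe that both $r_\H(\M)$ and its commutant $r_\H(\M)'\supset\ell_\H(\N)$ are properly infinite, invoke the spatial uniqueness of the right $\M$-module structure to identify $\H$ with $L^2(\M)$, and read off $\rho$ from the transported left action landing in $(J\M J)'=\M$. The paper's proof is a terse two-line version of exactly this; your added remarks on faithfulness of $r_\H$ and the conjugation shortcut for the second assertion are sound refinements, though the paper simply says ``similarly proven'' for the latter.
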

\begin{proof}
Since $r(\M)$ is properly infinite, with properly infinite commutant as $r(\M)'\supset \ell(\N)$, we may choose an identification $\H$ and $L^2(\M)$ as a right $\M$-modules. Then the left action $\ell$ of $\N$ is our homomorphism $\r:\N\to r(\M)' \simeq \M$. 
The second part is similarly proven. 
\end{proof}
With $\r:\N\to\M$ a homomorphism and $_{\r}L^2(\M)$ as above, the conjugate $\overline{_{\r}L^2(\M)}$ is then a $\M-\N$ bimodule that is unitarily equivalent to 
$_{\bar\r}L^2(\N)$ for some homomorphism $\bar\r: \M\to\N$ by Corollary \ref{l2r}. $\bar\r$ is called the {\it conjugate} homomorphism of $\r$. 

\begin{proposition}\label{conjendo}
$_\r L^2(\M) \simeq \,  L_{\bar\r}^2(\N)$. More generally, we have 
\[
  _{\r \cdot\r_1}L_{\r_2}^2(\M) \simeq \,
_{\r_1}L^2_{\bar\r\cdot\r_2}(\N)  \ ,
\]
where $\r_1 :\M_1\to\N$, $\r_2 :\M_2\to\M$ and $\r : \N\to \M$ are homomorphisms.
\end{proposition}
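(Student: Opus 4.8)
The plan is to reduce everything to the first assertion ${}_\r L^2(\M)\simeq L^2_{\bar\r}(\N)$ and to obtain that from the very definition of $\bar\r$ by taking conjugates.

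First I would record an auxiliary identity: for any homomorphism $\theta:A\to B$ of von Neumann algebras,
\[
\overline{{}_\theta L^2(B)}\simeq L^2_\theta(B)
\]
as $B-A$ bimodules. This is pure conjugation bookkeeping. On $L^2(B)$ let $J$ be the modular conjugation; then $W\bar\eta=J\eta$ is a linear unitary $\overline{L^2(B)}\to L^2(B)$. The conjugate actions of ${}_\theta L^2(B)$ are read off from \eqref{conj}: since the right action of ${}_\theta L^2(B)$ is standard while its left action is $\ell(\theta(\cdot))$, the conjugate left $B$-action is $\bar\eta\mapsto\overline{JbJ\eta}$ and the conjugate right $A$-action is $\bar\eta\mapsto\overline{\theta(a^*)\eta}$. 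Pushing these through $W$ and using $r_{L^2(B)}(b)=Jb^*J$ and $J^2=1$, one checks directly that $W$ intertwines the conjugate left action with the standard left $B$-action, and the conjugate right action with $r_{L^2(B)}(\theta(a))$, i.e. with the right $A$-action of $L^2_\theta(B)$; hence $W$ is the desired bimodule unitary.

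The first assertion then follows by conjugating the defining relation of $\bar\r$. By definition $\overline{{}_\r L^2(\M)}\simeq{}_{\bar\r}L^2(\N)$. Applying the (involutive) conjugation functor and using $\overline{\bar\H}\simeq\H$ gives ${}_\r L^2(\M)\simeq\overline{{}_{\bar\r}L^2(\N)}$, and the auxiliary identity with $\theta=\bar\r$ rewrites the right-hand side as $L^2_{\bar\r}(\N)$. For the general statement I would use that the twists ${}_\sigma(\cdot)$ and $(\cdot)_\tau$ merely precompose the left and right actions with $\sigma$ and $\tau$. Consequently they compose in the evident order, ${}_{\sigma'}({}_\sigma\H)={}_{\sigma\cdot\sigma'}\H$ and $(\H_\tau)_{\tau'}=\H_{\tau\cdot\tau'}$, they commute with one another, and---this is the point---every bimodule unitary is automatically a unitary of the twisted bimodules, since intertwining the original actions on all algebra elements forces intertwining of their precompositions with $\sigma,\tau$. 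Twisting the first assertion on the left by $\r_1:\M_1\to\N$ and on the right by $\r_2:\M_2\to\M$ therefore sends ${}_\r L^2(\M)$ to ${}_{\r\cdot\r_1}L^2_{\r_2}(\M)$ and $L^2_{\bar\r}(\N)$ to ${}_{\r_1}L^2_{\bar\r\cdot\r_2}(\N)$, which is exactly the asserted equivalence.

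The main obstacle is the bookkeeping in the auxiliary identity: keeping track of which side is twisted, the placement of the conjugations $J$, and the anti-representation convention for the right action, so as to confirm in particular that the right twists compose to $\bar\r\cdot\r_2$ and not to its reverse. Once the orders and conjugations are pinned down, the functoriality of twisting under bimodule unitaries makes the passage from the special to the general case immediate; throughout I work in the setting where $\bar\r$ is defined, reducing to properly infinite algebras by ampliation if necessary (Corollary \ref{l2r}).
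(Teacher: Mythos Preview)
Your argument is correct, but it takes a different route from the paper's own proof.

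The paper proceeds by an explicit implementing unitary: it chooses a unitary $U:L^2(\N)\to L^2(\M)$ with $UnU^*=\r(n)$, cites from \cite{L90} that $\hat U\equiv J_\N U^*J_\M$ implements $\bar\r$, and then computes directly that $U^*$ intertwines the actions of ${}_{\r\r_1}L^2_{\r_2}(\M)$ with those of ${}_{\r_1}L^2_{\bar\r\r_2}(\N)$ in a single chain of equalities. Your approach instead unpacks the \emph{definition} of $\bar\r$ (namely $\overline{{}_\r L^2(\M)}\simeq{}_{\bar\r}L^2(\N)$), combines it with the involutivity of bimodule conjugation and with the auxiliary identity $\overline{{}_\theta L^2(B)}\simeq L^2_\theta(B)$ to obtain the special case, and then derives the general case by the evident functoriality of left/right twisting under bimodule unitaries. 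Your method avoids invoking the explicit formula for $\hat U$ and makes the passage from the special to the general case transparent; the paper's method is more constructive, producing the intertwining unitary $U^*$ in one stroke and handling the general case and the special case simultaneously. Both are complete; your bookkeeping with $J$ and the anti-representation convention is correct, and in particular the right twists do compose as $\bar\r\cdot\r_2$.
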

\begin{proof}
If $U : L^2(\N) \to L^2(\M)$ is a Hilbert space unitary operator implementing $\r$, namely $UnU^* = \r(n)$, $n\in\N$, then $\hat U \equiv J_\N U^* J_\M$ implements $\bar\r$ \cite{L90}. Thus, with the natural left and right actions on $L^2(\M)$ and $L^2(\N)$ and $\eta\in L^2(\M)$, we have
\begin{multline*}
\r \r_1 (m_1)\eta \r_2(m_2) = U\r_1 (m_1)U^*J_\M \r_2(m^*_2)J_\M \eta
\sim \r_1 (m_1)U^*J_\M \r_2(m^*_2)J_\M U\eta \\
= \r_1 (m_1)J_\N\hat U \r_2(m^*_2)\hat U^* J_\N \eta'
= \r_1 (m_1)J_\N \bar\r \r_2(m^*_2) J_\N \eta'
= \r_1 (m_1) \eta'  \bar\r \r_2(m_2) \ ,
\end{multline*}
showing the lemma. 
Here $\sim$ denotes the unitary equivalence given by $U$ and $\eta' = U\eta$. 
\end{proof}
We now define the bimodule {\it tensor product} \cite{C, Sau83}. Let $\M_1 , \M_2 , \M_3$ be properly infinite von Neumann algebras, $\H$ a $\M_1 - \M_2$ bimodule, $\K$ a $\M_2 - \M_3$ bimodule. Then $\H\otimes\K$ will be a $\M_1 - \M_3$ bimodule. We consider indeed the {\it relative tensor product}:
we choose a faithful normal positive linear functional $\f$ on $\M_2$ and define the tensor product $\H\otimes_\f \K$ w.r.t. $\f$, yet $\H\otimes_\f \K$ does not depend on $\f$ up to unitary equivalence. So $\H\otimes\K$ will be defined up to unitary equivalence with $\H\otimes_\f \K$ a canonical representative for $\H\otimes\K$. We usually denote $\H\otimes_\f \K$ simply by $\H\otimes\K$.
 
We may identify $\H$ with  $L^2(\M_2)$ as a right $\M_2$-module 
(canonically by the GNS representation of $\f$) and 
$\K$ with $L^2(\M_2)$ as a left $\M_2$-module (again canonically by the GNS representation of $\f$). By definition,  $\H\otimes\K$ is the Hilbert space
$L^2(\M_2)$ where the left  module $\M_1$ action comes from the left action of $\M_1$ on $\H$ and 
the right module $\M_3$ action comes from the right action of $\M_3$ on $\K$.

In other words, we use $\f$ to make the bimodule identification $\H = {}_{\r_1}L^2(\M_2)$ and 
$\K = L^2_{\r_2}(\M_2)$ with homomorphisms $\r_1 :\M_1\to\M_2$ and $\r_2:\M_3\to \M_2$, and then set
\[
{}_{\r_1} L^2(\M_2)\otimes L_{\r_2}^2(\M_2) = {}_{\r_1}L^2_{\r_2}(\M_2) \ .
\]
The uniqueness of $\H\otimes\K$ follows 
by the uniqueness of the standard form $L^2(\M_2), L^2(\M_2)_+$. 
By Proposition \ref{conjendo}, we have 
\[
L^2_{\bar\r_1}(\M_1)\otimes L^2_{\r_2}(\M_2)  \simeq\  L_{\bar\r_1\r_2}^2(\M_1)\ .
\]
This shows, in particular, that the tensor product is associative, up to unitary equivalence. 

Notice that the distributive law holds: if $e_k$ is a partition of the unity by projections in $\ell_\H(\M_1)'\cap r_\H(\M_2)'$, then
\[
\H\otimes_\f\K = \bigoplus_k (\H_k \otimes_\f \K)
\]
with $\H_k \equiv e_k \H$ (we have $(e_k \H)\otimes\K = (e_k\otimes 1)(\H\otimes\K)$, see below). 

We now define the {\it tensor product of intertwiners}. Let $\H$ and $\H_1$ be $\N-\M$ bimodules with actions $\ell, r$ and $\ell_1, r_1$. Then $T\in {\rm Hom(}\H,\H_1)$ if $T\in B(\H,\H_1)$ and $T$ intertwines $\ell$ with $\ell_1$ and $r$ with $r_1$. Thus, with $\H = {}_\r L^2(\M)$, $\H_1 = {}_{\r_1}L^2(\M)$ and $\r , \r_1 : \N\to \M$ homomorphisms, we have
\[
T\in {\rm Hom}\big({}_\r L^2(\M), {}_{\r_1}L^2(\M)\big) \Longleftrightarrow T\in B\big(L^2(\M)\big), \ T\in\M \ \& \ T\r(n) = {\r_1}(n)T\ , \quad n\in\N \ ,
\] 
where $T\in\M$ means that $T$ belongs to $\M$ acting on the left on $L^2(\M)$. 

Similarly, with $\K = L_{\th}^2(\M)$,
$\K_1 = L_{\th_1}^2(\M)$ $\M-\L$ bimodules and $\th, \th_1: \L \to \M$ homomorphisms of a von Neumann algebra $\L$ into $\M$, we have
\[
S\in {\rm Hom}\big(L_{\th}^2(\M), L_{\th_1}^2(\M)\big) \Longleftrightarrow S\in B\big(L^2(\M)\big), \  S\in\M' \ \& \ S\th(z) = {\th_1}(z)S\ , \ z\in\L \ ,
\]
with $\M$ as above (i.e. $S$ belongs to $\M$ acting on the left on $L^2(\M)$).

Then $T\otimes S\in {\rm Hom}\big(\H\otimes\K, \H_1\otimes\K_1\big)$ is given by 
\[
(T\otimes S)\xi = T\xi S\ , \quad \xi\in L^2(\M) \ ,
\]
(left and right action on $L^2(\M)$), with the Hilbert space bimodule identifications  $\H\otimes\K = {}_{\r}L^2(\M)_\th$ and 
 $\H_1\otimes\K _1= {}_{\r_1}L^2(\M)_{\th_1}$ (thus $\H\otimes\K = \H_1\otimes\K _1= L^2(\M)$ as Hilbert spaces). 

The following proposition shows that the tensor product of intertwiners is equivalent to the DHR notion \cite{DHR}, see \cite{L90}.
\begin{proposition}\label{intert}
Let $\r, \r' : \M_1 \to\M_2$, $\th, \th' : \M_2 \to \M_3$ be normal homomorphisms and
$T\in\ {\rm Hom}\big({}_\r L^2(\M_2), {}_{\r'} L^2(\M_2)\big)$,
$S\in\ {\rm Hom}\big({}_\th L^2(\M_3), {}_{\th'} L^2(\M_3)\big)$ intertwiners, so $T\otimes S \in 
{\rm Hom}\big({}_{\th\r} L^2(\M_3), {}_{\th'\r'} L^2(\M_3)\big)$ and $T\in\M_2$, $S\in\M_3$.
As elements of $\M_3$, we then have
$T{\otimes} S  = S\th(T) = \th'(T)S$.
\end{proposition}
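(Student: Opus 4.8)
The plan is to prove $T\otimes S=\theta'(T)S=S\theta(T)$ by splitting $T\otimes S$ into two elementary whiskerings and reading each off from the unit law for the relative tensor product, so that the modular operator never enters explicitly. First I would record what the hypotheses give: by the intertwiner characterisation recalled above, $T\in\M_2$ satisfies $T\rho(m_1)=\rho'(m_1)T$ for $m_1\in\M_1$, and $S\in\M_3$ satisfies $S\theta(m_2)=\theta'(m_2)S$ for $m_2\in\M_2$. Taking $m_2=T\in\M_2$ in the relation for $S$ immediately yields $S\theta(T)=\theta'(T)S$, so the second asserted equality is free; moreover $\theta'(T)\in\M_3$ intertwines $\theta'\rho$ with $\theta'\rho'$ (apply $\theta'$ to $T\rho=\rho'T$), so $\theta'(T)S$ is a legitimate element of $\Hom\big({}_{\theta\rho}L^2(\M_3),{}_{\theta'\rho'}L^2(\M_3)\big)$. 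It then remains only to prove $T\otimes S=\theta'(T)S$ as elements of $\M_3$.

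The main step is the factorisation $T\otimes S=(T\otimes 1_{\K'})(1_\H\otimes S)$, with $\H={}_\rho L^2(\M_2)$, $\H'={}_{\rho'}L^2(\M_2)$, $\K={}_\theta L^2(\M_3)$, $\K'={}_{\theta'}L^2(\M_3)$, which follows from the bifunctoriality $(A\otimes B)(C\otimes D)=(AC)\otimes(BD)$ of the tensor product of intertwiners. Each factor I would evaluate through the unit law $L^2(\M_2)\otimes_{\M_2}\K\cong\K$ and $L^2(\M_2)\otimes_{\M_2}\K'\cong\K'$ (valid because $L^2(\M_2)$ is the identity bimodule), which realise $\H\otimes\K={}_\rho\K={}_{\theta\rho}L^2(\M_3)$, $\H\otimes\K'={}_{\theta'\rho}L^2(\M_3)$ and $\H'\otimes\K'={}_{\theta'\rho'}L^2(\M_3)$. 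Naturality of this isomorphism in the second variable identifies $1_\H\otimes S$ with $S$ itself, that is, with $S\in\M_3$; the left twist by $\rho$ does not affect the $\M_2-\M_3$ structure and so is harmless. For the other factor, the operator $T$ on $\H$ is exactly left multiplication $\lambda(T)$ on $L^2(\M_2)$, i.e. the value at $T$ of the left $\M_2$-action on the identity bimodule; hence $T\otimes 1_{\K'}$ is the value at $T$ of the left $\M_2$-action on $L^2(\M_2)\otimes_{\M_2}\K'$, and since the unit isomorphism onto $\K'={}_{\theta'}L^2(\M_3)$ is a bimodule isomorphism it must carry this to the left $\M_2$-action of $\K'$ evaluated at $T$, namely to $\theta'(T)$. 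Composing the two factors gives $T\otimes S=\theta'(T)S=S\theta(T)$.

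The only delicate point — and the reason I route the argument through the unit law rather than computing on elementary tensors $\xi_\phi\otimes\eta$ — is the whiskering $T\otimes 1_{\K'}=\theta'(T)$. A priori this operator is known only to lie in $\Hom\big({}_{\theta'\rho}L^2(\M_3),{}_{\theta'\rho'}L^2(\M_3)\big)$, and the intertwining property alone does not determine it when that space has dimension greater than one. What pins it down is recognising it as the image of left $\M_2$-multiplication under a bimodule isomorphism, together with the fact that a bimodule isomorphism intertwines the left $\M_2$-actions. The explicit unitary implementing $L^2(\M_2)\otimes_{\M_2}\K'\cong\K'$ does carry a modular twist coming from the state chosen on $\M_2$ (it is not the naive map $\eta\mapsto\xi_\phi\otimes\eta$), but its precise shape never enters the computation: the argument uses only that such a bimodular unitary exists, which is built into the definition of the relative tensor product up to unitary equivalence. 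This is precisely the bimodule incarnation of the DHR interchange law, in agreement with the endomorphism calculus of \cite{L90}.
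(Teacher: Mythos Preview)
Your argument is correct and takes a genuinely different route from the paper. The paper's proof is a one-line direct computation: it fixes a unitary $V:L^2(\M_2)\to L^2(\M_3)$ implementing $\theta$, uses $V^*$ to realise the equivalence ${}_\theta L^2(\M_3)\simeq L^2_{\bar\theta}(\M_2)$, and then reads $T\otimes S$ straight off the definition $(T\otimes S)\xi=T\xi S$ on $L^2(\M_2)$ given just before the proposition, obtaining $V(TV^*SV)V^*=(VTV^*)S$. Your approach is the categorical one: you factor $T\otimes S$ by bifunctoriality into the two whiskerings $(T\otimes 1_{\K'})(1_\H\otimes S)$ and identify each through the unit law $L^2(\M_2)\otimes_{\M_2}\K\simeq\K$, invoking naturality in the second variable for $1_\H\otimes S=S$ and bimodularity of the unit isomorphism for $T\otimes 1_{\K'}=\theta'(T)$. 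The paper's computation is shorter and stays close to the concrete Hilbert-space model just set up; yours never writes down an implementing unitary, makes the $2$-categorical content explicit, and would transfer verbatim to any bicategory with unit isomorphisms. Both yield $T\otimes S=S\theta(T)=\theta'(T)S$.
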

\begin{proof}
 Let $V: L^2(\M_2) \to L^2(\M_3)$ be a unitary implementing $\th$. 
Then $V^*$ implements a unitary equivalence ${}_\th L^2(\M_3) \to  L_{\bar\th}^2(\M_2)$ (see Prop. \ref{conjendo}). By definition, we thus have 
$
T\otimes S = V(T V^*SV)V^* = (VT V^*)S = \th(T)S
$.
\end{proof}
We go back to Theorem \ref{th1} for a moment. 
Since $\Delta^{it} \equiv \Delta^{it}_\H(\f_1|\f_2) \otimes \Delta^{it}_{\K}(\f_2|\f_3)$ is a tensor product of intertwiners, 
it follows by property \eqref{modH} that
\[
\Delta^{it}\ell_{\H\otimes\K}(m_1)\Delta^{-it} = \ell_{\H\otimes\K}\big(\s_t^{\f_1}(m_1)\big)\ , \qquad \Delta^{it}r_{\H\otimes\K}(m_3)\Delta^{-it} = r_{\H\otimes\K}\big(\s_t^{\f_3}(m_3)\big)\ .
\]
This is a main implication of $(a)$ in Theorem \ref{th1}. 
\medskip

\noindent
The above definition of tensor product was given for bimodules over properly infinite von Neumann algebras. The general case may be defined, up unitary equivalence, by considering bimodule ampliations. Namely, if $\H$ a is $\M_1 - \M_2$ bimodule and $\K$ a $\M_2 - \M_3$ bimodule,
we consider a type $I_\infty$ factor $\F$ and the external tensor product bimodules $\H\bigotimes L^2(\F)$ and $\K\bigotimes L^2(\F)$, which  are  a $\M_1\bigotimes \F - \M_2\bigotimes \F $ bimodule and a $\M_2\bigotimes \F  - \M_3\bigotimes \F $ bimodule.  Then, by restricting the left and right actions, $(\H\bigotimes L^2(\F)) \otimes (\K\bigotimes L^2(\F))$ is a $\M_1 - \M_3$ bimodule and
$\H\otimes\K $ is defined to be its reduced bimodule by $1\bigotimes e$, with $e$ a minimal projection of $\F$. 

We now give a generalisation of Stinespring dilation theorem, that allows to dilate completely positive, normal maps between von Neumann algebras to normal homomorphisms, and show the uniqueness of the minimal dilation, cf. \cite{C}, see also \cite{BMSS}. 

Let $\a: \N\to \M$ be a normal, completely positive unital map between the von Neumann algebras $\N$, $\M$. A pair $(\r , v)$ with $\r: \N\to \M$ a homomorphism and $v\in \M$ an isometry such that
\[
\a(n) = v^*\r(n) v\ , \quad n\in \N\ .
\]
will be called a {\it dilation pair} for $\a$, and $\r$ a {\it dilation homomorphism}.

With $(\r, v)$ a dilation pair, the subspace  $\r(\N)v\H$ of the underlying Hilbert space $\H$ is clearly both $\r(\N)$-invariant and $\M'$-invariant, thus the projection $e$ onto $\overline{\r(\N)v\H}$ (i.e. the left support of 
$\r(\N)v\H$\,) belongs to $\r(\N)'\cap \M$. We shall say that $(\r, v)$ is {\it minimal} if $e=1$. 
\begin{theorem}\label{dilation}
Let $\a: \N\to \M$ be a normal, completely positive unital map between the von Neumann algebras $\N$, $\M$, with $\N$, $\M$ properly infinite. 
Then there exists a minimal dilation pair $(\r, v)$ for $\a$. 

If $(\r_1, v_1)$ is another minimal dilation pair for $\a$, there exists a unique unitary $u\in \M$ such that 
\begin{equation}\label{unid}
u\r(n) = \r_1(n)u\, ,\quad v_1 = uv\, ,
\end{equation}
for all $n\in\N$. 

If  $(\r_1, v_1)$ is not minimal, then $\r$ is a subsector of $\r_1$ ($_\r L^2(\M)$ is contained in $_{\r_1} L^2(\M)$). 
\end{theorem}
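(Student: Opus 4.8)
The plan is to produce the dilation from the bimodule $\H_\alpha$ attached to $\alpha$ by Proposition \ref{cpbi}, together with its realisation as ${}_\rho L^2(\M)$ furnished by Corollary \ref{l2r}. Fix the faithful normal state $\f$ of $\M$ and let $\xi\in\H_\alpha$ be the cyclic vector with $(\xi,\ell_{\H_\alpha}(n)r_{\H_\alpha}(m)\xi)=\langle\alpha(n),m\rangle_\f$. Setting $n=1$ and using $\alpha(1)=1$, one computes $(\xi,r_{\H_\alpha}(m)\xi)=\langle 1,m\rangle_\f=(\xi_\f,Jm^*J\xi_\f)=\f(m)$, so the right-action state of $\xi$ is exactly $\f$. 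Since $\N,\M$ are properly infinite, Corollary \ref{l2r} identifies $\H_\alpha$ with ${}_\rho L^2(\M)$ for a homomorphism $\rho\colon\N\to\M$, carrying the standard right action of $\M$ and sending $\xi$ to a vector $\eta\in L^2(\M)$ with $(\eta,r(m)\eta)=\f(m)$ for all $m$; thus $\eta$ and $\xi_\f$ induce the same state on $r(\M)=\M'$.

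For existence I would then build $v$ as the partial isometry intertwining these two cyclic vectors for $\M'$. Because $(\eta,y\eta)=(\xi_\f,y\xi_\f)$ for every $y\in\M'$, the map $x'\xi_\f\mapsto x'\eta$ ($x'\in\M'$) is well defined and isometric on the dense set $\M'\xi_\f$; its extension $v$ is an isometry commuting with $\M'$, hence $v\in\M$, and $v\xi_\f=\eta$. Commuting $v$ and $\rho(n)$ past $r(m)\in\M'$ in the identity $(\eta,\rho(n)r(m)\eta)=\langle\alpha(n),m\rangle_\f$, and using that $\xi_\f$ is cyclic and separating for $\M$, gives $v^*\rho(n)v=\alpha(n)$; in particular $v^*v=\alpha(1)=1$. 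Cyclicity of $\eta$ for $\rho(\N)\vee\M'$ forces $\overline{\rho(\N)vL^2(\M)}=L^2(\M)$, that is $e=1$, so $(\rho,v)$ is a minimal dilation pair.

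Uniqueness is the classical Stinespring rigidity argument transported into $\M$. Given a second minimal pair $(\rho_1,v_1)$ I would define $u$ on the dense subspace $\rho(\N)vL^2(\M)$ by $u\,\rho(n)v\zeta=\rho_1(n)v_1\zeta$; both $(\rho(n)v\zeta,\rho(n')v\zeta')$ and $(\rho_1(n)v_1\zeta,\rho_1(n')v_1\zeta')$ collapse to $(\zeta,\alpha(n^*n')\zeta')$, so $u$ is well defined and isometric, and it is onto by minimality of $(\rho_1,v_1)$, hence unitary. Pushing $m'\in\M'$ through $\rho(n),v\in\M$ shows $u$ commutes with $\M'$, so $u\in\M$; the relations $u\rho(n)=\rho_1(n)u$ and $uv=v_1$ hold on the dense domain and determine $u$ there uniquely.

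For a non-minimal pair $(\rho_1,v_1)$, the projection $e_1$ onto $\overline{\rho_1(\N)v_1L^2(\M)}$ lies in $\rho_1(\N)'\cap\M$ and is $\ne 1$; the sub-bimodule $e_1L^2(\M)$ of ${}_{\rho_1}L^2(\M)$ carries the cyclic vector $v_1\xi_\f$, for which $(v_1\xi_\f,\rho_1(n)r(m)v_1\xi_\f)=\langle\alpha(n),m\rangle_\f$ by the same commutation. By the uniqueness of the GNS bimodule in Proposition \ref{cpbi}, $e_1L^2(\M)\cong\H_\alpha\cong{}_\rho L^2(\M)$, so the inclusion $e_1L^2(\M)\subseteq{}_{\rho_1}L^2(\M)$ exhibits $\rho$ as a subsector of $\rho_1$. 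I expect the main obstacle to be the two places where an a priori abstract operator must be shown to lie in $\M$: the isometry $v$ (equivalently, that $\eta$ is left bounded with respect to $\xi_\f$) and the intertwiner $u$. Both reduce to the same mechanism, namely that an operator commuting with the right $\M$-action lies in $(\M')'=\M$, but this is available only after the bimodule has been put in the standard form ${}_\rho L^2(\M)$, which is precisely where proper infiniteness of $\N$ and $\M$ enters through Corollary \ref{l2r}.
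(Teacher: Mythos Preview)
Your proof is correct and follows essentially the same route as the paper: build $\H_\alpha$ via Proposition \ref{cpbi}, realise it as ${}_\rho L^2(\M)$ via Corollary \ref{l2r}, produce $v\in\M$ as the isometry intertwining the $\M'$-cyclic vectors $\xi_\f$ and $\eta$, and obtain uniqueness by a GNS-type argument on the dense set $\rho(\N)v\,L^2(\M)$. The only cosmetic difference is in the non-minimal clause, where the paper reuses the isometry $u$ from the uniqueness step (now no longer surjective) rather than invoking Proposition \ref{cpbi} on the sub-bimodule $e_1L^2(\M)$ as you do; both arguments are equivalent.
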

\begin{proof}
We may assume that $\M$ acts standardly on $L^2(\M)$ with cyclic and separating unit vector $\xi\in L^2(\M)$. 
Let $\f$ be the binormal state on $\N\bigotimes \M^o$ determined by
\[
\f(n\bigotimes m^o) = (\xi, \a(n)\xi m)
\]
Let $\H_\a$ the $\N-\M$ bimodule associated with $\a$ by $\f$, with cyclic vector $\eta\in\H_\a$, given by Proposition \ref{cpbi}. By Proposition \ref{l2r} we may identify $\H_\a$ with $_\r L^2(\M)$ with $\r:\N\to\M$ a homomorphism. 
Then eq. \eqref{sp} reads as an equation in $L^2(\M)$:
\[
(\xi, \a(n)\xi m) = (\eta ,\r(n)\eta m)
\]
for some vector $\eta\in L^2(\M)$ cyclic for $\r(\N)\vee\M'$.  

Since $(\xi,\xi m) = (\eta ,\eta m)$,
the map $v: \xi m \mapsto \eta m$, $m\in\M$, is isometric and its closure is an isometry
$v: L^2(\M)\to L^2(\M)$ with final projection the orthogonal projection $p$ onto $\overline{\eta \M}$.
As $v$ commutes with the right action of $\M$ on $L^2(\M)$, we have $v\in\M$ and  
\[
(\xi, v^*\r(n)v\xi m) = (\eta, \r(n)\eta m) = (\xi, \a(n)\xi m),
\]
thus $\a = v^*\r(\cdot)v$ by Proposition \ref{cpbi}. 

Now, the left support of $\r(\N)v$ is the left support of $\r(\N)p$, namely the projection onto the closure of $\r(\N)\eta \M$, which is equal to 1 by the cyclicity of $\eta$; thus the pair $(\r, v)$ is minimal. 

Now we show the uniqueness of the minimal pair $(\r, v)$. So let $(\r_1, v_1)$ be another minimal dilation pair for $\a$. 
We have
\[
(\xi , v^*\r(n)v\xi m) = (\xi , \a(n)\xi m) = (\xi , v^*_1\r_1(n)v_1\xi m),\quad n\in \N, \ m\in \M,
\]
thus, with $\xi' = v\xi$ and $\xi'_1 = v_1\xi$ we have
\[
(\xi' , \r(n)\xi' m) = (\xi_1', \r_1(n)\xi_1' m),\quad n\in \N, \ m\in \M \ .
\]
By the uniqueness of the GNS representation, there exists a partial isometry $u: L^2(\M)\to L^2(\M)$ such that 
\ben\label{uv}
u\r(n)\xi' m = \r_1(n)\xi_1' m \ .
\een
with $u^* u$  the projection onto $\overline{ \r(\N)\xi' \M}$ and 
$uu^* $ the projection $q$ onto $\overline{ \r_1(\N)\xi_1' \M}$. Since $(\r,v)$ is  minimal, $u^*u =1$, namely $u$ is an isometry in $\M$ such that
\[
u\r(n) = \r_1(n)u
\]
and indeed $u$ is unitary since  $(\r_1,v_1)$ is  minimal too.

Now, by eq. \eqref{uv}, we have
\[
uv\xi m = v_1\xi m
\]
thus $uv = v_1$ since $\xi$ is cyclic for the left action of $\M$. 

We show now that $u$ is unique. If $u\in\M$ is a unitary with properties $\eqref{unid}$, we have
\[
u\r(n)v = \r_1(n)u v = \r(n) v_1 \ ,
\] 
thus $u$ is fixed by the minimality assumption. 

Finally, if $\r_1$ were not minimal, the $u$ in \eqref{uv} would be an isometry, namely $\r$ would be a subsector of $\r_1$. 
\end{proof}
\begin{corollary}\label{uniqdil}
Let $\gA$ be a unital C$^*$-algebra and $\Phi: \gA\to \M$  a completely positive, unital map from $\gA$ into a properly infinite von Neumann algebras $\M$.

There exist an isometry $v\in \M$ and a representation $\r$ of $\gA$ on $\H$ with $\r(\gA)\subset\M$ such that 
\[
\Phi(x) = v^*\r(x) v\ , \quad x\in \gA\ .
\]
\end{corollary}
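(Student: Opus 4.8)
The plan is to adapt the proof of Theorem \ref{dilation}, replacing the von Neumann algebra $\N$ there by the $C^*$-algebra $\gA$. The only place where that proof uses that $\N$ is a properly infinite von Neumann algebra is in the module identification furnished by Corollary \ref{l2r}, and I would recover this by an ampliation. First I would put $\M$ in standard form on $L^2(\M)$ with a cyclic and separating unit vector $\xi$, set $\f = (\xi,\,\cdot\,\xi)$, and define the linear functional $\tilde\f$ on the algebraic tensor product $\gA\odot\M^o$ by $\tilde\f(x\otimes m^o) = \langle\Phi(x), m\rangle_\f = (\xi,\Phi(x)\xi m)$. Exactly as in Lemma \ref{pp} and Proposition \ref{cpbi}, $\tilde\f$ is positive: it is the composition $\f_\xi\cdot\pi_0\cdot(\Phi\bigotimes\id)$, where $\pi_0$ is the representation of $\M\odot\M^o$ giving the identity bimodule $L^2(\M)$ and $\Phi\bigotimes\id$ is positive because $\Phi$ is completely positive. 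Since the domain is now only a $C^*$-algebra, I use only positivity of $\Phi\bigotimes\id$ on the algebraic tensor product, not normality.

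Next I would run the GNS construction for $\tilde\f$, obtaining a Hilbert space $\H_\Phi$ with cyclic vector $\eta$, a representation $\ell$ of $\gA$ and an anti-representation $r$ of $\M$ with mutually commuting ranges; note $\|\eta\|^2 = \tilde\f(1\otimes 1) = (\xi,\xi) = 1$ because $\Phi$ is unital. As in Proposition \ref{cpbi}, the anti-representation $r$ is normal: its matrix coefficients on the total set of vectors $\ell(x_1)r(n_1)\eta$ have the form $m\mapsto(\xi,\Phi(x_1^*x_2)\xi\,n_2 m n_1^*)$, which is ultraweakly continuous in $m$ because the right action of $\M$ on $L^2(\M)$ is. Thus $\H_\Phi$ is a normal right $\M$-module carrying a commuting representation $\ell$ of $\gA$; the left action need not be normal, but this is immaterial since we only want $\r$ to be a representation of the $C^*$-algebra $\gA$.

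The crucial point is to realise $\ell(\gA)$ and $\eta$ inside $\M$. Since $\gA$ is merely a $C^*$-algebra, $\ell(\gA)$ need not make the commutant $r(\M)'$ properly infinite, so Corollary \ref{l2r} does not apply directly to $\H_\Phi$. I would therefore ampliate: pass to $\H_\Phi\otimes\ell^2(\NN)$ with right action $r\otimes 1$ and $\gA$-representation $\ell\otimes 1$. Now the commutant $r(\M)'\,\bar\otimes\,B(\ell^2(\NN))$ is properly infinite, and by the uniqueness of the properly infinite normal module over the properly infinite algebra $\M$ (the argument of Corollary \ref{l2r}) I can identify $\H_\Phi\otimes\ell^2(\NN)\simeq L^2(\M)$ as right $\M$-modules, with $r(\M) = J\M J$ and $r(\M)' = \M$ acting on the left. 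Then $\r\equiv\ell\otimes 1$ is a representation of $\gA$ with $\r(\gA)\subset r(\M)' = \M$, as required.

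Finally I would construct the isometry. Setting $\eta_\infty = \eta\otimes\delta_0\in L^2(\M)$, one has $(\eta_\infty,\eta_\infty m) = \tilde\f(1\otimes m^o) = (\xi,\xi m)$ for all $m\in\M$, so $v:\xi m\mapsto\eta_\infty m$ is a well-defined isometry of $L^2(\M)$ that commutes with the right $\M$-action; hence $v\in r(\M)' = \M$, and $v$ is an isometry since $\Phi(1) = 1$. For every $x\in\gA$ and $m\in\M$,
\[
(\xi, v^*\r(x)v\,\xi m) = (\eta_\infty, \r(x)\eta_\infty m) = \tilde\f(x\otimes m^o) = (\xi, \Phi(x)\xi m),
\]
and since $v^*\r(x)v - \Phi(x)\in\M$ while $\xi$ is cyclic and separating for $\M$, this forces $\Phi(x) = v^*\r(x)v$. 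I expect the main obstacle to be the third step: guaranteeing that $\r$ lands in $\M$ and $v\in\M$, which is exactly where the proper infiniteness of $\M$ enters, through the ampliation and the module-uniqueness argument of Corollary \ref{l2r}. The positivity and normality verifications of the first two steps are routine once modelled on Lemma \ref{pp} and Proposition \ref{cpbi}.
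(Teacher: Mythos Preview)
Your argument is correct, but the route differs from the paper's. The paper reduces to Theorem \ref{dilation} rather than reproving it: one fixes a faithful normal state $\psi$ on $\M$, pulls it back to $\f=\psi\cdot\Phi$ on $\gA$, and observes that $\Phi$ factors as $\Phi=\Phi_0\cdot\pi_\f$ through the GNS representation $\pi_\f$ of $\gA$, with $\Phi_0:\N\equiv\pi_\f(\gA)''\to\M$ a \emph{normal} completely positive map. Theorem \ref{dilation} then applies to $\Phi_0$; if $\N$ fails to be properly infinite, one ampliates on the $\N$-side by tensoring with a type $I_\infty$ factor $\F$ and precomposing with a faithful normal expectation $\N\bigotimes\F\to\N$. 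You instead rerun the bimodule construction of Proposition \ref{cpbi} directly for the $C^*$-algebra $\gA$, and perform the ampliation on the Hilbert module $\H_\Phi$ itself to force $r(\M)'$ to be properly infinite before invoking the module-uniqueness argument behind Corollary \ref{l2r}. The paper's approach is shorter and more modular, since it black-boxes Theorem \ref{dilation}; yours is more self-contained and avoids the intermediate passage to $\pi_\f(\gA)''$, at the cost of redoing the isometry construction. One point worth making explicit in your write-up: the identification $\H_\Phi\otimes\ell^2(\NN)\simeq L^2(\M)$ as right $\M$-modules requires $r$ to be faithful, which you use implicitly; this holds because $(\eta,r(m^*m)\eta)=\f(m^*m)$ and $\f$ is faithful.
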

\begin{proof}
Let $\psi$ be a faithful normal state of $\M$ and $\f \equiv \psi\cdot\Phi$ its pullback to a state of $\gA$. 
Then $\Phi$ factors through the GNS representation of $\gA$ given by $\f$:
\[
\xymatrix{
\gA  \ar[d]_{\pi_\f} \ar[rr]^\Phi & & \M  \\
\N  \ar[urr]_{\Phi_0} & }
\]
with $\N\equiv \pi_\f(\gA)''$ and $\Phi_0 : \N\to \M$ a completely positive map. Indeed if $a\in\gA$ we have
\[
\pi_\f(a) = 0 \implies \f(a^* a) = 0 \implies \psi\cdot\Phi(a^* a ) = 0 \implies \Phi(a^* a ) = 0 \implies \Phi(a ) = 0
\]
since $\Phi(a)^*\Phi(a) \leq \Phi(a^* a)$. As $\psi\cdot \Phi_0$ is normal on $\N$, it follows easily that $\Phi_0$ is normal.

We now apply Theorem \ref{dilation} to $\Phi_0$. If $\N$ is properly infinite we get immediately our statement. In general, we may consider $\N\bigotimes \F$, with $\F$ a type $I_\infty$ factor, and a faithful normal conditional expectation $\varepsilon : \N\bigotimes \F \to\N$, apply Theorem \ref{dilation} to $\Phi_0\cdot\e$ and read the formula for $\Phi_0 = \Phi_0\cdot\e |_\N$. 
\end{proof}
The following corollary extends to the infinite-dimensional case the known construction of Kraus operators. 
\begin{corollary}
Let $\a:\F\to\F$ be a completely positive, normal, unital map with
$\F$ a type $I_\infty$ factor. There exists a sequence of elements $T_i\in\F$ with $\sum_i T_i T_i^* = 1$ such that 
\[
\a(x) = \sum_i T_i x T_i^*\, ,\quad x\in\F \ .
\]
\end{corollary}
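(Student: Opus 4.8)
The plan is to combine the dilation theorem with the classical structure of normal unital endomorphisms of a type $I_\infty$ factor. Since $\F$ is a type $I_\infty$ factor it is properly infinite, so Theorem \ref{dilation} applies directly with $\N = \M = \F$: it furnishes a (minimal) dilation pair $(\r, v)$, that is, a normal unital homomorphism $\r : \F \to \F$ and an isometry $v \in \F$ with $\a(x) = v^* \r(x) v$ for all $x \in \F$. Here $\r$ is unital because it arises from the identification $\H_\a \simeq {}_\r L^2(\F)$ of Corollary \ref{l2r}, whose left action is by definition unital; this is consistent with $\a(1) = v^* \r(1) v = v^* v = 1$.

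Next I would put the dilating endomorphism $\r$ in Kraus (Cuntz) form. Writing $\F = B(\K)$, the map $\r$ is a normal unital representation of the type $I$ factor $B(\K)$ on $\K$; by uniqueness of the normal irreducible representation of $B(\K)$ it is a multiple of the identity representation, so there is a separable multiplicity Hilbert space $\L$ and a unitary $U : \K \to \K \otimes \L$ with $U\r(x) U^* = x \otimes 1_\L$. Choosing an orthonormal basis $\{\xi_i\}$ of $\L$ and setting $W_i := U^*(1_\K \otimes |\xi_i\rangle) \in B(\K)$, one checks the Cuntz relations $W_i^* W_j = \delta_{ij} 1$ and $\sum_i W_i W_i^* = 1$, together with $x \otimes 1_\L = \sum_i (1 \otimes |\xi_i\rangle)\, x\, (1 \otimes \langle\xi_i|)$, which yields $\r(x) = \sum_i W_i x W_i^*$ for every $x$. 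Equivalently, the $W_i$ form an orthonormal basis of the intertwiner Hilbert space $(\iota, \r) = \{T \in \F : Tx = \r(x) T \text{ for all } x\}$, whose inner product is given by $\langle S, T\rangle 1 = S^* T$.

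Finally I would set $T_i := v^* W_i \in \F$ and substitute, which gives both the representation
\[
\a(x) = v^*\r(x) v = \sum_i v^* W_i \, x \, W_i^* v = \sum_i T_i \, x \, T_i^* \ ,
\]
and, using the Cuntz relation together with $v^* v = 1$, the normalisation
\[
\sum_i T_i T_i^* = v^*\Big(\sum_i W_i W_i^*\Big) v = v^* v = 1 \ .
\]

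The two displayed computations are routine; the single genuinely structural input is the representation of $\r$ in Cuntz form, i.e. the fact that a normal unital endomorphism of $B(\K)$ is a multiple of the identity with an attached Cuntz family of isometries. This is standard type $I$ representation theory, so the main point to be careful about is not any hard estimate but rather confirming that the dilation produced by Theorem \ref{dilation} is unital, so that $v$ is genuinely an isometry and one obtains $\sum_i T_i T_i^* = 1$ rather than merely $\sum_i T_i T_i^* \le 1$.
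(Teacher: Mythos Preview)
Your proof is correct and follows essentially the same route as the paper: apply Theorem~\ref{dilation} to write $\a = v^*\r(\cdot)v$, express the endomorphism $\r$ of the type $I_\infty$ factor in Cuntz form $\r(x)=\sum_i W_i x W_i^*$, and set $T_i = v^* W_i$. The only difference is that the paper cites \cite{L87} for the Cuntz form of $\r$, whereas you spell out the standard type $I$ argument via the unitary $U:\K\to\K\otimes\L$; your added care about unitality of $\r$ and the verification of $\sum_i T_i T_i^* = 1$ are details the paper leaves implicit.
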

\begin{proof}
Write $\a= v^*\r(\cdot)v$ by Theorem \ref{dilation}, with $(\r,v)$ a minimal dilation pair for $\a$. As shown in  \cite{L87}, every endomorphism of a type $I$ factor is inner, namely there exists a sequence of isometries $v_i\in\F$ 
with $\sum_i v_i v_i^* = 1$ such that 
\[
\r(x) = \sum_i v_i x v^*_i \ ,\quad x\in\F \ .
\]
Thus
\[
\a(x) = v\r(x)v^* = \sum_i v^* v_i x v^*_i v = \sum_i T_i x T_i^*\ , \quad x\in \F\ ,
\]
with $T_i = v^* v_i$.  
\end{proof}
\begin{corollary}\label{dilcor}
Let $\a:\N\to\M$ be a normal, completely positive, unital map, with $\M$ properly infinite. With $\f$ a normal faithful state of $\M$, the bimodule $\H_\a$ associated with $\a$ is unitarily equivalent to $_\r L^2(\M)$, with $\r$ the minimal dilation of $\a$. In particular, $\H_\a$ does not depend on $\f$ up to unitary equivalence. 
\end{corollary}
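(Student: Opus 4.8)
The plan is to read this corollary off the construction already carried out in the proof of Theorem \ref{dilation}, observing that nothing there forced a particular choice of faithful normal state on $\M$. Concretely, I would show that the homomorphism $\r'$ obtained by realizing $\H_\a$ inside $L^2(\M)$ is always a \emph{minimal} dilation of $\a$, so the uniqueness clause of Theorem \ref{dilation} identifies it with $\r$ independently of $\f$.

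First I would invoke Proposition \ref{cpbi} to get the cyclic vector $\xi\in\H_\a$ with $(\xi,\ell_{\H_\a}(n)r_{\H_\a}(m)\xi) = \langle\a(n),m\rangle_\f = (\xi_\f,\a(n)\xi_\f m)$, where $\xi_\f\in L^2(\M)_+$ is the vector representative of $\f$. Since $\M$ is properly infinite, I can realize $\H_\a$ as a right $\M$-module on $L^2(\M)$ (the first part of Corollary \ref{l2r} uses only this), obtaining $\H_\a\simeq {}_{\r'}L^2(\M)$ for some homomorphism $\r':\N\to\M$; under this identification $\xi$ becomes a vector $\eta\in L^2(\M)$, cyclic for $\r'(\N)\vee\M'$, with $(\eta,\r'(n)\eta m) = (\xi_\f,\a(n)\xi_\f m)$.

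Next I would rebuild the isometry exactly as in Theorem \ref{dilation}. Unitality of $\a$ (the case $n=1$) gives $(\eta,\eta m) = (\xi_\f,\xi_\f m)$ for all $m$, hence $\|\eta m\| = \|\xi_\f m\|$, so $v:\xi_\f m\mapsto\eta m$ closes to an isometry commuting with the right action, i.e. $v\in\M$. Then $(\xi_\f, v^*\r'(n)v\,\xi_\f m) = (\eta,\r'(n)\eta m) = (\xi_\f,\a(n)\xi_\f m)$ for all $m$, so $\a = v^*\r'(\cdot)v$ (using that $\xi_\f$ is separating, as in Proposition \ref{cpbi}), while the left support of $\r'(\N)v$ is the projection onto $\overline{\r'(\N)\eta\M}$, which equals $1$ by the cyclicity of $\eta$. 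Thus $(\r',v)$ is a minimal dilation pair for $\a$.

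Finally, the uniqueness part of Theorem \ref{dilation} supplies a unitary $u\in\M$ with $u\r'(\cdot) = \r(\cdot)u$, giving ${}_{\r'}L^2(\M)\simeq{}_\r L^2(\M)$ and hence $\H_\a\simeq{}_\r L^2(\M)$; since $\r$ is determined by $\a$ and $\M$ alone, again by that uniqueness, the dependence on $\f$ disappears, which is the last assertion. I do not expect a genuine obstacle here: the only things to verify are that $v$ is well defined, isometric and lies in $\M$ (unitality of $\a$ together with cyclicity of $\xi_\f$) and that minimality survives, after which uniqueness of the minimal dilation does all the work. If anything is delicate it is keeping track of the two roles of $\xi_\f$ (cyclic versus separating), so that the identity $(\xi_\f, X\xi_\f m) = (\xi_\f, Y\xi_\f m)$ for all $m$ really forces $X=Y$.
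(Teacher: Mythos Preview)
Your proposal is correct and is exactly what the paper does: its proof reads in full ``Immediate by Theorem \ref{dilation} and its proof,'' and you have simply unpacked that reference, observing that the construction in the proof of Theorem \ref{dilation} already realises $\H_\a$ as ${}_{\r'}L^2(\M)$ with $(\r',v)$ a minimal dilation pair, after which the uniqueness clause identifies $\r'$ with $\r$ independently of $\f$. One small caveat: your parenthetical that Corollary \ref{l2r} ``uses only'' proper infiniteness of $\M$ is not quite what its proof says (it gets proper infiniteness of $r(\M)'$ from $r(\M)'\supset\ell(\N)$), so you are tacitly relying on the standing hypotheses of Theorem \ref{dilation} here, just as the paper does.
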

\begin{proof}
Immediate by Theorem \ref{dilation} and its proof.
\end{proof}
Let $\a :\M_1\to\M_2$ and $\b :\M_2\to\M_3$ be completely positive, normal, unital maps of properly infinite von Neumann algebras and $\r$, $\th$ the corresponding minimal dilation homomorphisms. Then $\th\r$ is a dilation homomorphism of $\b \a$: 
write $\a = v^*\r(\cdot)v$ and $\b = w^*\th(\cdot)w$; then $\b \a =w^*\th( v^*)\th\r(\cdot)\th(v)w$. In general, $\th\r$ could fail to be minimal, yet the minimal dilation of $\b\a$ is a contained in $\th\r$ by Theorem \ref{dilation}. 

Before concluding this section, we comment on the transpose of a completely positive map, see \cite{OP}. 

\begin{proposition}\label{transpose}
Let $\N, \M$ be von Neumann algebras and $\a:\N\to\M$ a normal, completely positive, unital, faithful map. 
Fix a faithful normal state $\f$ of $\M$ and consider the state $\psi \equiv \f\cdot \a$ on $\N$. 

There exists a unique normal, completely positive, unital map $\a':\M\to\N$ such that
\begin{equation}\label{a'}
\langle \a(n), m\rangle_\f = \langle \a'(m), n\rangle_\psi \ . 
\end{equation}
$\a'$ is called the transpose of $\a$. 
\end{proposition}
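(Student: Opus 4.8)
The plan is to obtain $\a'$ as a compression of a dilation homomorphism attached to the \emph{conjugate} of the bimodule $\H_\a$, and to read off uniqueness from the faithfulness of $\psi$. First I note that $\psi=\f\cdot\a$ is a faithful normal state of $\N$: it is normal and unital because $\a$ is, and faithful because $\f$ and $\a$ are; hence its vector representative $\xi_\psi\in L^2(\N)_+$ is cyclic and separating. Uniqueness is then immediate. If $a\equiv\a'(m)-\a''(m)\in\N$ for two solutions $\a',\a''$ of \eqref{a'}, then $\langle a,n\rangle_\psi=(\xi_\psi n^*,a\xi_\psi)=0$ for every $n\in\N$; since $\{\xi_\psi n^*:n\in\N\}=\N'\xi_\psi$ is dense, $a\xi_\psi=0$, so $a=0$ because $\xi_\psi$ is separating.

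For existence I would first treat the case in which $\N$ and $\M$ are properly infinite. Let $\H_\a$, with cyclic vector $\xi$, be the $\N-\M$ bimodule furnished by Proposition \ref{cpbi}, so that $(\xi,\ell_{\H_\a}(n)r_{\H_\a}(m)\xi)=\langle\a(n),m\rangle_\f$. Passing to the conjugate $\M-\N$ bimodule $\bar\H_\a$ with its cyclic vector $\bar\xi$, a direct computation from \eqref{conj}, using that $\ell_{\H_\a}(n^*)$ and $r_{\H_\a}(m^*)$ commute and that conjugation reverses the inner product, gives
\[
(\bar\xi,\ell_{\bar\H_\a}(m)\,r_{\bar\H_\a}(n)\,\bar\xi)=(\xi,\ell_{\H_\a}(n)r_{\H_\a}(m)\xi)=\langle\a(n),m\rangle_\f \ .
\]
Putting $m=1$ shows that the state $n\mapsto(\bar\xi,r_{\bar\H_\a}(n)\bar\xi)$ equals $\f\cdot\a=\psi$; that is, the right $\N$-action on $\bar\xi$ carries exactly the state $\psi$, which is precisely the state carried by $\xi_\psi$.

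Next I would dilate and compress. By Corollary \ref{l2r} I identify $\bar\H_\a\cong{}_\r L^2(\N)$ for a homomorphism $\r:\M\to\N$, carrying $\bar\xi$ to a vector $\zeta\in L^2(\N)$, cyclic for $\r(\M)\vee\N'$, whose right $\N$-state is again $\psi$. Since $\zeta$ and $\xi_\psi$ then induce the same vector state on $\N'$, the assignment $\xi_\psi n\mapsto\zeta n$ ($n\in\N$) is a densely defined isometry commuting with the right $\N$-action, hence is given by an isometry $w\in\N$ with $w\xi_\psi=\zeta$. I then set $\a'(m)\equiv w^*\r(m)w\in\N$: this is normal, unital ($\a'(1)=w^*w=1$), and completely positive as a compression of the homomorphism $\r$. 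Finally, using $w\in\N$, $w\xi_\psi=\zeta$ and the identification above,
\[
\langle\a'(m),n\rangle_\psi=(\xi_\psi,w^*\r(m)\,w\,\xi_\psi\, n)=(\zeta,\r(m)\,\zeta\, n)=(\bar\xi,\ell_{\bar\H_\a}(m)r_{\bar\H_\a}(n)\bar\xi)=\langle\a(n),m\rangle_\f\ ,
\]
which is \eqref{a'}.

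It remains to remove the proper infiniteness hypothesis, which I would do by ampliation, exactly as in Corollary \ref{uniqdil}. Fixing a type $I_\infty$ factor $\F=B(\K)$ and a faithful normal state $\o$ of $\F$, the algebras $\N\bigotimes\F$ and $\M\bigotimes\F$ are properly infinite, $\a\bigotimes{\rm id}_\F$ is normal, unital and completely positive by Lemma \ref{pp}, and the relevant states become the product states $\f\bigotimes\o$, $\psi\bigotimes\o$. The previous case provides a transpose $(\a\bigotimes{\rm id}_\F)'$, and I would define $\a'(m)\equiv({\rm id}_\N\bigotimes\o)\big((\a\bigotimes{\rm id}_\F)'(m\bigotimes 1)\big)$. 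Because the vector representative of a product state is the product vector, the bilinear form factorizes, $\langle a\bigotimes a',b\bigotimes b'\rangle_{\f\bigotimes\o}=\langle a,b\rangle_\f\langle a',b'\rangle_\o$, and one checks the slice identity $\langle X,n\bigotimes 1\rangle_{\psi\bigotimes\o}=\langle({\rm id}_\N\bigotimes\o)(X),n\rangle_\psi$ for $X\in\N\bigotimes\F$; combining these with the relation for $(\a\bigotimes{\rm id}_\F)'$ yields \eqref{a'} for $\a'$, while normality, unitality and complete positivity are inherited from $(\a\bigotimes{\rm id}_\F)'$ and the slice map. The main obstacle is the core existence step in the properly infinite case, specifically the verification that the conjugate cyclic vector $\bar\xi$ carries precisely the state $\psi$ under the right $\N$-action: it is this matching of states that produces the isometry $w\in\N$ and realizes $\a'$ as a genuine compression $w^*\r(\cdot)w$, whence complete positivity is automatic. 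The conjugation bookkeeping and the ampliation reduction are routine but must be handled with care.
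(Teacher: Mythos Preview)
Your argument is correct, but it is a genuinely different route from the paper's. The paper gives a one-line construction that never touches bimodules: it embeds $\N$ and $\M$ into their preduals via the states, $\iota_\N(n)=(J_\N n\xi_\psi,\,\cdot\,\xi_\psi)\in\N_*$ and similarly $\iota_\M$, and sets $\a'=\iota_\N^{-1}\cdot\a^{\rm t}\cdot\iota_\M$, where $\a^{\rm t}:\M_*\to\N_*$ is the predual transpose of $\a$. Complete positivity is then read off from the matrix-order complete positivity of $\iota_\N,\iota_\M,\a^{\rm t}$, and no proper-infiniteness reduction or ampliation is needed. By contrast, you pass to the conjugate bimodule $\bar\H_\a$, identify it as ${}_\r L^2(\N)$, and realise $\a'$ explicitly as a compression $w^*\r(\cdot)w$; this is longer and requires the ampliation step, but it buys you something the paper postpones to the next proposition: your construction already exhibits $\bar\H_\a$ as (a bimodule containing) $\H_{\a'}$, so you have essentially proved Proposition~\ref{diltrans} in the course of proving Proposition~\ref{transpose}. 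The paper's approach is cleaner for this statement in isolation; yours makes the bimodule picture of the transpose manifest from the outset and yields a Stinespring form for $\a'$.
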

\begin{proof}
Denote by $\iota_\N : \N\to \N_*$ the embedding given by $\psi$, namely $\iota_\N(n) = (J_\N n\xi_\psi , \,\cdot\,  \xi_\psi)$ on $L^2(\N)$. Then $\iota_\N$ is completely positive (with the natural matrix order structure on preduals) and
the range of $\iota_\N$ consists of the elements of $\N_*$ that are dominated by $\psi$. 
Similarly for the embedding $\iota_\M : \M\to \M_*$ given by $\f$.  Then $
\a'_\f = \iota^{-1}_\N\cdot\a^{\rm t}\cdot\iota_\M $, with $\a^{\rm t}:\M_*\to\N_*$ the pre-transposed map of $\a$ as linear operator.  
\end{proof}
Note that $\a'$ depends on the state $\f$, while the conjugate of a homomorphism is unique up to equivalence.  Yet, we have the following. 
\begin{proposition}\label{diltrans}
With $\a$ and $\a'$ as above, then $\H_{\a'}$ is unitarily equivalent to $\bar\H_\a$. 

In particular, if $\N,\M$ are properly infinite,  $\a$ and  $\a'$ have conjugate minimal dilation homomorphisms. 
\end{proposition}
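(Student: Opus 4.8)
The plan is to identify $\bar\H_\a$ with $\H_{\a'}$ by checking that $\bar\H_\a$ carries exactly the GNS-type data that, by the uniqueness clause of Proposition \ref{cpbi}, characterises $\H_{\a'}$. Recall that $\H_{\a'}$ is the $\M-\N$ bimodule with cyclic vector determined, via \eqref{sp}, by the bilinear form $(m,n)\mapsto \langle \a'(m), n\rangle_\psi$. So it suffices to exhibit a cyclic vector $\eta\in\bar\H_\a$ with $(\eta, \ell_{\bar\H_\a}(m) r_{\bar\H_\a}(n)\eta) = \langle \a'(m), n\rangle_\psi$ for all $n\in\N$, $m\in\M$. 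The natural candidate is $\eta = \bar\xi$, the conjugate of the cyclic vector $\xi$ of $\H_\a$.

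First I would compute the left-hand side using the definition \eqref{conj} of the conjugate actions. Since $r_{\bar\H_\a}(n)\bar\xi = \overline{\ell_{\H_\a}(n^*)\xi}$ and $\ell_{\bar\H_\a}(m)\bar\zeta = \overline{r_{\H_\a}(m^*)\zeta}$, and since the left and right actions on $\H_\a$ commute, one gets $\ell_{\bar\H_\a}(m) r_{\bar\H_\a}(n)\bar\xi = \overline{\ell_{\H_\a}(n^*) r_{\H_\a}(m^*)\xi}$. Passing to the conjugate inner product and using \eqref{sp} yields $(\bar\xi, \ell_{\bar\H_\a}(m) r_{\bar\H_\a}(n)\bar\xi) = \overline{\langle \a(n^*), m^*\rangle_\f}$. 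The one elementary lemma I need is the symmetry $\overline{\langle m_1, m_2\rangle_\f} = \langle m_1^*, m_2^*\rangle_\f$, which follows by writing $\langle m_1, m_2\rangle_\f = (\xi_\f, m_1 J m_2^* J\xi_\f)$, conjugating, and moving the operators across $J$. Combined with the fact that $\a$ is $^*$-preserving (so $\a(n^*)^* = \a(n)$), this gives $\overline{\langle \a(n^*), m^*\rangle_\f} = \langle \a(n), m\rangle_\f$, and then \eqref{a'} converts the right-hand side into $\langle \a'(m), n\rangle_\psi$, exactly as wanted. Cyclicity of $\bar\xi$ is immediate: the vectors $\overline{\ell_{\H_\a}(n^*) r_{\H_\a}(m^*)\xi}$ span a dense subspace because $\xi$ is cyclic and conjugation sends dense sets to dense sets. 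By the uniqueness in Proposition \ref{cpbi}, $\bar\H_\a \cong \H_{\a'}$ as $\M-\N$ bimodules.

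For the second assertion, assume $\N,\M$ properly infinite. By Corollary \ref{dilcor}, $\H_\a \cong {}_\r L^2(\M)$ with $\r$ the minimal dilation of $\a$, and $\H_{\a'}\cong {}_\s L^2(\N)$ with $\s$ the minimal dilation of $\a'$. The conjugate bimodule $\bar\H_\a \cong \overline{{}_\r L^2(\M)}$ is unitarily equivalent to ${}_{\bar\r}L^2(\N)$, with $\bar\r$ the conjugate homomorphism, by the definition of $\bar\r$ recalled before Proposition \ref{conjendo}. Comparing with the first part gives ${}_\s L^2(\N)\cong {}_{\bar\r}L^2(\N)$, hence $\s = \Ad(u)\cdot\bar\r$ for some unitary $u\in\N$; thus the minimal dilations $\s$ of $\a'$ and $\r$ of $\a$ are conjugate. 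The only point needing care is that all identifications are performed at the level of the canonical representatives, so that the uniqueness in Proposition \ref{cpbi} applies verbatim; I expect the main (mild) obstacle to be the bookkeeping of the conjugation \eqref{conj}, in particular keeping track of the adjoints and of the order reversal between the left and right actions.
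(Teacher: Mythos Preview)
Your proof is correct and follows essentially the same line as the paper's. The paper observes that, by \eqref{a'}, the GNS states $\tilde\f$ on $\N\odot\M^o$ and $\tilde\psi$ on $\M\odot\N^o$ satisfy $\tilde\f(n\otimes m^o)=\tilde\psi(m\otimes n^o)$, and then appeals directly to the definition \eqref{conj} of the conjugate bimodule; your argument unpacks precisely this last step by verifying the hypothesis of the uniqueness clause of Proposition \ref{cpbi} for the vector $\bar\xi$ (your symmetry lemma $\overline{\langle m_1,m_2\rangle_\f}=\langle m_1^*,m_2^*\rangle_\f$ is just the Hermiticity of the positive functional $\tilde\f$).
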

\begin{proof}
By eq. \eqref{a'} the natural state $\tilde\f$ and $\tilde\psi$ on $\N\odot\M^o$ and $\M\odot\N^o$ as in eq. \eqref{tilde} are related by $\tilde\f(n\bigotimes m^o) = \tilde\psi(m\bigotimes n^o)$. The statement follows by eq. \eqref{conj}. 
\end{proof}
\section{Quantum channels}
\label{QC}
Let $\N, \M$ be von Neumann algebras with finite dimensional centers $Z(\N), Z(\M)$. 

The {\it index of a completely positive, normal, unital map $\a:\N\to\M$} is now defined by
\[
\Ind(\a) \equiv \Ind(\H_\a)\ ,
\]
with $\H_\a$ the $\N-\M$ bimodule associated with $\a$. 

Therefore, if $\N$ and $\M$ are properly infinite,
\[
\Ind(\a) = \Ind(\r) \ ,
\]
where $\r:\N\to\M$ is the minimal dilation homomorphism of $\a$ and $\Ind(\r) = [\M : \r(\N)]$. 

By a {\it quantum channel} $\a:\N\to\M$ we shall mean a completely positive, normal, unital map $\a$ with finite index. 
Let $\a:\N\to\M$ be a quantum channel and $\fin$ a faithful normal state of $\M$. We choose a faithful normal state $\fin$ of $\M$, the {\it input} state for $\a$. The {\it output} state on $\N$ is then defined by 
\[
\fout = \fin\cdot \a\ .
\]
With $\H_\a$ the $\N-\M$ bimodule associated with $\a$ by $\fin$, with cyclic vector $\xi\in\H_\a$ as in Proposition \ref{cpbi},  we have
\[
(\xi, r(m)\xi ) = \fin(m)\ ,\quad (\xi, \ell(n)\xi) = \fout(n) \ ,
\]
namely
\[
\fin = \f_\xi\cdot r\ , \quad \fout = \f_\xi\cdot\ell \ ,
\]
where $\f_\xi$ is the vector state associated with $\xi$ and $\ell,r$ the left and right actions of $\N,\M$ on $\H_\a$. 

Clearly the transpose map $\a'$ w.r.t. $\fin$ interchanges $\fin$ with $\fout$:
\[
\fin = \fout\cdot \a'\ .
\] 
Let $\e : r(\M)'\to \ell(\N)$ be the minimal expectation. Then 
\[
\Phi = \ell^{-1}\cdot \e : r(\M)'\to \N
\]
is a quantum channel from $r(\M)'\simeq \M$ to $\N$ called the {\it left inverse} of $\a$. Notice that $\Phi\cdot\ell$ is the identity on $\N$. 

We define the {\it modular operator} $\Delta_{\a,\fin}$ of $\a$ with respect to the initial state $\fin$ as the modular 
operator of the associated $\N-\M$ bimodule $\H_\a$ with respect to the initial and the final state:
\begin{equation}\label{delta}
\Delta_{\a,\fin} = \Delta_{\H_\a}(\fout |\fin) 
= d\fout\cdot\ell^{-1}\cdot\e \big/ d\fin\cdot r^{-1}\ ,
\end{equation}
thus $
\Delta_{\a,\fin} = d( \f_\xi \cdot\e) \big/ d(\f_\xi \big\vert_{r(\M)})$. Then $\Delta_{\a,\fin}$ is a positive, non-singular selfadjoint operator on $\H_\a$ and we have
\begin{equation}\label{impl}
\Delta_{\a,\fin}^{it}\ell(n)\Delta_{\a,\fin}^{-it} = \ell\big(\s^{\rm out}_t(n)\big)\ , \quad \Delta_{\a,\fin}^{it}r(m)\Delta_{\a,\fin}^{-it} 
= r\big(\s_{t}^{\rm in}(m)\big) \ ,
\end{equation}
where $\s^{\rm in/out}$ is the modular group of $\M$/$\N$ w.r.t. $\f^{\rm in/out}$. 

The {\it entropy of a quantum channel} $\a$ in the initial state $\fin$ is now defined as 
\ben\label{entroa}
S_{\a , \fin} \equiv -(\xi ,\log\Delta_{\a,\fin} \xi)\ .
\een
Assume that $\N$, $\M$ are properly infinite. Let's write the above formulas in terms of the minimal dilation pair $(\r, v)$ for $\a$.

By Corollory \ref{dilcor}  $\H_\a = {}_\r L^2(\M)$. So $\ell = \r$, the $\r$-twisted left action of $\N$ on $L^2(\M)$, and $r$ is the  standard right action of $\M$ on $L^2(\M)$.   We choose a faithful normal state $\fin = \o$ on $\M$; then $\fout = \o\cdot\a = \o(v^*\r(\cdot)v)$ 
is the output state on $\N$. 

With $\e : \M \to \r(\N)$ the minimal expectation, $\Phi = \r^{-1}\cdot\e : \M\to\N$ is the left inverse of $\r$. We have 
$\fout\cdot \Phi = \o\cdot\a \cdot\Phi = \o(v^*\e(\cdot)v)$, so
\[
\Delta_{\a,\fin} = \Delta_{{}_\r L^2(\M)}(\fout |\fin) =
d\fout \Phi\big/d\o = d(\o(v^*\e(\cdot)v))\big/ d\o = \Delta_{\hat\xi,\xi} \ ,
\]
where $\xi,\hat\xi$ are the vector representatives in $L^2(\M)_+$ of the states $\o$ and $\o(v^*\e(\cdot)v)$ on $\M$ and
$\Delta_{\hat\xi,\xi}$ is the relative modular operator associated with $\hat\xi, \xi$. Thus 
\[
S_{\a , \fin} = -(\xi , \log\Delta_{\hat\xi,\xi}\xi)
\] 
is indeed Araki's relative entropy with respect to the states $\o$ and $\o(v^*\e(\cdot)v)$ of $\M$.  In particular, we have obatained the {\it positivity of the entropy}:
\begin{proposition}
$S_{\a , \fin} \geq 0$. 
\end{proposition}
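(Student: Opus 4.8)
The plan is to read off the inequality from the identification, carried out in the displayed computation immediately preceding the statement, of $S_{\a,\fin}$ with an Araki relative entropy, and then to invoke the standard positivity of relative entropy for \emph{normal states}. The only genuine point to check is that the second functional appearing there is actually normalised, so that the state-positivity result applies.

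First I would reduce to the case in which $\N$ and $\M$ are properly infinite; this is harmless, since ampliating $\H_\a$ by the identity bimodule $L^2(\F)$ of a type $I_\infty$ factor $\F$ and ampliating the two states by a fixed normal state of $\F$ leaves the spatial derivative, hence the scalar $(\xi,\log\Delta_{\a,\fin}\xi)$, unchanged (relative entropy is additive under tensoring and vanishes on a pair of equal states). In the properly infinite case, writing $\H_\a = {}_\r L^2(\M)$ with $(\r,v)$ the minimal dilation pair and $\e:\M\to\r(\N)$ the minimal expectation, the computation above gives
\[
S_{\a,\fin} = -(\xi,\log\Delta_{\hat\xi,\xi}\xi) = S(\o\,|\,\tilde\o)\ ,
\]
the Araki relative entropy of $\o = \fin$ and $\tilde\o \equiv \o(v^*\e(\cdot)v)$, where $\xi,\hat\xi$ are the vector representatives in $L^2(\M)_+$ of $\o$ and $\tilde\o$.

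It then remains to verify that $\tilde\o$ is a state. It is positive, being the composition $\o\circ(v^*\e(\cdot)v)$ of the positive maps $\e$, $m\mapsto v^*mv$ and $\o$; and it is normalised, since $\tilde\o(1)=\o(v^*\e(1)v)=\o(v^*v)=\o(1)=1$, using that the minimal expectation is unital and $v$ is an isometry. Hence $\o$ and $\tilde\o$ are both normal states of $\M$, and the positivity of Araki's relative entropy (see \cite{OP,BR}) yields $S(\o\,|\,\tilde\o)\ge 0$, i.e. $S_{\a,\fin}\ge 0$. I expect the normalisation of $\tilde\o$ to be the only delicate step: were $\tilde\o$ merely positive without being a state, the relative entropy would acquire an additional term and positivity could fail; the unitality of the minimal expectation together with $v^*v=1$ are precisely what exclude this.
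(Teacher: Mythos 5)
Your proof is correct and takes essentially the same route as the paper: in the properly infinite case one identifies $S_{\a,\fin}$, via the minimal dilation pair $(\r,v)$, with Araki's relative entropy between the states $\o=\fin$ and $\o(v^*\e(\cdot)v)$ of $\M$, and the general case is reduced to this one by ampliation with a type $I_\infty$ factor. Your explicit verification that $\o(v^*\e(\cdot)v)$ is normalised (unitality of $\e$ plus $v^*v=1$) is a detail the paper leaves implicit, and it is indeed the point that makes the positivity of relative entropy applicable; the reversed ordering of the two states in your $S(\o\,|\,\tilde\o)$ versus the paper's convention is immaterial here, since positivity holds for states in either order.
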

\begin{proof}
It remains to consider the case $\N$ or $\M$ are not properly infinite. Yet, with $\F$ a type $I_\infty$ factor, we have $S_{\a , \fin} = S_{\tilde\a , \tilde\f_{\rm in}} \geq 0$, with $\tilde\a = \a\bigotimes {\rm id}$ the ampliation by tensoring with $\F$ and $\tilde\f_{\rm in} = \fin \bigotimes \f'$ and $\f'$ any normal faithful state of $\F$. 
\end{proof}
\subsection{Lower bound for the free energy}
The modular group $t\mapsto \s_t^\o$ of a von Neumann algebra $\M$ is an intrinsic dynamics associated with a faithful normal state $\o$ of $\M$. However, the  evolution parameter $t$ is canonical up to a scaling in the sense it is uniquely determined when we fix the inverse temperature $\b = 1/kT$, namely $t\mapsto \tau_t = \s_{-{\b^{-1}} t}^\o$ is the unique one parameter automorphism group of $\M$ that satisfies the $\b$-KMS condition with respect to $\o$ as stated in \eqref{KMS}. 

The modular unitary group $t\mapsto \Delta_\o^{it}$ is a canonical implementation of $\s^\o$ on $L^2(\M)$ (with respect to the vector representative of $\o$ in $L^2(\M)_+$):
\[
\Delta_\o^{it} m \Delta_\o^{-it} =\s^\o_t(m)\ ,\quad m\in\M \ .
\]
Analogously, the {\it modular group of a quantum channel} $\a : \N\to\M$ is an {\it intrinsic dynamics} associated with $\a$ and an initial (faithful, normal) state $\fin$ of $\M$; and the choice of the inverse temperature $\b$ uniquely fixes the scaling parameter. 

The modular unitary group $\Delta_{\a,\fin}^{it} \equiv \Delta^{it}_{\H_\a}(\fout |\fin)$ of $\a$ is a canonical implementation of  the modular automorphism group $\s_t^{\rm out}\odot \s_{t}^{\rm in}$ of $\N\odot\M$ on the $\N - \M$ bimodule $\H_\a$, 
namely 
\begin{align}\label{impl2}
\Delta_{\a,\fin}^{it} \ell(n)\Delta_{\a,\fin}^{-it}  &= \ell\big(\s^{\rm out}_t(n)\big)\ , \\ 
\Delta_{\a,\fin}^{it} r(m)\Delta_{\a,\fin}^{-it}  &= r\big(\s_{t}^{\rm in}(m)\big) \ ,
\end{align}
where $\fout = \fin\cdot\a$. Here we are always assuming that the centers of $\N$ and $\M$ are finite dimensional.

As seen, there exists another canonical implementation of $\s_t^{\rm out}\odot \s_{t}^{\rm in}$ on $\H_\a$, that 
naturally behaves under the tensor categorical structure provided by finite index bimodules. By the discussion in Section \ref{CB},
we thus define
\[
U^\a_t(\fout|\fin) \equiv U^{\H_\a}_t(\fout|\fin) \ .
\] 
$U^\a$ is symmetric with respect to the transpose map by Proposition \ref{diltrans}
\[ 
U_t^{\a'}(\fin |\fout) =  \overline{U^\a_{t}(\fout |\fin)}\ ,
\]
and is given by
\[
U_t^\a(\fout |\fin) = \Delta_{\a,\fin}^{it}D(\a)^{it} \ ,
\]
where $D(\a)$ is a central operator associated with the matrix dimension of $\H_\a$. 
We call  $U^\a$ the {\it physical unitary evolution} associated with $\a$. 

More generally, when we consider the $\b$-rescaling of the modular parameter, we get the {\it physical unitary evolution} $t\mapsto U_\a(-\b^{-1} t)$ at inverse temperature $\b$. 
Its self-adjoint generator on $\H_\a$ is the {\it physical Hamiltonian} $H_{\a , \fin}$ associated with the quantum channel $\a$ and the state $\fin$ at inverse temperature $\b$.  

By the above discussion, we have
\begin{equation}\label{Ha}
\b H_{\a,\fin} = -\log \Delta_{\a,\fin} - \log D(\a) \ .
\end{equation}
The {\it mean energy} $E_{\a,\fin}$ of $\a$ is now defined by
\[
E_{\a , \fin} \equiv (\xi, H_{\a , \fin} \xi) \ ,
\]
with $\xi$ the cyclic vector in $\H_\a$ associated with $\fin$.

Then the {\it (incremental) free energy} $F_{\a, \fin}$ is defined by the thermodynamical relation
\begin{equation}\label{thermorel}
F_{\a , \fin} = E_{\a , \fin} - \b^{-1}S_{\a , \fin} \ ,
\end{equation}
where the {\it entropy} $S_{\a , \fin}$ of $\a$ relative to the initial state $\fin$  is defined in \eqref{entroa} as a relative entropy.

Note that
\[
F_{\a , \fin} = F_{\a' , \fout}
\]
and this relation, together with natural relations, fixes the choice of the physical Hamiltonian by Proposition \ref{unique}.
\begin{theorem}\label{freeen}
We have:
\begin{equation}\label{free}
F_{\a , \fin} = -\b^{-1}(\xi, \log D(\a)\,\xi) \ .
\end{equation}
\end{theorem}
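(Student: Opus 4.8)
The plan is to derive \eqref{free} by a direct substitution of the three defining formulas \eqref{Ha}, \eqref{entroa} and \eqref{thermorel}, the conceptual point being that the modular part of the physical Hamiltonian contributes exactly the relative entropy, which then cancels against the $-\b^{-1}S_{\a,\fin}$ term in the thermodynamical relation, leaving only the dimension term.

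First I would expand the mean energy. Pairing the formula \eqref{Ha} for the physical Hamiltonian against the cyclic vector $\xi\in\H_\a$ associated with $\fin$ and using linearity of the inner product gives
\[
\b E_{\a,\fin} = (\xi, \b H_{\a,\fin}\,\xi) = -(\xi, \log\Delta_{\a,\fin}\,\xi) - (\xi, \log D(\a)\,\xi)\ .
\]
By the definition \eqref{entroa} of the entropy, the first term on the right is precisely $S_{\a,\fin}$, so that
\[
\b E_{\a,\fin} = S_{\a,\fin} - (\xi, \log D(\a)\,\xi)\ .
\]
Then I would insert this into the thermodynamical relation \eqref{thermorel}: multiplying $F_{\a,\fin} = E_{\a,\fin} - \b^{-1}S_{\a,\fin}$ by $\b$ and substituting the previous identity yields
\[
\b F_{\a,\fin} = \b E_{\a,\fin} - S_{\a,\fin} = \big(S_{\a,\fin} - (\xi,\log D(\a)\,\xi)\big) - S_{\a,\fin} = -(\xi,\log D(\a)\,\xi)\ ,
\]
which is \eqref{free} after dividing by $\b$.

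The computation is short, so the substance lies not in the algebra but in making sure the ingredients are genuinely compatible. The main point to verify is that the physical Hamiltonian in \eqref{Ha}, the entropy in \eqref{entroa}, and the free energy in \eqref{thermorel} are all taken on the \emph{same} bimodule $\H_\a$ against the \emph{same} cyclic vector $\xi$ representing $\fin$, and that $\Delta_{\a,\fin}$ and $D(\a)$ (hence their logarithms) commute so that the decomposition $\b H_{\a,\fin} = -\log\Delta_{\a,\fin} - \log D(\a)$ is unambiguous as a self-adjoint generator. The only nontrivial input hidden in the cancellation is that $-(\xi,\log\Delta_{\a,\fin}\,\xi)$ really is Araki's relative entropy $S_{\a,\fin}$, which was established earlier through the identification $\Delta_{\a,\fin} = \Delta_{\hat\xi,\xi}$ via Connes' spatial derivative and Kosaki's formula; granting this, the theorem follows immediately.
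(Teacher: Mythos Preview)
Your proof is correct and follows essentially the same route as the paper: evaluate \eqref{Ha} on the vector state given by $\xi$, recognise the modular term as $S_{\a,\fin}$ via \eqref{entroa}, and substitute into the thermodynamical relation \eqref{thermorel} so that the entropy cancels. The extra remarks you make about compatibility and the commutativity of $\Delta_{\a,\fin}$ with $D(\a)$ are fine sanity checks but are not needed beyond what the paper already assumes.
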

\begin{proof}
By evaluating both sides of equation \eqref{Ha} on the vector state associated with $\xi$ we have
\[
\b (\xi, H_\a \xi) = -(\xi, \log \Delta_{\a,\fin} \xi) - (\xi, \log D(\a)\xi)\ ,
\]
namely
\[
E_{\a , \fin} - \b^{-1} S_{\a , \fin} =  - \b^{-1}(\xi, \log D(\a)\xi) \ ,
\]
thus eq. \eqref{free} follows by the thermodynamical relation \eqref{thermorel}. 
\end{proof}
\begin{corollary}
If $\N$ and $\M$ are factors,
the free energy is expressed by the {\it relative partition formula}:
\ben\label{free2}
F_{\a , \fin} \equiv -\b^{-1}\log(\xi, e^{-\b H_{\a , \fin}}\xi) \ .
\een
\end{corollary}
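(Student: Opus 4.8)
The plan is to reconcile the two expressions for $F_{\a,\fin}$: the one furnished by Theorem \ref{freeen} and the relative partition formula \eqref{free2}. The crucial simplification in the factorial case is that the central operator $D(\a)$ associated with the matrix dimension of $\H_\a$ collapses to the \emph{scalar} dimension $d_{\H_\a}=\sqrt{\Ind(\a)}$, since a factorial bimodule carries a scalar, rather than genuinely matrix, dimension. Consequently $\log D(\a)$ is a scalar multiple of the identity and, in particular, commutes with $\Delta_{\a,\fin}$.

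First I would exponentiate the defining relation \eqref{Ha}. Because $\log D(\a)$ is central (indeed scalar), the two summands in $-\b H_{\a,\fin}=\log\Delta_{\a,\fin}+\log D(\a)$ commute, so
\[
e^{-\b H_{\a,\fin}} = D(\a)\,\Delta_{\a,\fin}\ .
\]
Pairing with the cyclic unit vector $\xi$ then gives
\[
(\xi, e^{-\b H_{\a,\fin}}\xi) = D(\a)\,(\xi, \Delta_{\a,\fin}\xi)\ ,
\]
so everything reduces to computing the single scalar $(\xi,\Delta_{\a,\fin}\xi)$.

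The key step, and the one I expect to carry the real content, is to show $(\xi,\Delta_{\a,\fin}\xi)=1$. Here I would use the identification, obtained before the statement via Corollary \ref{dilcor} and the minimal dilation pair $(\r,v)$, of $\Delta_{\a,\fin}$ with the relative modular operator $\Delta_{\hat\xi,\xi}$, where $\xi$ represents the input state $\o=\fin$ and $\hat\xi$ represents the state $m\mapsto\o(v^*\e(m)v)=\fout\cdot\Phi(m)$ on $\M$. Since the minimal expectation $\e$ and the left inverse $\Phi$ are unital and $v$ is an isometry, this functional is again a state, so $\hat\xi$ is a unit vector. The defining property of the relative modular operator, namely $\Delta_{\hat\xi,\xi}=S_{\hat\xi,\xi}^*S_{\hat\xi,\xi}$ with $S_{\hat\xi,\xi}(m\xi)=m^*\hat\xi$, then yields $(\xi,\Delta_{\hat\xi,\xi}\xi)=\|S_{\hat\xi,\xi}\xi\|^2=\|\hat\xi\|^2=1$. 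If $\N$ or $\M$ is not properly infinite, I would first pass to the properly infinite ampliation by a type $I_\infty$ factor, exactly as in the positivity proof, the relevant quantities being left unchanged.

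Combining these, $(\xi,e^{-\b H_{\a,\fin}}\xi)=D(\a)=d_{\H_\a}$, whence
\[
-\b^{-1}\log(\xi, e^{-\b H_{\a,\fin}}\xi)=-\b^{-1}\log D(\a)=-\b^{-1}(\xi,\log D(\a)\,\xi)=F_{\a,\fin}\ ,
\]
the last equality being Theorem \ref{freeen} together with the fact that $\log D(\a)$ is scalar, so that its expectation in the unit vector $\xi$ equals itself. This establishes \eqref{free2}.
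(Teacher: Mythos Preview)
Your proposal is correct and follows essentially the same route as the paper: in the factorial case $D(\a)=d_{\H_\a}$ is scalar, so $e^{-\b H_{\a,\fin}}=D(\a)\,\Delta_{\a,\fin}$, and the computation $(\xi,\Delta_{\a,\fin}\xi)=\|S\xi\|^2=\|\hat\xi\|^2=1$ via the relative $S$-operator is exactly what the paper does (the paper writes it out as $(\Delta^{1/2}\xi,\Delta^{1/2}\xi)=(J\Delta^{1/2}\xi,J\Delta^{1/2}\xi)=(S\xi,S\xi)=(\eta,\eta)$, with $\eta$ your $\hat\xi$). Your remark on passing to the properly infinite ampliation when $\N$ or $\M$ is a non--properly-infinite factor is a reasonable precaution that the paper leaves implicit.
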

\begin{proof}
We may identify $\H_\a$ with $_\r L^2(\M)$, where $\r$ is the minimal dilation homomorphism of $\a$. Then $\Delta_{\a , \fin} = \Delta_{\xi, \eta}$, where $\Delta_{\xi, \eta}$ is the relative modular operator w.r.t. the vectors  $\xi$ and $\eta$   in $L^2(\M)_+$
giving the states $\f$ and $\f\cdot \Phi$, with $\Phi$ the  left inverse of  $\r$. 

As $\a$ is factorial, $D(\a) = d_{\H_\a}$ is a scalar, so we have
\begin{multline}
(\xi, e^{-\b H_{\a , \fin}}\xi) = D(\a)(\xi, \Delta_{\a , \fin}\xi) =
 D(\a)(\Delta^{1/2}_{\xi, \eta}\xi, \Delta^{1/2}_{\xi, \eta}\xi) \\
= D(\a)(J\Delta^{1/2}_{\xi, \eta}\xi, J\Delta^{1/2}_{\xi, \eta}\xi) = 
D(\a)(S_{\xi, \eta}\xi, S_{\xi, \eta}\xi) = D(\a)(\eta , \eta) = D(\a)
\end{multline}
with $J$ the modular conjugation on $L^2(\M)$.
The corollary then follows by Theorem \ref{freeen}. 
\end{proof}
Since $D(\a)\geq 1$, as a consequence of Theorem \ref{freeen} we have the {\it negativity of the incremental free energy}
\[
- F_{\a , \fin} \geq 0\ .
\]
Notice that, if $D(\a)$ is a scalar, we have
\[
F_{\a , \fin} = - \b^{-1}\log d(\a)\ ,
\]
which is independent of $\fin$; this holds, in particular, in the factorial case.

In general, we define the {\it free energy of a quantum channel} $\a$ as
\[
F_\a \equiv \inf F_{\a , \fin} \ ,
\]
infimum over all (faithful, normal) initial states $\fin$. 
\footnote{By eq. \eqref{Ko}, choosing the right modular operator $\Delta_{\a,\fin}'$, rather than the left modular operator $\Delta_{\a,\fin}$, would give $\b H_{\a,\fin} = -\log \Delta'_{\a,\fin} + \log D(\a)$ (eq.  \eqref{Ha}), so this would amount in defining the free energy as $-F_\a$, rather than $F_\a$. With this convention, we would thus have the {\it positivity of the free energy}. Our choice of $\Delta_{\a,\fin} $ here is consistent with the one in \cite{L97}. 

If one considered a Hamiltonian with non trivial chemical potential (see comments at the end of Section \ref{CB}), by Proposition \ref{diltrans} and last eq. \eqref{z} one should define the free energy as $\frac12 (F_{\a'} + F_\a)$, see \cite{L01}. }

By Jones' theorem \cite{Jo}, the dimension (i.e. the square root of the index) $d$ of a subfactor is quantised: if $d<\infty$ then
\[
d = 2\cos(\pi/n) \ , \quad n=3,4,\dots\  {\rm or}\ d\geq 2 \ ;
\] 
in particular
\[
d \neq 1 \implies d\geq \sqrt{2} \ ,
\]
so we have the following corollary.
\begin{corollary}
If $F_{\a}\neq 0$, then
\ben\label{ineq}
-F_\a \geq{\textstyle \frac12} kT\log 2\ .
\een
\end{corollary}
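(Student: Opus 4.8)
The plan is to read off \eqref{ineq} from the free-energy formula $F_{\a,\fin}=-\b^{-1}(\xi,\log D(\a)\,\xi)$ of Theorem~\ref{freeen}, using $\b^{-1}=kT$ and $\log D(\a)\geq 0$ (since $D(\a)\geq 1$). I would first dispose of the factorial case: there $D(\a)=d(\a)$ is a scalar, so $F_\a=F_{\a,\fin}=-\b^{-1}\log d(\a)$ is independent of $\fin$. If $F_\a\neq 0$ then $d(\a)\neq 1$, whence $d(\a)\geq\sqrt2$ by the Jones quantisation recalled just before the corollary \cite{Jo}; therefore $-F_\a=\b^{-1}\log d(\a)\geq\b^{-1}\log\sqrt2=\tfrac12\,kT\log2$, which is exactly \eqref{ineq}.

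For the general case I would write $-F_\a=\sup_{\fin}\,\b^{-1}(\xi,\log D(\a)\,\xi)$ and analyse the canonical central decomposition $\H_\a=\bigoplus_{i,j}\H_{ij}$. With $P_{ij}=\ell(p_i)r(q_j)$ the block projections and $w_{ij}=(\xi,P_{ij}\xi)$, one has $\log D(\a)=\sum_{i,j}(\log d_{ij})P_{ij}$, so that $(\xi,\log D(\a)\,\xi)=\sum_{i,j}w_{ij}\log d_{ij}$, a convex combination of the $\log d_{ij}$ since $\sum_{i,j}w_{ij}=1$. The task is thus to drive the weight onto a block of maximal dimension. By the ampliation reduction before Lemma~\ref{lemcov} I may assume $\M$ properly infinite, and by Corollary~\ref{dilcor} identify $\H_\a={}_\r L^2(\M)$ with $\xi=\xi_\o$ the vector representative of $\fin=\o$ in $L^2(\M)_+$. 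Since $\ell=\r$ and $q_j\in Z(\M)$ is central, a direct computation gives $w_{ij}=\o\big(\r(p_i)q_j\big)$, i.e.\ the $\o$-mass of the projection $\r(p_i)q_j\in\M$.

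Now assume $F_\a\neq0$. Then $\log D(\a)\neq 0$, so some block has $d_{i_0 j_0}>1$, and such a block is nonzero, whence $\r(p_{i_0})q_{j_0}\neq0$. As $\M$ is properly infinite I would choose faithful normal states $\o$ with $\o\big(\r(p_{i_0})q_{j_0}\big)\to1$ --- a state essentially supported on the range of this projection, corrected by an arbitrarily small faithful perturbation --- so that $w_{i_0 j_0}\to1$ and $(\xi,\log D(\a)\,\xi)\to\log d_{i_0 j_0}$. Hence $-F_\a\geq\b^{-1}\log d_{i_0 j_0}$, and since $d_{i_0 j_0}>1$ forces $d_{i_0 j_0}\geq\sqrt2$ by Jones' theorem, we again obtain $-F_\a\geq\tfrac12\,kT\log2$. (This argument in fact yields $-F_\a=\b^{-1}\log\max_{i,j}d_{ij}$.)

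The step I expect to be the crux is this concentration in the non-factorial case: one must guarantee that, for the chosen block $(i_0,j_0)$, there are \emph{faithful} normal input states whose GNS vectors localise there in the limit. This is exactly where the dilation picture $\H_\a={}_\r L^2(\M)$ and the explicit formula $w_{ij}=\o\big(\r(p_i)q_j\big)$ do the work: they reduce the localisation to choosing a state concentrated on a fixed nonzero projection of a properly infinite von Neumann algebra, with faithfulness restored by a negligible perturbation that leaves the limit unchanged.
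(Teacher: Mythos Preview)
Your proof is correct and in the factorial case coincides with the paper's. In the non-factorial case you take a genuinely different, more direct route. The paper reduces in two stages: it first cuts $\H_\a$ along the minimal central projections $q_j$ of $Z(\M)$, obtaining channels $\a_{q_j}:\N\to\M_{q_j}$ and the convex decomposition $F_{\a,\f}=\sum_j\f(q_j)\,F_{\a_{q_j},\psi_{q_j}}$; it then handles the ``$\N$ a factor'' case (where each $\a_{q_j}$ is factorial), invokes the transpose channel $\a'$ to cover the ``$\M$ a factor'' case, and finally combines the two. You instead go straight to the full canonical central decomposition $\H_\a=\bigoplus_{i,j}\H_{ij}$, write $(\xi,\log D(\a)\,\xi)$ as the convex combination $\sum_{i,j}w_{ij}\log d_{ij}$, and use the dilation identification $\H_\a={}_\r L^2(\M)$ to turn the weights into state values $w_{ij}=\o(\r(p_i)q_j)$, which can be concentrated on any chosen nonzero block. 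Your approach bypasses the transpose argument entirely and yields the sharper statement $-F_\a=\b^{-1}\log\max_{i,j}d_{ij}$ as a byproduct; the paper's approach has the virtue of relating $F_\a$ to the free energies of the reduced channels $\a_{q_j}$, which is of independent interest.
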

\begin{proof}
If both $\N$ and $\M$ are factors, $D(\a) = d_{\H_\a}$ is a scalar, so $F_\a =F_{\a,\fin} = -\b^{-1} \log d_{\H_\a}$
by Theorem \ref{freeen} and \eqref{ineq} is immediate by Jones' theorem.

In general, we choose a faithful normal state $\f = \fin$ of $\M$ and let $\H_\a$ be the $\N=\M$ bimodule associated to $\a$ by $\f$, with cyclic vector $\xi$ (Prop. \ref{cpbi}), namely
\[
( \xi  ,  \ell_{\H_\a}(n) r_{\H_{\a}}(m)\xi)  = (\xi_\f , \a(n)\xi_\f m) \ ,
\]
where $\xi_\f$ is the vector representative of $\f$ in $L^2(\M)_+$. 

Let then $q$ be  projection of  $Z(\M)$ and $\a_q :\N\to \M_q$ given by $\a_q(n) = \a(n)q$.  We want to show that the $\N-\M_q$ bimodule $\H_{\a_q}$ associated with $\a_q$ by $\f_q \equiv \f|_{\M_q}$ is $r_{\a}(q) \H_\a$ with the reduced left and right actions $\ell_{\H_{\a_p}}(n) = \ell_{\H_\a}(n)|_{r_{\H_\a}(q) \H_\a}$ and $r_{\H_{\a_p}}(mq) =
r_{\H_\a}(mq)|_{r_{\H_\a}(q) \H_\a}$, and cyclic vector $\xi_q = r_{\H_{\a_p}}(q) \xi$.
We have indeed
 \[
( \xi_q  ,  \ell_{\H_\a}(n) r_{\H_{\a_q}}(mq)\xi_q) = ( \xi  ,  \ell_{\H_\a}(n) r_{\H_{\a_q}}(mq)\xi) = (\xi_\f , \a(n)\xi_\f qm) = (\xi_\f , \a_q(n)\xi_\f^q m) \ ,
 \]
 where $\xi^q_\f \equiv \xi_\f q$ is the cyclic vector in $L^2(\M q)_+ = qL^2(\M)_+ $ associated with $\f_q$. 
 
With $\{q_j\}$ the minimal central projections of $\M$, we thus have
 \begin{multline}
 F_{\a , \f} = -\b^{-1}(\xi, \log D(\a)\,\xi) =  \sum_j -\b^{-1}(\xi_{q_j}, \log D(\a)\,\xi_{q_j}) \\
 =   \sum_j -\b^{-1}(\xi_{q_j}, \log D(\a_{q_j})\,\xi_{q_j}) = \f(q_j)\sum_j F_{\a_{q_j} , \psi_{q_j} }
 \end{multline}
 with $\psi_{q_j} \equiv \f_{q_j}/|| \f_{q_j} ||$ \!. 

  If $ F_{\a , \f}\neq 0$, there exists a $j'$ such that $F_{\a_{q }, \psi_{q }}\neq 0$ with $q\equiv q_{j'}$ and, in particular,
  \ben\label{fineq}
 F_{\a , \f} \geq \f(q) 
F_{\a_{q} , \psi_{q}} \ .
  \een
Assume now $\N$ to be a factor.
Then,  $F_{\a_{q}, \psi_{q}}\geq  {\textstyle \frac12} kT\log 2$ since $\a_{q}$ is factorial. By choosing $\f$ with $\f(q)> 1 -\epsilon$, we infer from \eqref{fineq} that $F_{\a} \geq {\textstyle  \frac12} kT\log 2$. Thus the corollary holds if $\N$ is a factor. 

On the other hand, the inequality \eqref{ineq} also holds if $\M$ is a factor and $\N$ arbitrary: it is sufficient to consider the transpose channel $\a'$.

The general case now follows again by the inequality \eqref{fineq}, by taking a normal faithful state $\f$ on $\M$ such that 
$F_{\a_{q} , \psi_{q}} \geq  (1- \epsilon) F_{\a_{q}} $ and $\f(q) \geq (1 - \epsilon)$. Such $\f$ can be constructed as follows: first choose a state $\psi$ on $\M_{q} $ such that 
$F_{\a_{q} , \psi_{q}} >  (1- \epsilon) F_{\a_{q}} $ and a normal faithful state $\o$ on $\M_{1-q}$; then define the state $\f$ by  $\f(m) = (1-\epsilon)\psi(qmq) + \l\o((1-q)m(1-q))$, with $   \l  = -(1 - \epsilon)/(1-\o(q)) $. 
\end{proof}
The above corollary gives a {\it general lower bound} for the positive free energy $-F_\a$. It is half of the Landauer bound derived in \cite{La}. 
\medskip

If $\a$ is a quantum channel between finite dimensional quantum systems $\N,\M$ or, more generally, if $\N,\M$ are discrete von Neumann algebras, as the dimension an inclusion of type $I$ factors is a positive integer, the above proof gives
\begin{equation}\label{Lb}
-F_\a \geq kT\log 2\ ,
\end{equation}
which is indeed the lower bound derived by Landauer in a finite dimensional context.
\medskip

Quantum Field Theory provides examples for the above setting.  A DHR charge $\a$ is an endomorphism of the quasi-local $C^*$-algebra that can be localised in any bounded spacetime region so, in particular, in a wedge region $W$, giving rise to an endomorphism of the type $III$ factor $\A(W)$ associated with $W$. The statistical dimension of $\a$ is equal to the square root of the Jones index \cite{L89}. A description of this setting from the point of view of  information entropy has been given il \cite{L97} and plays a motivation role here. Now, if the spacetime dimension is greater than 2, $\a$ has Fermi-Bose statistics and its dimension is
a positive integer, or infinite \cite{DHR}. So, also in this case, the lower bound \eqref{Lb} holds, if $\a$ is irreversible, namely if $\a$ is not a simple charge.  

\section{Appendix. Proof of Theorem \ref{th1}}
\label{appendix}

Let $\M$ be a von Neumann algebra, $L^2(\M)$ the identity $\M - \M$ bimodule as above and $\o, \f , \psi$ faithful normal states of $\M$. Set $\Delta(\f |\o) \equiv \Delta_{L^2(\M)}(\f |\o)$ and similarly for other states. 
Then the unitaries $\Delta^{it}(\f |\o)$ and $\Delta^{it}(\o |\psi)$ on $L^2(\M)$ are intertwiners
\begin{align*}
 &\Delta^{it}(\f |\o) : L^2(\M) \to {}_{\s^\f_t} L_{\s^\o_t}^2(\M)\ , \\
  &\Delta^{it}(\o |\psi) : L^2(\M) \to {}_{\s^\o_t} L_{\s^\psi_t}^2(\M) \ ,
\end{align*}
and we may consider the bimodule relative tensor product with respect to $\o$
\[
\Delta^{it}(\f |\o)\otimes \Delta^{it}(\o |\psi) : L^2(\M)\otimes L^2(\M) \to 
{}_{\s^\f_t} L_{\s^\o_t}^2(\M)\otimes {}_{\s^\o_t} L_{\s^\psi_t}^2(\M) \ .
\]
Now, 
\[
{}_{\s^\f_t} L_{\s^\o_t}^2(\M)\otimes {}_{\s^\o_t} L_{\s^\psi_t}^2(\M)
= {}_{\s^\o_{-t}\s^\f_t} L^2(\M)\otimes  L_{\s^\psi_t \s^\o_{-t}}^2(\M)
= {}_{\s^\o_{-t}\s^\f_t}  L_{\s^\psi_t \s^\o_{-t}}^2(\M)
={}_{\s^\f_t} L_{\s^\psi_t }^2(\M)
\]
with natural identifications (again, bimodule relative tensor products w.r.t. $\o$).

Thus $\Delta^{it}(\f |\o)\otimes \Delta^{it}(\o |\psi) : L^2(\M) \to 
{}_{\s^\f_t} L_{\s^\psi_t }^2(\M)$ and we first show the following.
\begin{lemma}\label{LemmaDD}
We have
\ben\label{DD}
\Delta^{it}(\f |\o)\otimes \Delta^{it}(\o |\psi)  = \Delta^{it}(\f |\psi) \ .
\een
\end{lemma}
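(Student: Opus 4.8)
The plan is to collapse the identity \eqref{DD} onto the Connes cocycle chain rule, once the relative tensor product has been unwound. First I would record that, for the identity bimodule, $\Delta^{it}(\f|\o)$ is simply Araki's relative modular unitary $\Delta_{\f,\o}^{it}$ on $L^2(\M)$ (the spatial derivative in \eqref{modHam} reduces to the relative modular operator, as the minimal expectation is the identity here). Hence I have the cocycle factorization $\Delta^{it}(\f|\o)=(D\f\!:\!D\o)_t\,\Delta_\o^{it}$ with $(D\f\!:\!D\o)_t\in\M$, and likewise for $\Delta^{it}(\o|\psi)$ and $\Delta^{it}(\f|\psi)$. The whole point will be that the two occurrences of $\o$ entering the relative tensor product are exactly the states making the middle $\s^\o$-twists cancel, precisely as in the identification chain displayed just before the lemma.

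The next step is to turn the two intertwiners into honest elements of $\M$ and $\M'$, since the tensor-product-of-intertwiners formula $(T\otimes S)\xi=T\xi S$ requires $T\in\M$, $S\in\M'$. Using that $\Delta_\o^{it}$ is the standard implementation of $\s^\o_t$ (it commutes with $J$ and implements $\s^\o_t$ on $\M$), the maps $V=W=\Delta_\o^{it}$ give bimodule equivalences ${}_{\s^\o_{-t}\s^\f_t}L^2(\M)\to{}_{\s^\f_t}L^2_{\s^\o_t}(\M)$ and $L^2_{\s^\o_{-t}\s^\psi_t}(\M)\to{}_{\s^\o_t}L^2_{\s^\psi_t}(\M)$, untwisting the target modules to standard left/right-twisted form. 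Then $a:=V^{-1}\Delta^{it}(\f|\o)=\s^\o_{-t}\big((D\f\!:\!D\o)_t\big)\in\M$ and $S:=W^{-1}\Delta^{it}(\o|\psi)\in\M'$, so that $\Delta^{it}(\f|\o)=\Delta_\o^{it}a$ and $\Delta^{it}(\o|\psi)=\Delta_\o^{it}S$.

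Now I would assemble everything by functoriality of the intertwiner tensor product: $\Delta^{it}(\f|\o)\otimes\Delta^{it}(\o|\psi)=(V\otimes W)(a\otimes S)$. The formula $(a\otimes S)\xi=a\xi S$ yields $a\otimes S=aS$ as an operator on $L^2(\M)$, while a short computation in the bounded-vector model of $L^2(\M)\otimes_\o L^2(\M)\cong L^2(\M)$, using $\Delta_\o^{it}\xi_\o=\xi_\o$, shows that the two residual implementations fuse: $V\otimes W=\Delta_\o^{it}\otimes_\o\Delta_\o^{it}=\Delta_\o^{it}$. Therefore $\Delta^{it}(\f|\o)\otimes\Delta^{it}(\o|\psi)=\Delta_\o^{it}\,aS=(D\f\!:\!D\o)_t\,\Delta^{it}(\o|\psi)=(D\f\!:\!D\o)_t(D\o\!:\!D\psi)_t\,\Delta_\psi^{it}$, and the Connes cocycle chain rule $(D\f\!:\!D\o)_t(D\o\!:\!D\psi)_t=(D\f\!:\!D\psi)_t$ collapses this to $(D\f\!:\!D\psi)_t\,\Delta_\psi^{it}=\Delta^{it}(\f|\psi)$, as desired.

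The final algebra is immediate; the delicate part, and the step I expect to be the main obstacle, is the relative tensor product bookkeeping. Concretely, one must check that the untwisting by $\Delta_\o^{it}$ really lands $a$ in $\M$ and $S$ in $\M'$, and that the two leftover copies of $\Delta_\o^{it}$ fuse to a single one over $\o$. This fusion is exactly the content of the identification chain preceding the lemma, and it is the one place where the choice of $\o$ as the tensoring state is essential; once it is in hand, the cocycle chain rule does the rest.
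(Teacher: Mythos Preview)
Your strategy is essentially the paper's: factor each relative modular unitary via the Connes cocycle, reduce everything to the single fusion identity $\Delta^{it}_\o\otimes\Delta^{it}_\o=\Delta^{it}_\o$, and then let the cocycle algebra finish the job. The paper does the last step in two stages (first $\psi=\o$, then general $\psi$) rather than invoking the chain rule $(D\f\!:\!D\o)_t(D\o\!:\!D\psi)_t=(D\f\!:\!D\psi)_t$ directly, but that is the same content.

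The one substantive difference is how the fusion $\Delta^{it}_\o\otimes\Delta^{it}_\o=\Delta^{it}_\o$ is established. You propose a direct bounded-vector computation using $\Delta^{it}_\o\xi_\o=\xi_\o$; this is plausible, but note that the paper's explicit formula $(T\otimes S)\xi=T\xi S$ for the intertwiner tensor product is only stated for $T\in\M$, $S\in\M'$, and $\Delta^{it}_\o$ lies in neither. Making your computation airtight would require invoking the more general Sauvageot/Falcone description of the relative tensor product of module maps, which you gesture at but do not spell out. The paper sidesteps this entirely: it observes that $\Delta^{it}_\o\otimes\Delta^{it}_\o$ and $\Delta^{it}_\o$ are both intertwiners $L^2(\M)\to{}_{\s^\o_t}L^2_{\s^\o_t}(\M)$, so their ratio is a one-parameter unitary group $z(t)$ in $Z(\M)$; then the conjugation symmetry $\overline{\Delta^{it}_\o}=\Delta^{it}_\o$ together with $\overline{T_1\otimes T_2}=\bar T_1\otimes\bar T_2$ forces $z(t)=z(-t)$, hence $z\equiv 1$. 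This argument is slicker and stays within the intertwiner formalism already set up; your direct approach would work but carries exactly the bookkeeping burden you flag as ``the main obstacle.''
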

\begin{proof}
Let's first consider the case $\f = \psi = \o$, namely we want to show that
\begin{equation}\label{DOD}
\Delta^{it}_\o\otimes \Delta^{it}_\o  = \Delta^{it}_\o  \ ,
\end{equation}
where $\Delta_\o \equiv \Delta(\o |\o)$ is the modular operator of $\o$.  By the intertwining property, 
\[
\Delta^{it}_\o\otimes \Delta^{it}_\o  = \Delta^{it}_\o z(t) \ ,
\]
with $z$ a one parameter unitary group in the center $\cal Z$ of $\M$ (the left and right actions of $\cal Z$ on $L^2(\M)$ coincide).

On the other hand, by considering the conjugate intertwiners, we have  $\overline{ \Delta^{it}_\o} = \Delta^{it}_\o$ so
\[
 \Delta^{it}_\o\otimes \Delta^{it}_\o = \overline{ \Delta^{it}_\o}\otimes\overline{ \Delta^{it}_\o} 
 = \overline{ \Delta^{it}_\o\otimes \Delta^{it}_\o} = \overline{\Delta^{it}_\o z(t)} = \overline{\Delta^{it}_\o}\,\overline{z(t)} = \Delta^{it}_\o z(-t)\ ,
\]
hence $z(t) = 1$ and \eqref{DOD} holds. 

Note now that, if $u$ is a unitary of $\M$ acting on the left on $L^2(\M)$, then $u$ is an intertwiner 
$L^2(\M) \to {}_{{\rm Ad}u}L^2(\M)$ and we have
\[
u\otimes 1 = u 
\]
and similarly $1\otimes u = u$ if $u$ acts on the right.

We prove now \eqref{DD} with $\psi = \o$. Recall that we have
\[
\Delta^{it}(\f |\o) = u_t \Delta^{it}_\o
\]
with $u_t \equiv (D\f : D\o)_t$ the Connes Radon-Nikodym cocycle in $\M$ w.r.t. $\f$ and $\o$ (see \cite{St}). Therefore
\[
\Delta^{it}(\f |\o)\otimes \Delta^{it}(\o |\o) = 
(u_t \Delta^{it}_\o) \otimes \Delta^{it}_\o =
(u_t\otimes 1) (\Delta^{it}_\o \otimes \Delta^{it}_\o ) =
u_t \Delta^{it}_\o = \Delta^{it}(\f |\o) \ ,
\]
and the special case $\psi = \o$ is proven. The general case of formula \eqref{DD} is then proven by reapplying this argument on the right to $\Delta^{it}(\o |\psi)$.
\end{proof}
{\bf Proof of Theorem \ref{th1}}. 
$(a)$: 
By definition, we have 
\begin{align*}
\Delta_\H (\f_1 | \f_2) &= d(\f_1\cdot \ell_\H^{-1}\cdot\mu) \big/d\big( \f_2\cdot r_\H^{-1}) \ ,
\\
\Delta'_\K (\f_2 | \f_3) &= d(\f_2\cdot \ell^{-1}_\K) \big/ d (\f_3\cdot r^{-1}_\K\cdot \nu) \ ,
\end{align*}
where $\mu : r_\H(\M_2)' \to \ell_\H(\M_1)$, $\nu : \ell_\K(\M_2)' \to r_\K(\M_3)$ minimal expectations. 

We may assume $\H = {}_\r L^2(\M_2)$, $\K = L^2_\th(\M_2)$ so $ r_\H(\M_2)' = \M_2$, $\ell_\K(\M_2)' = \M'_2$ on $L^2(\M_2)$ and we have
\begin{align*}
\Delta_\H (\f_1 | \f_2) = \Delta(\f_1\cdot \ell_\H^{-1}\cdot\mu|  \f_2\cdot r_\H^{-1}) \ ,
\\
\Delta'_\K (\f_2 | \f_3) = \Delta(\f_2\cdot \ell_\K^{-1}|  \f_3\cdot r_\K^{-1}\cdot\nu) \ ,
\end{align*}
with $\Delta \equiv \Delta_{L^2(\M_2)}$ as before. 

By  Kosaki's formula \eqref{Ko}, we have
\[
\Delta'_\K (\f_2 | \f_3) =\Ind(\K)\,\Delta_\K (\f_2 | \f_3) \ ,
\]
therefore, by Lemma \ref{LemmaDD}, we get
\begin{align*}
\Delta_\H (\f_1 | \f_2)\otimes \Delta_\K (\f_2 | \f_3) &=
\Ind(\K)^{-1}\, \Delta(\f_1\cdot \ell_\H^{-1}\cdot\mu | \f_2\cdot r_\H^{-1}) \otimes \Delta(\f_2\cdot \ell^{-1}_\K | \f_3\cdot r^{-1}_\K \cdot \nu)\\
&=\Ind(\K)^{-1}\, \Delta(\f_1\cdot \ell_\H^{-1}\cdot\mu |  \f_3\cdot r^{-1}_\K \cdot \nu) \\ &=
\Delta(\f_1\cdot \ell_\H^{-1}\cdot\mu\cdot \nu' |  \f_3\cdot r^{-1}_\K ) \\ &=
\Delta_{\H\otimes \K} (\f_1 | \f_3)
\end{align*}
(where $\nu'$ is the expectation dual to $\nu$ on $L^2(\M_2)$) 
because $\Ind(\nu) = \Ind(\K)$, $\ell_\K$ and $r_\H$ are here the standard left and right actions on $L^2(\M_2)$ and $\mu\cdot \nu'$ is the minimal expectation $r_{\H\otimes\K}(\M_3)'\to  \ell_{\H\otimes\K}(\M_1)$ (multiplicativity of the minimal index \cite{KL, L92}). 

$(b)$: 
With $\e : r_\H(\M_2)' \to \ell_\H(\M_1)$ the minimal expectation,
we have by \eqref{Ko}:
\begin{multline}
\overline{\Delta_\H(\f_1 | \f_2)} = \overline{d (\f_1\cdot \ell_\H\cdot \e) \big/ d (\f_2\cdot r_\H)}
= d (\f_1\cdot r_{\bar\H}\cdot \e) \big/ d (\f_2\cdot \ell_{\bar \H}) \\
= \big( d (\f_2\cdot \ell_{\bar \H}) \big/ d (\f_1\cdot r_{\bar\H}\cdot\e)\big)^{-1}
= \Delta'_{\bar \H}(\f_2 | \f_1)^{-1} 
= \Ind(\H)^{-1}\cdot \Delta_{\bar \H}(\f_2 | \f_1)^{-1} \ ,
\end{multline}
which is equivalent to $(b)$. 

$(c)$: If $T$ is unitary, then 
\begin{multline*}
T\Delta_\H(\f_1 | \f_2)T^* = Td (\f_1\cdot \ell_\H\cdot \e) \big/ d (\f_2\cdot r_\H)T^* \\
= d (\f_1\cdot \ell_\H\cdot \e\cdot \Ad T) \big/ d (\f_2\cdot r_\H\cdot \Ad T) 
= d (\f_1\cdot \ell_\H\cdot\Ad T\cdot \e' ) \big/ d (\f_2\cdot r_\H\cdot \Ad T) \\
= d (\f_1\cdot \ell_{\H'}\cdot \e') \big/ d (\f_2\cdot r_{\H'}) 
= \Delta_{\H'}(\f_1 | \f_2)
\end{multline*}
as $\Ad T$ transforms the minimal expectation $\e: r_\H(\M_2)'\to \ell_\H(\M_1)$ to the minimal expectation $\e': r_{\H'}(\M_2)'\to \ell_{\H'}(\M_1)$, so $(c)$ holds. 

In general, consider the polar decomposition $T = vh$. Then $h$ is an intertwiner $\H\to\H$, namely $h\in \ell_\H(\M_1)'\cap r_\H(\M_2)'$, so $(c)$ holds for $h$ by \eqref{frc}. 

Now, $v$ gives a unitary intertwiner $p\H \to q\H' $ with $p=v^*v$ and $q = vv^*$, so we are left to check $(c)$ in the case $T$ is a projection  $T=e \in\ell_\H(\M_1)'\cap r_\H(\M_2)'$, namely that
\ben\label{res}
\Delta_\H(\f_1 | \f_2)|_{e\H} = \frac{d_{e\H}}{d_\H}\, \Delta_{e\H}(\f_1 | \f_2) \ ,
\een
that readily follows from \cite[Prop. 4.3]{K86}. 
\endproof

\section{Final comments}\label{FC}
We end up with a few comments.

The vacuum geometric modular action in QFT is related with the Hawking-Unruh effect, see \cite{Sew}  for the Schwarzschild black hole case. The evolution parameter of the modular group is proportional to the proper time of the geodesic observer. Our work here gives, in particular,  a further viewpoint concerning this evolution in a charged state  \cite{L97, L01}. 

One may read our present paper also in relation with the intrinsic, modular interpretation of time proposed by Connes and Rovelli in \cite{CR}. 

One would also understand how our results fit with the recent discussion of
Landauer's principle within the $C^*$-algebraic context for quantum statistical models describing a finite level quantum system coupled to an infinitely extended thermal reservoir \cite{JP}.  Under  natural assumptions, Landauer's bound saturates there. 

As mentioned, Jones' index is related to entropy \cite{PP,L97}; indeed it appears in quantum information contexts, see \cite{N13,FNO}.

One may wonder about possible relations with other forms of entropy in Quantum Field Theory. We refer to \cite{OT,HS} for recent Operator Algebraic analyses on entanglement entropy in QFT.   

The very recent paper \cite{LXre} provides a rigorous computation of the mutual information in the setting of free fermions on the circle.

\section{Outlook}
Our work is going to be naturally supplemented by two forthcoming papers. 

The paper \cite{GL} concerns the mathematical methods underlying our analysis, pointing to a clarification about the notion of dimension for bimodules over von Neumann algebras with non trivial, finite dimensional centers (for related analysis in this context, see \cite{GHJ,FI} and refs. therein). In particular, it will discuss the functoriality properties of the matrix dimension. The study of the non trivial center case is motivated in order to deal with a general quantum systems with a non trivial classical component too. Part of the analysis extends to the infinite dimensional center case, that might be discussed in the future. 

The paper \cite{LX} is going to discuss the Bekenstein bound in the context of black hole information theory. The arguments and  results in \cite{L97}, together with a due interpretation, naturally give us a rigorous derivation of this bound. Our formula for the incremental free energy can be read in this framework. A quite similar discussion can be found in a more recent paper \cite{Ca}, which is however heuristic (as local von Neumann algebras are of type $III$).
\bigskip

\noindent
{\bf Acknowledgements.} 
The author would like to thank the Isaac Newton Institute for Mathematical Sciences in Cambridge, and the Simons Foundation, for the hospitality during the program ``Operator algebras: subfactors and their applications'', supported by EPSRC Grant Number EP/K032208/1, in January-February and May-June 2017, where part of this work was carried on and presented at the June workshop.
Thanks also to Luca Giorgetti and the referee for several careful comments. 

\end{document}